\title{Categorical Semantics for Model Comparison Games for Description Logics}
\author{Mateusz Urbańczyk}{ Institute of Computer Science, University of Wrocław, Poland }{ mateusz.urbanczyk97@gmail.com}{}{}
\authorrunning{M. Urbańczyk}
\keywords{
    comonads, category theory, bisimulations, expressive power,
    games, categorical semantics, coalgebraic semantics
}
\definecolor{darkmidnightblue}{rgb}{0.0, 0.2, 0.4}
\definecolor{persianplum}{rgb}{0.44, 0.11, 0.11}
\definecolor{ao(english)}{rgb}{0.0, 0.5, 0.0}
\definecolor{brickred}{rgb}{0.8, 0.25, 0.33}
\tikzset{
diagonal fill/.style 2 args={fill=#2, path picture={
\fill[#1, sharp corners] (path picture bounding box.south west) -|
                         (path picture bounding box.north east) -- cycle;}},
reversed diagonal fill/.style 2 args={fill=#2, path picture={
\fill[#1, sharp corners] (path picture bounding box.north west) |- 
                         (path picture bounding box.south east) -- cycle;}}
}
\newtheorem{fact}[theorem]{Fact}
\newrobustcmd{\newnotion}[1]{\AP\intro{#1}}
\newcommand{\Self}{\mathsf{Self}}
\newcommand{\DL}[1]{\ensuremath{\mathcal{#1}}}  
\newcommand{\ALC}{\DL{ALC}}
\newcommand{\ALCO}{\DL{ALCO}}
\newcommand{\ALCQ}{\DL{ALCQ}}
\newcommand{\DLPhi}{\DL{L}\Phi}
\newcommand{\genericExt}{\theta}
\newcommand{\extb}{\textit{b}}
\newcommand{\extO}{\DL{O}}
\newcommand{\extI}{\DL{I}}
\newcommand{\extSelf}{\DL{\Self}}
\newcommand{\extList}{\extSelf, \extI, \extb, \extO}
\newcommand{\extSet}{\{ \extList \}}
\newcommand{\extTuple}{( \extList )}
\newcommand{\ALCOIbSelf}{\ALC_{\Self}\extI\extb\extO}
\newcommand{\ALCOIbSelfStdOrder}{\ALC\extO\extI\extb_{\Self}}
\newcommand{\ALCOSelfIb}{\ALC_{\Self}\extI\extb\extO}
\newcommand{\ALCOIb}{\ALC\extI\extb\extO}
\newcommand{\ALCOb}{\ALC\extO\extb}
\newcommand{\ALCSelf}{\ALC_{\Self}}
\newcommand{\role}[1]{\mathit{#1}}      
\newcommand{\rolep}{\role{p}}           
\newcommand{\roler}{\role{r}}           
\newcommand{\rolerI}{\role{r}^{-}}      
\newcommand{\rolerIInv}{\role{r}^{-\interI}}      
\newcommand{\rolerJInv}{\role{r}^{-\interJ}}      
\newcommand{\roles}{\role{s}}           
\newcommand{\concept}[1]{\mathrm{#1}}       
\newcommand{\conceptA}{\concept{A}}         
\newcommand{\conceptB}{\concept{B}}         
\newcommand{\conceptC}{\concept{C}}         
\newcommand{\conceptD}{\concept{D}}         
\newcommand{\indv}[1]{\mathrm{#1}}  
\newcommand{\indvo}{\indv{o}}       
\newcommand{\inter}[1]{\mathcal{#1}}    
\newcommand{\interI}{\inter{I}}         
\newcommand{\interJ}{\inter{J}}         
\newcommand{\DeltaInter}[1]{\Delta^{#1}}                
\newcommand{\DeltaI}{\DeltaInter{\interI}}              
\newcommand{\DeltaJ}{\DeltaInter{\interJ}}              
\newcommand{\cdotInter}[1]{\cdot^{#1}}          
\newcommand{\cdotI}{\cdotInter{\interI}}        
\newcommand{\el}[1]{\mathrm{#1}}                           
\newcommand{\elD}{\el{d}}                             
\newcommand{\elE}{\el{e}}                             
\newcommand{\lang}[1]{\mathbf{#1}}  
\newcommand{\Ilang}{\lang{N_I}}     
\newcommand{\Rlang}{\lang{N_R}}     
\newcommand{\Clang}{\lang{N_C}}     
\newcommand{\cat}[1]{\mathbb{#1}}
\newcommand{\categoryC}{\cat{C}}
\newcommand{\categoryD}{\cat{D}}
\newcommand{\object}[1]{#1}
\newcommand{\objectA}{\object{A}}
\newcommand{\objectB}{\object{B}}
\newcommand{\objectC}{\object{C}}
\newcommand{\arrow}[1]{#1}
\newcommand{\arrowf}{\arrow{f}}
\newcommand{\arrowg}{\arrow{g}}
\newcommand{\arrowid}[1]{1_{#1}}
\newcommand{\comp}{\circ}
\newcommand{\functor}[1]{#1}
\newcommand{\functorG}{\functor{G}}
\newcommand{\sigmaO}{\sigma^\extO}
\newcommand{\sigmaI}{\sigma^\extI}
\newcommand{\sigmaSelf}{\sigma^\extSelf}
\newcommand{\sigmaTriple}{\sigma_i, \sigma_c, \sigma_r}
\newcommand{\unaryRel}[2]{#2 \in \conceptC^{#1}}
\newcommand{\struct}[1]{\inter{#1}}
\newcommand{\structA}{\struct{I}}
\newcommand{\structB}{\struct{J}}
\newcommand{\structPairBare}[2]{\struct{#1}, #2}
\newcommand{\structPair}[2]{(\structPairBare{#1}{#2})}
\newcommand{\structPairBareA}{\structPairBare{I}{d}}
\newcommand{\structPairBareB}{\structPairBare{J}{e}}
\newcommand{\structPairA}{(\structPairBareA)}
\newcommand{\structPairB}{(\structPairBareB)}
\newcommand{\ordStructA}{(\structA, d, \leq)}
\newcommand{\transPairA}[1]{(\trans{#1}(\interI), \elD)}
\newcommand{\transPairB}[1]{(\trans{#1}(\interJ), \elE)}
\newcommand{\bisimRelSym}[1]{\mathcal{Z}_{#1}}
\newcommand{\bisimRel}[1]{\bisimRelSym{#1}(d, e)}
\newcommand{\bisimRelOf}[3]{\bisimRelSym{#1}(#2, #3)}
\newcommand{\bisimRelAB}[1]{\bisimRelSym{#1}(a, b)}
\newcommand{\comonad}[1]{\functor{#1}}
\newcommand{\comonadG}{\comonad{G}}
\newcommand{\counit}{\varepsilon}
\newcommand{\counitOf}[1]{\counit_{#1}}
\newcommand{\counitA}{\counitOf{\structA}}
\newcommand{\counitB}{\counitOf{\structB}}
\newcommand{\coextensionof}[1]{ #1^{*} }
\newcommand{\comonadFunctorBare}{\mathbb{DL}}
\newcommand{\comonadFunctorOver}[1]{\comonadFunctorBare_k^{#1}}
\newcommand{\comonadFunctor}{\comonadFunctorOver{}}
\newcommand{\comonadFunctorPhi}{\comonadFunctorOver{\Phi}}
\newcommand{\comonadStructOver}[3]{\comonadFunctorOver{#1}\structPair{#2}{#3}}
\newcommand{\comonadStructA}{\comonadStructOver{}{I}{d}}
\newcommand{\comonadStructB}{\comonadStructOver{}{J}{e}}
\newcommand{\comonadStructProdOver}[1]{\comonadStructOver{#1}{I}{d} \times \comonadStructOver{#1}{J}{e}}
\newcommand{\comonadStructProdPhi}{\comonadStructProdOver{\Phi}}
\newcommand{\vocabV}{\mathcal{V}}
\newcommand{\vocabMap}{\delta}
\newcommand{\Nomega}{\mathbb{N} \cup \{ \omega \}}
\newcommand{\pointedCatOf}[1]{\mathcal{R}_* (#1)}
\newcommand{\pointedCat}{\pointedCatOf{\vocabV}}
\newcommand{\pointedCatALC}{\pointedCatOf{\emptyset, \sigma_c, \sigma_r}}
\newcommand{\pointedOrdCatK}{\mathcal{R}^D_{*k} (\vocabV)}
\newcommand{\EMcat}{EM(\comonadFunctorOver{})}
\newcommand{\pathsCat}{\textit{Paths}}
\newcommand{\morphismsOf}[1]{| #1 |}
\newcommand{\relBare}[2]{\roler_{\alpha_{#1}}^{#2}}
\newcommand{\rel}[4]{(#3, #4) \in \relBare{#1}{#2}}
\newcommand{\relA}[2]{\rel{}{\structA}{#1}{#2}}
\newcommand{\relB}[2]{\rel{}{\structB}{#1}{#2}}
\newcommand{\relComonad}[3]{\rel{#1}{\comonadStructA}{#2}{#3}}
\newcommand{\relComonadB}[3]{\rel{#1}{\comonadStructB}{#2}{#3}}
\newcommand{\bulletname}[1]{\textit{(#1)}}
\newcommand{\nameditem}[2]{\item{\bulletname{#1} \; #2}}
\newcommand{\embedding}{\rightarrowtail}
\newcommand{\JO}{J_\extO}
\newcommand{\JI}{J_\extI}
\newcommand{\Jb}{J_\extb}
\newcommand{\Jself}{J_\extSelf}
\newcommand{\trans}[1]{\mathfrak{f}_{#1}}
\newcommand{\transO}{\trans{\extO}}
\newcommand{\transI}{\trans{\extI}}
\newcommand{\transb}{\trans{\extb}}
\newcommand{\transSelf}{\trans{\extSelf}}
\newcommand{\backAndForthEq}{\leftrightarrow_k^{\comonadFunctorBare}}
\newcommand{\backAndForth}{\structPairA \backAndForthEq \structPairB}
\newcommand{\backAndForthGameSymbol}[1]{\mathcal{G}^{#1}_k}
\newcommand{\backAndForthGameOf}[1]{\backAndForthGameSymbol{#1}(\structPairBareA; \structPairBareB)}
\newcommand{\backAndForthGame}{\backAndForthGameOf{}}
\newcommand{\sequenceOf}[2]{[{#1}_0, \alpha_1, {#1}_1, ...,  \alpha_{#2}, {#1}_{#2} ]}
\newcommand{\sequence}{\sequenceOf{a}{j}}
\newcommand{\winningPos}{W(\structPairBareA ; \structPairBareB)}
\newcommand{\reduction}[1]{\mathfrak{#1}}
\newcommand{\reductionI}[1]{\reduction{#1}^{\interI}}
\newcommand{\reductionR}[1]{\reduction{#1}^{*}}
\newcommand{\KleisliCatOf}[1]{\mathsf{Kl}(#1)}
\newcommand{\kleisliHomo}{h : \comonadStructA \rightarrow \structPairB}
\newcommand{\kleisliExtHomo}{h^{*} : \comonadStructA \rightarrow \comonadStructB}
\newcommand{\compose}{\circ}
\newcommand{\redText}[1]{\textcolor[rgb]{0.82,0.01,0.11}{#1}}
\newcommand{\orangeText}[1]{\textcolor[rgb]{0.96,0.65,0.14}{#1}}
\newcommand{\bigcomp}{%
  \DOTSB
  \mathop{\vphantom{\sum}\mathpalette\bigcomp@\relax}%
  \slimits@
}
\newcommand{\bigcomp@}[2]{%
  \begingroup\m@th
  \sbox\z@{$#1\sum$}%
  \setlength{\unitlength}{0.9\dimexpr\ht\z@+\dp\z@}%
  \vcenter{\hbox{%
    \begin{picture}(1,1)
    \bigcomp@linethickness{#1}
    \put(0.5,0.5){\circle{1}}
    \end{picture}%
  }}%
  \endgroup
}
\newcommand{\bigcomp@linethickness}[1]{%
  \linethickness{%
      \ifx#1\displaystyle 2\fontdimen8\textfont\else
      \ifx#1\textstyle 1.65\fontdimen8\textfont\else
      \ifx#1\scriptstyle 1.65\fontdimen8\scriptfont\else
      1.65\fontdimen8\scriptscriptfont\fi\fi\fi 3
  }%
}
\begin{document}

\maketitle

\begin{abstract}
    A categorical approach to study model comparison games in terms of comonads was
    recently initiated by Abramsky et al. In this work, we analyse games that
    appear naturally in the context of description logics and supplement them with suitable game comonads.
    More precisely, we consider expressive sublogics of $\ALCOIbSelf$, namely,
    the logics that extend $\ALC$ with any combination of inverses, nominals,
    safe boolean roles combinations, and $\Self$ operator. Our construction
    augments and modifies the so-called modal comonad by Abramsky and Shah. The
    approach that we took heavily relies on the use of relative comonads, which
    we leverage to encapsulate additional capabilities within the bisimulation
    games in a compositional manner.
\end{abstract}

\section{Introduction}\label{section:intro}

Following~\cite{Abramsky20}, there are two different views on the fundamental
features of computation, that can be summarised as ``structure'' and  ``power''
as follows:
\begin{itemize}\itemsep0em
  \item
    \textbf{Structure}: Compositionality and semantics, addressing the
    question of mastering the complexity of computer systems and taming
    computational effects.
  \item
    \textbf{Power}: Expressiveness and complexity, addressing the question of
    how we can harness the power of computation and recognize its limits.
\end{itemize}

It turned out that there are almost disjoint communities of researchers
studying Structure and Power, with seemingly no common technical language and
tools. To encounter this issue, Samson Abramsky and Anuj Dawar started a
project, whose goal is to provide a category-theoretical toolkit to reason about
finite model theory in order to apply theorems and draw insights from, at
first sight, an unrelated field.

Their approach employs comonads on the category of relational structures
to capture denotational semantics of
model comparison games such as Ehrenfeucht-Fraissé, pebbling, and bisimulation
games~\cite{relating-struct-and-pow}, as well as games for Hybrid
logics~\cite{comonads-hybrid-logic} and Guarded
Fragment~\cite{comonads-guarded-fragment}. The structure allows us to leverage
the tool of category theory, and apply it to generalise known established
theorems, as it was done in~\cite{lovasz-and-comonads} or \cite{arboreal-categories}.
In this paper, we continue the exploration of suitable game comonads by
incorporating the comonadic semantics for description logics games, namely, for
expressive description logics between $\ALC$ and $\ALCOSelfIb$. \footnote{It
will become clear why we write $\ALCOSelfIb$ instead of the more standard form
$\ALCOIbSelfStdOrder$ later.}
It is also worth mentioning parallel research that defines categorical semantics
for $\DL{ALC}$~\cite{category-semantics-for-DL, reasoning-in-DL-under-category-semantics}, however,
their approach is much different from ours, as we focus solely on games and leave \DL{ALC}
in the standard set-theoretic semantics.

\subsection{Our results}

In what follows, we change the setting established in the previous work~\cite{relating-struct-and-pow} from the
category of relational structures to a category of pointed interpretations
that are parametrised by sets of role names, concept names and individual
names.

We start by defining comonadic semantics for $\ALC$-bisimulation-games. It is
well-known that $\ALC$ is a notational variant of multi-modal logic. Hence,
we employ this observation to take full advantage of existing results on
modal logic from~\cite{relating-struct-and-pow} and use them as the base for
our further investigations. In order to define comonadic semantics for DLs
$\DLPhi \subseteq \ALCOSelfIb$, instead of providing it directly for them
(and thus repeating all the machinery and required proofs from~\cite{relating-struct-and-pow}),
we follow a different route. We provide a family of game reductions from $\DLPhi$
to weaker sublogics, ending up on $\ALC$, which transform interpretations in
such a way that a winning strategy in $\DLPhi$-bisimulation-game is equivalent
to a winning strategy in $\ALC$-bisimulation-game for suitably transformed
interpretations.
From a categorical point of view, we introduce a comonad for $\ALC$ logic and
reductions shall be defined by functors, on which we will build relative
comonads to encapsulate the additional capabilities available in an
$\DL{L}$-bisimulation-game. By composing the reduction functors together, we
shall obtain comonadic semantics for all of the games for considered logics.


\subsection{Roadmap}

We start in \cref{section:preliminaries} by giving a sufficient background
for the further results, to facilitate the accessibility for readers coming
both from the area of model theory and description logics, as well as from the
category theory side.

In~\cref{section:bisim-games}, we recall the well-established notion of
bisimulation games for $\DL{L} \subseteq \ALCOSelfIb$ logics, which are the key
concept for which we shall define the categorical semantics.

We then proceed to \cref{section:game-reductions}, where we define a family of
logic extension reductions
$\trans{\Self}, \trans{\extI}, \trans{\extb}, and \trans{\extO}$
acting on interpretations. We declare them with a goal such that for $\Phi \subseteq \extSet$ and $\trans{\Phi}$
being a composition of reductions of extensions selected by $\Phi$, the
following theorem holds:

\begin{center}
            $(\interI, \elD) \sim_{k}^{\ALC\Phi} (\interJ, \elE) \;
            \iff
            \: (\trans{\Phi} \: \interI, \elD) \sim_{k}^{\ALC} (\trans{\Phi} \: \interJ, \elE)$
\end{center}

Having established model-theoretic part of our work, we finally move to the
category theory world, where we shall stay until for the rest of the thesis.
\cref{section:game-comonads} tweaks modal comonad and ports categorical
variation of comparison games from~\cite{relating-struct-and-pow} such that it
can be applied to our description logic setting. We wrap up the chapter by
giving denotational, comonadic semantics for  $\ALC$-bisimulation-games.

Finally, in~\cref{section:extensions}, we devise a general framework for
establishing comonadic semantics for games for all expressive sublogics
of $\ALCOIbSelf$. We achieve this by lifting previously defined reductions to
well-behaved functors and taking a relative comonad over them.

We conclude in~\cref{section:conclusions} by suggesting potential future research
directions as well as giving motivation to the thesis by presenting what was already
achieved in this field by leveraging the developed toolkit.

\section{Preliminaries}\label{section:preliminaries}

We start with a recap of notions from category theory~\cite{awodeybook,maclane71},
such as comonads, as well as from description logics, for which we define their
syntax, semantics and bisimulations~\cite{dlbook}. By doing so, we would like
to unify the context for readers from different backgrounds.

\subsection{Preliminaries on DLs.}

We fix infinite mutually disjoint sets of \emph{individual names} \(\Ilang \),
\emph{concept names} \(\Clang \), and \emph{role names}~\( \Rlang \). We will
briefly recap the syntax and semantics of $\ALCOSelfIb$-concepts and as well as
$\DL{L}$-concepts for relevant sublogics $\DL{L}$ of $\ALCOSelfIb$. The
following EBNF grammar defines \emph{atomic concepts} $\conceptB$,
\emph{concepts} $\conceptC$, \emph{atomic roles} $\roler$,
\emph{simple roles}~$\roles$ with $\indvo \in \Ilang$, $\conceptA \in \Clang$, $\rolep \in \Rlang$:

$$
\begin{array}{rllrll}
    \conceptB & ::= & \conceptA \mid \{ \indvo \} \\
    \conceptC & ::= & \conceptB
                      \mid \neg \conceptC
                      \mid \conceptC \sqcap \conceptC
                      \mid \exists \roles.\conceptC
                      \mid \exists \roles.\Self \\

    \roler & ::= & \rolep \mid \rolep^- \\
    \roles & ::= & \roler
                   \mid \roles \cap \roles
                   \mid \roles \cup \roles
                   \mid \roles \setminus \roles \\
\end{array}
$$

The semantics of \( \ALCOSelfIb \)-concepts is defined via \emph{interpretations}
\(\interI = (\DeltaI, \cdotI) \) composed of a non-empty set \(\DeltaI \)
called the \emph{domain of \(\interI \)} and an \emph{interpretation function}
\(\cdotI \) mapping individual names to elements of \(\DeltaI \), concept names
to subsets of \(\DeltaI \), and role names to subsets of \(\DeltaI \times \DeltaI \).
This mapping is then extended to complex concepts and roles
(\cf~\cref{tab:concepts-and-roles}). The \emph{rank} of a concept is the
maximal nesting depth of $\exists$-restrictions.

We shall use expressions of the form $\ALC\Phi$ or $\DL{L}\Phi$ with
$\Phi \subseteq \{ \extO, \extI, \extSelf, \extb\}$ to speak
collectively about different expressive sublogics of $\ALCOSelfIb$.

The $\ALC\Phi$-concepts are obtained by dropping from the syntax
the inversions of roles $(\mathcal{I})$, safe boolean combination of roles
$(\mathit{b})$ (\ie role union, intersection and difference), nominals
$(\mathcal{O})$  and the self operator $(\Self)$, depending on the content of
$\Phi$. We stress here that role union/intersection/difference, the $\Self$
operator, role inverse $\cdot^-$ and nominals $\{ \cdot \}$ are just operators
and they introduce neither new role names nor new concept names.

\begin{table}[H]
      \caption{Concepts and roles in \( \ALCOSelfIb \).\label{tab:concepts-and-roles}}
      \centering
        \begin{tabular}{@{}l@{\ \ \ }c@{\ \ \ }l@{}}
            \hline\\[-2ex]
            Name & Syntax & Semantics \\ \hline \\[-2ex]
            concept name
                & \(\conceptA \) & \(\conceptA^\interI \subseteq \DeltaI  \) \\
            role name
                & \(\roler \) & \(\roler^\interI \subseteq \DeltaI {\times} \DeltaI \) \\
            concept negation
                & \(\neg\conceptC \)& \(\DeltaI \setminus \conceptC^{\interI} \) \\  
            concept intersection
                & \(\conceptC \sqcap \conceptD \)& \(\conceptC^{\interI}\cap \conceptD^{\interI} \) \\  
            existential restriction
                & \(\exists{\roler}.\conceptC \) &
                \(\set{
                    \elD | \exists{\elE}.(\elD,\elE)
                    \in \roler^{\interI} \land \elE\in \conceptC^{\interI}
                } \)\\
            nominal op.
                & $\{\indvo\}$ & $\{\indvo^{\interI}\}$ \\
            inverse role op.
                & $\rolep^-$ & $\{
                    (\elD, \elE) \mid (\elE, \elD) \in \rolep^\interI
                \}$ \\
            role boolean op. for $\oplus \in \{ \cup, \cap, \setminus \}$
                & $\roles_1 \oplus \roles_2$ & $\roles_1^\interI \oplus \roles_2^\interI$\\
            $\Self$ op.
                & $\exists \roles.\Self$ & $\{\elD \mid (\elD, \elD) \in \roles^{\interI}\}$\\
        \end{tabular}
\end{table}

Any triple $\vocabV \triangleq (\sigma_i, \sigma_c, \sigma_r)$ from
$\Ilang \times \Clang \times \Rlang$ having finite components will be called a
\emph{vocabulary}. We say $\DL{L}(\vocabV)$-concepts for those
$\DL{L}$-concepts that employ only symbols from $\vocabV$. For a \emph{pointed
interpretation} $(\interI, \elD)$ we say that it \emph{satisfies} a
concept $\conceptC$ (written: $(\interI, \elD) \models \conceptC)$ if
$\elD \in \conceptC^{\interI}$. A $\vocabV$-pointed-interpretation
$(\interI, \elD)$ is a partial interpretation, where all individual names
outside $\vocabV$ are left undefined while other symbols outside
$\vocabV$ are interpreted~as~$\emptyset$.

\subsection{Preliminaries on category theory}

We assume familiarity with basic concepts such as categories, functors or natural transformations.
For a definition of a category, functor and natural transformation, see~
\cite[Deﬁnition 1.1, 1.2 and 7.6]{awodeybook}. Let $\categoryC$ and
$\categoryD$ be categories. We write $\morphismsOf{\categoryC}$ to denote
morphisms (arrows) of $\categoryC$ and $f \in \morphismsOf{\categoryC}$ to
indicate that $f$ is a morphism in $\categoryC$.

%
%
Let $\functorG: \categoryC \to \categoryC$ be a functor and
$\counit : \categoryC \Rightarrow 1_\categoryC$ a natural transformation, with
$1_\categoryC$ being the identity functor on $\categoryC$.

\begin{definition}

A \emph{comonad} $\comonadG$ is a triple $(\functorG, \counit,
\coextensionof{(\cdot)})$, where~$\counit$ is called the \emph{counit} of
$\comonadG$ that for each object $\objectA$ it gives us an arrow
$\counitOf{\objectA}: \comonadG\objectA \to \objectA$, while
$\coextensionof{(\cdot)}$, called the \emph{Kleisli coextension} of
$\comonadG$, is an operator sending each arrow $\arrowf : \comonadG\objectA \to
\objectB$ to $\coextensionof{\arrowf}: \comonadG\objectA \to
\comonadG\objectB$.

These have
to satisfy, for all $\arrowf: \comonadG\objectA \to \objectB$ and
$\arrowg: \comonadG\objectB \to \objectC$, the equations:
\[
    \coextensionof{\counitOf{\objectA}} = \arrowid{\comonadG\objectA}, \qquad\qquad\qquad 
    \counitOf{\objectB} \comp \coextensionof{\arrowf} = \arrowf, \qquad\qquad\qquad 
    \coextensionof{(\arrowg \comp \coextensionof{\arrowf})} = \coextensionof{\arrowg} \comp \coextensionof{\arrowf}
\]

 \begin{minipage}{\linewidth}
  \centering
  \begin{minipage}{0.35\linewidth}

    \begin{center}
    \begin{tikzcd}[row sep=huge, column sep=huge]
    {\comonadStructA} \arrow[d, "f^*"'] \arrow[rd, "f"] &    \\
    {\comonadStructB} \arrow[r, "\counitA"']            & {\structPairB}
    \end{tikzcd}
    \end{center}

  \end{minipage}
  \hspace{0.05\linewidth}
  \begin{minipage}{0.55\linewidth}

    \begin{center}
    \begin{tikzcd}[row sep=huge, column sep=huge]
        \comonadStructA \arrow[d, "f^*"'] \arrow[rd, "(g \comp f^*)^*"] &   \\
        \comonadStructB \arrow[r, "g^*"']            & \comonadStructOver{}{K}{k}
    \end{tikzcd}
    \end{center}

    \end{minipage}
  \end{minipage} \\

\end{definition}

\begin{definition}

    A \emph{coKleisli category} $\KleisliCatOf{\comonadG}$ is a category with objects
    from $\categoryC$ and arrows from $\objectA$ to $\objectB$ given by the arrows
    in $\categoryC$ of the form~$\comonadG\objectA \to \objectB$, where composition
    $g \bullet f$ is given by $g \comp \coextensionof{f}$.

\end{definition}


%
%
We shall also need the notion of \emph{relative comonads}~\cite{monads-need-not-be-endofunctors}:

\begin{definition}[Relative comonad]

Given a functor $ J : \categoryC \rightarrow \categoryD $, and a comonad $G$ on
$\categoryD$, we obtain a \textit{relative comonad} on $\categoryC$, whose
coKleisli category is defined as follows. A morphism from $A$ to $B$, for
objects $A$, $B$ of $\categoryC$, is a $\categoryD$-arrow $GJA \rightarrow JB$.
The counit at $A$ is $\counitOf{JA}$, using the counit of $G$ at $JA$. Given
$f : GJA \rightarrow JB$, the Kleisli coextension $f^{*} : GJA \rightarrow GJB$
is the Kleisli coextension of G. Since G is a comonad, these operations satisfy
the equations for a comonad in Kleisli form. We write this as $(G \circ J)$-relative-comonad.

\end{definition}



\section{Bisimulation Games}\label{section:bisim-games}

We now shall recall the characterization of the equality of interpretations under
a certain logic via bisimulation games and bisimulation relation and argue their
logical equivalence.

\begin{definition}
We write $(\interI, \elD) \equiv_k^{\DL{L}\Phi(\vocabV)} (\interJ,
\elE)$ iff $\elD$ and $\elE$ satisfy the same
$\DL{L}\Phi(\vocabV)$-concepts of rank at most $k$, where $k \in \Nomega$.
\end{definition}

\subsection{Games}

Let $\vocabV$ be a vocabulary. Following~\cite{Piro12}, we recap the notion of
\emph{bisimulation games} for $\ALC$ and its extensions.

\begin{definition}
    Call $\elD \in \Delta^{\interI}$ and $\elE \in \Delta^{\interJ}$ to be in
    $\vocabV$-\emph{harmony}\footnote{For $\ALC$ we do not actually use $\sigma_i$
    and $\sigma_r$, but they will be useful for other logics.} if for all concept names
    $\conceptC \in \sigma_c$ we have that $\elD \in \conceptC^{\interI}$ iff
    $\elE \in \conceptC^{\interJ}$.
\end{definition}

The $\ALC(\vocabV)$-\emph{bisimulation
game} is played by two players, Spoiler (he) and Duplicator (she), on two pointed
interpretations $(\interI, \elD_0)$ and
$(\interJ, \elE_0)$. A~\emph{configuration} of a game is a quartet of the
form $(\interI, s; \interJ, s')$, where $s$ and
$s'$ are words from, respectively, $\DeltaI (\sigma_r \DeltaI)^* $ and
$\DeltaJ (\sigma_r \DeltaJ)^*$. Intuitively, configurations encode not only
the current position of
the play but also its full play history. The \emph{initial configuration} is
simply $(\interI, \elD_0; \interJ, \elE_0)$. The $0$-th round of the
game starts in the initial configuration and we require that $\elD_0$ and
$\elE_0$ are in $\vocabV$-harmony. If not, then immediately Spoiler wins.
For any configuration $(\interI, s\elD ; \interJ, s'\elE)$ (where the
sequences $s,s'$ may be empty) in the game, the following rules apply:

\begin{enumerate}[(a)]
  \item
        In each round, Spoiler picks one of the two interpretations, say
        $\interI$. Then he picks a role name $\roler \in \sigma_r$ and takes
        an element $\elD' \in \DeltaI$ such that ($\heartsuit$):
        $(\elD,\elD') \in \roler^{\interI}$. If there is no such role
        name $\roler$ and an element $\elD'$, then Duplicator wins.

  \item
        Duplicator responds in the other interpretation, $\interJ$, by
        picking the same role name $\roler \in \sigma_r$ as Spoiler did and an
        element $\elE' \in \DeltaI$ in $\vocabV$-harmony with
        $\elD'$, witnessing ($\clubsuit$): $(\elE,\elE') \in
        \roler^{\interJ}$. If there is no such role name $\roler$ or an element
        $\elE'$, Spoiler wins.

\end{enumerate}

The game continues from the position
$(\interI, s\elD\roler\elD' ; \interJ, s'\elE\roler\elE')$.
Duplicator has a winning strategy in the game on $(\interI, \elD_0; \interJ,
\elE_0)$ if she can respond to every move of Spoiler so that she either wins
the game or can survive $\omega$ rounds. We define winning strategies in
$k$-round games analogously.

The above game is adjusted to the case of expressive sublogics $\DL{L}\Phi$ of
$\ALCOIbSelf$~as~follows.

\begin{itemize}\itemsep0em
  \item If $\extO \in \Phi$, then we extend the definition
        of $\vocabV$-harmony with a condition ``for all
        $\indvo \in \sigma_i$ we have that $\elD = \indvo^{\interI}$ iff
        $\elE = \indvo^{\interJ}$''.
  \item If $\Self \in \Phi$, then we extend the definition
        of $\vocabV$-harmony with a condition ``for all $\roler \in \sigma_r$ we
        have that $(\elD, \elD) \in \roler^{\interI}$ iff
        $(\elE, \elE) \in \roler^{\interJ}$''.
  \item If $\extI \in \Phi$, then in Spoiler's move the
        condition $(\heartsuit)$ additionally allows for $(\elD',\elD)
        \in \roler^{\interI}$. Then in the corresponding move of Duplicator,
        the condition $(\clubsuit)$ imposes $(\elE',\elE) \in
        \roler^{\interJ}$.
  \item If $\extb \in \Phi$, then for the element $\elE'$ we additionally
        extend $(\clubsuit)$ to fulfil the equality
        $\{
            \roler \in \sigma_r \mid (\elD, \elD') \in
            \roler^{\interI}
         \} = \{
            \roler \in \sigma_r \mid (\elE,
            \elE') \in \roler^{\interJ}
        \}$.
        Moreover, in case of $\extI \in \Phi$ then also
        $ \{
            \roler \in \sigma_r \mid (\elD', \elD) \in
            \roler^{\interI}
        \} = \{
            \roler \in \sigma_r \mid (\elE',
            \elE) \in \roler^{\interJ}
        \} $ must hold.

\end{itemize}

\begin{proposition}
    $\vocabV$-\emph{harmony} is a transitive relation under all game variations
\end{proposition}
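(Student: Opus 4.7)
The plan is to unfold the definition of $\vocabV$-harmony and observe that, in every variation listed, it is just a finite conjunction of biconditionals of the form ``$\elD$ satisfies $P$ in $\interI$ iff $\elE$ satisfies the corresponding $P$ in $\interJ$'', where $P$ ranges over (i) membership in a concept name $\conceptC \in \sigma_c$, and, depending on $\Phi$, also (ii) equality with an individual name $\indvo^{\interI}$ for $\indvo \in \sigma_i$ when $\extO \in \Phi$, and (iii) the self-loop condition $(\elD,\elD) \in \roler^{\interI}$ for $\roler \in \sigma_r$ when $\Self \in \Phi$. A key point to note is that the extensions $\extI$ and $\extb$ modify only the move rules $(\heartsuit)$ and $(\clubsuit)$ of the game, not $\vocabV$-harmony itself, so no additional clauses need to be checked in those cases.

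Given the above, I would fix three pointed interpretations $(\interI, \elD)$, $(\interJ, \elE)$, $(\interK, \elF)$ and assume that the first two are in $\vocabV$-harmony and so are the last two. I would then verify transitivity clause by clause: for each $\conceptC \in \sigma_c$, combining the two assumed biconditionals $\elD \in \conceptC^{\interI} \Leftrightarrow \elE \in \conceptC^{\interJ}$ and $\elE \in \conceptC^{\interJ} \Leftrightarrow \elF \in \conceptC^{\interK}$ yields the desired $\elD \in \conceptC^{\interI} \Leftrightarrow \elF \in \conceptC^{\interK}$ by transitivity of $\Leftrightarrow$. The same argument, with $\conceptC$ replaced by the appropriate predicate, handles the nominal clause $\elD = \indvo^{\interI} \Leftrightarrow \elF = \indvo^{\interK}$ when $\extO \in \Phi$ and the self-loop clause $(\elD,\elD) \in \roler^{\interI} \Leftrightarrow (\elF,\elF) \in \roler^{\interK}$ when $\Self \in \Phi$.

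There is no genuine obstacle here: the result is essentially a bookkeeping observation that every conjunct defining $\vocabV$-harmony is a biconditional, and biconditionals compose transitively. The only thing one has to be attentive to is not being misled into trying to verify transitivity for the $\extI$ and $\extb$ variants at the level of harmony, since the modifications introduced by those extensions live entirely on the side of the move conditions and leave the harmony predicate unchanged.
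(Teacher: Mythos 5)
Your proposal is correct and matches the paper's own argument: the paper simply observes that every clause in the definition of $\vocabV$-harmony (including the extra clauses added for nominals and the self operator) is a logical equivalence, so transitivity follows from transitivity of biconditionals. Your additional remark that the $\mathcal{I}$ and $b$ extensions alter only the move conditions, not the harmony predicate, is a correct and useful clarification but does not change the substance of the argument.
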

\begin{proof}
    Notice that in the definition we have used everywhere logical equivalence,
    from which transitivity follows directly. Clearly combining logics together preserves that.
\end{proof}

The following fact for most of the considered logics is either well-known
(see~\cite{Piro12}, in particular, Prop. 2.1.3 and related chapters) or can be
established by tiny modifications of the existing proofs:

\begin{fact}\label{fact:dupl-has-winstr-iff-sat-the-same-alc-conc}
    Let $\DL{L}$ be a description logic satisfying $\ALC \subseteq \DL{L} \subseteq \ALCOIbSelf$.
    Duplicator has a winning strategy in $\DL{L}(\vocabV)$-bisimulation game
    played on two pointed interpretations $(\interI, \elD)$ and $(\interJ, \elE)$
    iff $(\interI, \elD)$ and $(\interJ, \elE)$ satisfy the same $\DL{L}(\vocabV)$-concepts.
\end{fact}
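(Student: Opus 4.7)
The plan is to prove both implications of the biconditional separately, following the standard Ehrenfeucht--Fra\"{i}ss\'{e} template adapted to description logics, and to handle the various $\Phi$-extensions in a uniform rather than case-by-case manner.

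For the forward direction, I would proceed by induction on the rank $k$ of concepts. In the base case $k=0$, the only rank-$0$ concepts are boolean combinations of concept names, nominals (if $\extO \in \Phi$), and $\exists \roles.\Self$ atoms (if $\Self \in \Phi$); each such atom is preserved by $\vocabV$-harmony, which by assumption holds in the initial configuration. For the inductive step, boolean combinations are immediate. The only substantive case is $\exists \roles. \conceptC$ of rank $k{+}1$: if $(\interI, \elD) \models \exists \roles. \conceptC$ with witness $\elD'$, I would have Spoiler move to $\elD'$ along $\roles$, and use Duplicator's winning strategy to obtain an $\elE'$ that satisfies the appropriate $(\clubsuit)$-condition; the inductive hypothesis then gives $(\interJ, \elE') \models \conceptC$, and the safe-boolean / inverse extensions of $(\clubsuit)$ ensure $(\elE, \elE') \in \roles^{\interJ}$ as a relation built from role names with $\cap, \cup, \setminus, (\cdot)^-$.

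For the backward direction, which is the harder one, I would introduce rank-$k$ \emph{characteristic concepts} $\chi_{(\interI,\elD)}^{k}$ with the property that $(\interJ,\elE) \models \chi_{(\interI,\elD)}^{k}$ iff $(\interI,\elD)$ and $(\interJ,\elE)$ agree on all $\DL{L}(\vocabV)$-concepts of rank at most $k$. Because $\vocabV$ is finite, a straightforward induction on $k$ shows that up to logical equivalence there are only finitely many such concepts at each rank. Duplicator's strategy is then the canonical one: maintain the invariant that at each configuration the two pointed sub-interpretations agree on all concepts of the remaining rank budget. When Spoiler selects $\elD'$ via some simple role $\roles$, Duplicator considers $\exists\roles.\chi_{(\interI,\elD')}^{k-1}$, which is a rank-$k$ concept satisfied by $(\interI,\elD)$ and therefore also by $(\interJ,\elE)$, yielding a legal response $\elE'$ that preserves the invariant. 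For the $\omega$-round game, observe that the statement holds at every finite rank, and Duplicator can combine the $k$-round winning strategies using the finiteness of characteristic concepts (or equivalently, by a K\"{o}nig-style argument over the tree of possible plays).

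For the $\Phi$-extensions, my plan is to treat them as local refinements of the $\vocabV$-harmony clause and of the role-type constraints, so that they only alter the base case and the witness step of the induction. Concretely: nominals extend the rank-$0$ atoms with $\{\indvo\}$; $\Self$ extends them with $\exists\roler.\Self$; inverses enlarge the set of ``simple roles'' appearing in $\exists\roles.\conceptC$; and safe boolean combinations replace the single role $\roler$ in the witness step by the full \emph{role type} between two elements, which ranges over a finite boolean algebra since $\sigma_r$ is finite. I expect the main obstacle to be precisely this last point: verifying that in the presence of $\extb$ (combined possibly with $\extI$) the witness concept used in the backward direction can be expressed as a single $\exists\roles.\chi$ with $\roles$ a safe boolean combination capturing the exact forward (and, if $\extI \in \Phi$, backward) role type between $\elD$ and $\elD'$, so that the $(\clubsuit)$-condition for $\extb$ is forced on Duplicator's response rather than merely hoped for.
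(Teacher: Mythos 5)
The paper offers no proof of this Fact at all --- it is imported as ``well-known'' from \cite{Piro12} (Prop.~2.1.3 and related chapters), with the remaining cases left to ``tiny modifications'' --- so your reconstruction is being measured against the standard literature rather than against an argument in the text. On the finite-rank fragment your plan is sound and is essentially the canonical Hennessy--Milner/Ehrenfeucht--Fra\"iss\'e argument: the forward induction on rank, the characteristic concepts $\chi^{k}_{(\interI,\elD)}$ (using that over a finite $\vocabV$ there are only finitely many concepts of each rank up to equivalence), and, crucially, your observation that for $\extb$ (possibly combined with $\extI$) the witness concept must use the safe role expression $\bigcap_{\roler \in S} \roler \setminus \bigcup_{\roler \notin S} \roler$ pinning down the exact role type $S$ between the two elements --- this is precisely where a naive $\exists\roler.\chi$ would fail to force the $(\clubsuit)$-condition on Duplicator's response, and you identify it correctly.

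The genuine gap is the unbounded game. As stated, the Fact asserts that agreement on \emph{all} $\DL{L}(\vocabV)$-concepts (each of finite rank) yields a winning strategy for Duplicator in the $\omega$-round game, and neither of your proposed bridges gets you there. A family of strategies, one winning each $k$-round game, does not assemble into a single strategy that survives $\omega$ rounds; and a K\"onig-style argument applies only when the tree of plays is finitely branching, i.e.\ when the interpretations are image-finite. Without such a hypothesis the implication is false: let $(\interI,\elD)$ be such that $\elD$ has, for every $n$, an $\roler$-successor starting a simple $\roler$-path of length $n$, and let $(\interJ,\elE)$ be identical except for one additional infinite $\roler$-path. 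The two points satisfy the same $\ALC(\vocabV)$-concepts, yet Spoiler wins the $\omega$-round game by walking down the infinite path. So you must either add an image-finiteness (or modal saturation) hypothesis for the $\omega$-case --- as Hennessy--Milner-type theorems are normally stated --- or prove the rank-stratified version (Duplicator wins the $k$-round game iff the points agree on concepts of rank at most $k$, for finite $k$), which is what the paper actually uses downstream via \cref{fact:games-bisimulations-and-concept-eq}; note that that Fact also allows $k=\omega$ and inherits the same defect, so the issue lies partly in the statement you were asked to prove rather than only in your argument.
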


\subsection{Bisimulations}

To simplify reasoning about bisimulation games, we employ the well-known notion
of \emph{bisimulation}, which can be seen as the ``encoding'' of winning
strategies of Duplicator.
Let $\DL{L}\Phi$ be an expressive sublogic of $\ALCOSelfIb$ and
$k \in \Nomega$.
Following~\cite{DivroodiN15}:

\begin{definition}[Bisimulation relation]\label{def:bisim-relation}

$\DL{L}\Phi(\vocabV)$-$k$-\emph{bisimulation} between $(\interI, \mathrm{a})$
and $(\interJ, \mathrm{b})$ is a set $\mathcal{Z} \subseteq
\bigcup_{\ell=0}^{k} (\DeltaI)^k \times (\DeltaJ)^k$ satisfying the following
seven conditions for all $\indvo \in \sigma_i, \conceptC \in \sigma_c, \roler
\in \sigma_r, \elD,\elD' \in \DeltaI$, $s \in (\DeltaI)^*$ and
$\elE, \elE' \in \DeltaJ, s' \in (\DeltaJ)^*$:

\begin{enumerate}[(a)]\itemsep0em
  \item If $\mathcal{Z}(s\elD, s'\elE)$ then $\elD \in
      \conceptC^{\interI}$ iff $\elE \in \conceptC^{\interJ}$.

  \item If $\mathcal{Z}(s\elD, s'\elE)$ and $(\elD, \elD') \in
      \roler^{\interI}$ then there is $\elE' \in \DeltaJ$ s.t.
        $(\elE, \elE') \in \roler^{\interJ}$ and
        $\mathcal{Z}(s\elD\elD', s'\elE\elE')$.

  \item If $\mathcal{Z}(s\elD, s'\elE)$ and $(\elE, \elE') \in
      \roler^{\interJ}$ then there is $\elD' \in \DeltaI$ s.t.
        $(\elD, \elD') \in \roler^{\interI}$ and
        $\mathcal{Z}(s\elD\elD', s'\elE\elE')$.

  \item If $\extO \in \Phi$, then $\mathcal{Z}(s\elD, s'\elE)$
        implies $\elD = \indvo^{\interI}$ iff $\elE = \indvo^{\interJ}$.

  \item If $\Self \in \Phi$, then $\mathcal{Z}(s\elD, s'\elE)$
        implies $(\elD, \elD) \in \roler^{\interI}$ iff
        $(\elE, \elE) \in \roler^{\interJ}$.

  \item If $\extI \in \Phi$, then $\mathcal{Z}(s\elD, s'\elE)$
      and $(\elD', \elD) \in \roler^{\interI}$ implies
        that there is $\elE' \in \DeltaJ$ such that
        $(\elE', \elE) \in \roler^{\interJ}$ and $\mathcal{Z}(s\elD\elD', s'\elE\elE')$.

  \item If $\extb \in \Phi$,, then if
        $\mathcal{Z}(s\elD, s'\elE)$ and
        $(\elD, \elD') \in \roler^{\interI}$ implies that there is
        $\elE' \in \DeltaJ$ satisfying $\mathcal{Z}(s\elD\elD', s'\elE\elE')$
        and $
        \{ \roler \in \sigma_r \mid (\elD, \elD') \in \roler^{\interI} \} =
        \{ \roler \in \sigma_r \mid (\elE, \elE') \in \roler^{\interJ} \}$
        . If $\extI \in \Phi$, then also
        $\{ \roler \in \sigma_r \mid
        (\elD', \elD) \in \roler^{\interI} \} = \{ \roler \in
        \sigma_r \mid (\elE', \elE) \in \roler^{\interJ} \}$.
\end{enumerate}

\end{definition}


Note that if $\mathcal{Z}$ is an $\omega$-bisimulation, then $\mathcal{Z}$
becomes a $k$-bisimulation when restricted to pairs of sequences of length at
most $k$.
2.1.3 from~\cite{Piro12}) that: The following fact for most of the considered
logics is either well-known (see~\cite{Piro12}, in particular, Prop. 2.1.3 and
related chapters) or can be established by tiny modifications of existing
proofs.

\begin{fact}\label{fact:games-bisimulations-and-concept-eq}
    For any $k \in \mathbb{N} \cup \{ \omega \}$ and a logic $\DL{L}\Phi$ between
    $\ALC$ and $\ALCOIbSelf$, t.f.a.e.:

    \begin{itemize}
        \item Duplicator has the winning strategy in the $k$-round $\DL{L}\Phi(\vocabV)$-bisimulation-game
             on~$(\interI, \elD ; \interJ, \elE)$,

        \item There is an $\DL{L}\Phi(\vocabV)$-$k$-bisimulation
            $\mathcal{Z}$ between $\structPairA$ and $\structPairB$ such that $\mathcal{Z}(\elD, \elE)$,

        \item $(\interI, \elD) \equiv_k^{\DL{L}\Phi(\vocabV)} (\interJ, \elE)$.
    \end{itemize}
\end{fact}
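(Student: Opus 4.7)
The plan is to establish the three-way equivalence cyclically, showing $(1) \Rightarrow (2) \Rightarrow (3) \Rightarrow (1)$, and handling the extensions in $\Phi$ by treating them as supplementary clauses in both the induction and the strategy-to-relation translation. Throughout I will use that $\vocabV$ has finite components, which guarantees the existence of characteristic concepts at each finite rank.

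For $(1) \Rightarrow (2)$, I would extract the bisimulation directly from a winning strategy $\tau$ for Duplicator. Define $\mathcal{Z}$ to consist of all pairs of histories $(s\elD, s'\elE)$ that can be reached in a $\tau$-consistent play of length at most $k$. The initial pair $(\elD, \elE)$ belongs to $\mathcal{Z}$ because round $0$ requires $\vocabV$-harmony, which by definition of the game for $\DL{L}\Phi$ already enforces the atomic clauses (a), (d), (e) in \cref{def:bisim-relation}. The back-and-forth clauses (b), (c), (f), and the boolean-role-preservation clause (g), are then direct transcriptions of the moves of Spoiler together with Duplicator's response prescribed by $\tau$; the extension-specific conditions in the game definition have been set up precisely so that this transcription goes through clause by clause. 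Conversely for $(2) \Rightarrow (1)$, Duplicator plays by maintaining the invariant ``the current configuration lies in $\mathcal{Z}$'', which is possible by exactly the same clauses.

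For $(2) \Rightarrow (3)$, I would prove by induction on $k$ and on concept rank that whenever $\mathcal{Z}(s\elD, s'\elE)$, the pointed interpretations $(\interI, \elD)$ and $(\interJ, \elE)$ agree on all $\DL{L}\Phi(\vocabV)$-concepts of rank at most $k - |s|$. The base case (atoms, nominals, $\Self$-atoms) is immediate from clauses (a), (d), (e). The inductive case for $\exists \roler.\conceptC$ reduces to clauses (b), (c), and in the presence of $\extI$ to (f); the case of safe boolean role combinations is handled by unfolding $\roles_1 \oplus \roles_2$ into the atomic-role membership sets compared in (g), so witness roles for $\roles$ can be transported between sides.

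The main obstacle is $(3) \Rightarrow (2)$, which is where finiteness of $\vocabV$ is essential. Here I would construct the canonical bisimulation $\mathcal{Z}^\star$ by setting $\mathcal{Z}^\star(s\elD, s'\elE)$ iff $|s| = |s'| \leq k$ and $(\interI, \elD) \equiv_{k-|s|}^{\DL{L}\Phi(\vocabV)} (\interJ, \elE)$. Verifying the atomic clauses is trivial, but verifying the back-and-forth clauses (b), (c), (f), (g) is the delicate point: given $(\elD, \elD') \in \roler^{\interI}$ and no matching $\elE'$, one has to derive a single $\DL{L}\Phi(\vocabV)$-concept of rank $\leq k - |s|$ that separates $\elD$ from $\elE$. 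To do so I would, for each candidate $\elE' \in \DeltaJ$ that violates the matching, pick a rank-$(k-|s|-1)$ distinguishing concept $\conceptC_{\elE'}$; since $\sigma_r, \sigma_c, \sigma_i$ are finite, up to rank-$(k-|s|-1)$ logical equivalence there are only finitely many such concepts, so one can take a finite conjunction and form $\exists \roler.\bigsqcap_{\elE'} \conceptC_{\elE'}$ of rank $\leq k-|s|$ to obtain the required separator; the $\extb$ case additionally restricts $\roler$ to a suitable safe boolean combination of atomic roles describing exactly the atomic-role types of $(\elD, \elD')$. The case $k = \omega$ is obtained either as a straightforward limit or by inspecting that the above argument provides a single relation working uniformly across all finite ranks. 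This is essentially a routine adaptation of the standard Hennessy--Milner argument, and the ``tiny modifications'' mentioned in the statement amount precisely to inserting the extra harmony/back-and-forth clauses for each extension in $\Phi$.
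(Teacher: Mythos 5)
The paper never actually proves this Fact---it is delegated wholesale to the literature (``well-known \ldots or can be established by tiny modifications of the existing proofs'')---so there is no in-paper argument to compare against. For finite $k$ your plan is the standard one and is sound: extracting $\mathcal{Z}$ from the set of configurations reachable under a winning strategy, the rank induction for $(2)\Rightarrow(3)$, and the Hennessy--Milner-style construction for $(3)\Rightarrow(2)$ all go through, the last one correctly exploiting that a finite vocabulary admits only finitely many concepts of bounded rank up to logical equivalence, and the $\extb$ clause is correctly reduced to the safe role expression describing the atomic $2$-type of $(\elD,\elD')$ (which is safe because that type is nonempty).

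The genuine gap is the case $k=\omega$, which you dismiss as ``a straightforward limit''. It is not. For $k=\omega$ your canonical relation $\mathcal{Z}^\star$ degenerates to full concept equivalence at every stage, and the finite-conjunction step collapses: the violating candidates $\elE'$ may be infinite in number and their distinguishing concepts of unbounded rank, and while each bounded rank yields finitely many concepts up to equivalence, all ranks together yield infinitely many, so no single separating concept $\exists\roler.(\conceptC_{\elE'_1}\sqcap\cdots\sqcap\conceptC_{\elE'_n})$ need exist. This is not merely a presentational defect: without an image-finiteness or modal-saturation hypothesis the implication from the third bullet to the first is \emph{false} for $k=\omega$ already for plain $\ALC$. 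Let $\elD$ have, for each $n$, an $\roler$-successor starting a path of length exactly $n$, and let $\elE$ have the same successors plus one starting an infinite path; the two points satisfy the same $\ALC(\vocabV)$-concepts of every finite rank, yet Spoiler wins the $\omega$-round game by descending the infinite path. So for $k=\omega$ you must either add a saturation/finite-branching proviso (as the cited sources do) or retain only the implications from the first bullet to the second to the third; as written, your proof silently asserts the direction that fails.
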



\section{Reductions between games and logics}
\label{section:game-reductions}

Herein we establish reductions, based on appropriate model transformations,
that will allow us to transfer the winning strategies of Duplicator from
richer logics to weaker ones, ending up on $\ALC$. All of them, except the case
of nominals, will be trivial. Such transformation will be essential
in~\cref{section:extensions}, where we shall employ them in the
construction~of~relative~comonads.

We will denote the game reductions for logic extensions $\Phi \subseteq \extSet$ by $\reduction{f}_\Phi$,
which has two components $\reductionI{f}_\Phi$ and $\reductionR{f}_\Phi$, that define
actions on, respectively, the interpretation and the distinguished element.

\begin{definition}
    Let $\interI$ be an interpretation over vocabulary $(\sigma_i, \sigma_c, \sigma_r)$.
    A \textit{$(\sigma_i', \sigma_c', \sigma_r')$-reduct} of an interpretation
    $\interI$ is an interpretation $\interI'$ obtained
    by interpreting all the symbols outside of $\sigma_i' \cup \sigma_c' \cup \sigma_r'$
    as empty sets.
\end{definition}

\subsection{$\Self$ operator}

We first handle the $\Self$ operator.
Let $\sigmaSelf_c \triangleq \sigma_c \cup \{\conceptC_{\Self.\roler} \mid \roler \in \sigma_r\} $.
By the \emph{self-enrichment} of a $\vocabV \triangleq (\sigma_i, \sigma_c, \sigma_r)$-interpretation
$\interI$ we mean the $\vocabV^{\extSelf} \triangleq (\sigma_i, \sigmaSelf_c, \sigma_r)$-interpretation $\interI_{\Self}$,
where the $(\sigma_i, \sigma_c, \sigma_r)$-reduct
of $\interI_{\Self}$ is equal to $\interI$ and the
interpretations of $\conceptC_{\Self.\roler}$ concepts are defined as
$(\conceptC_{\Self.\roler})^{\interI_{\Self}} = (\exists\roler.\Self)^{\interI}$.

\begin{center}

    \tikzset{every picture/.style={line width=0.75pt}} 

\begin{tikzpicture}[x=0.75pt,y=0.75pt,yscale=-1,xscale=1]

\draw    (50,650.33) -- (129.29,600.89) ;
\draw [shift={(130.99,599.83)}, rotate = 148.06] [color={rgb, 255:red, 0; green, 0; blue, 0 }  ][line width=0.75]    (10.93,-4.9) .. controls (6.95,-2.3) and (3.31,-0.67) .. (0,0) .. controls (3.31,0.67) and (6.95,2.3) .. (10.93,4.9)   ;
\draw    (51.71,651.38) -- (68.51,661.65) -- (130.99,699.83) ;
\draw [shift={(50,650.33)}, rotate = 31.43] [color={rgb, 255:red, 0; green, 0; blue, 0 }  ][line width=0.75]    (10.93,-4.9) .. controls (6.95,-2.3) and (3.31,-0.67) .. (0,0) .. controls (3.31,0.67) and (6.95,2.3) .. (10.93,4.9)   ;
\draw    (129.37,597.33) .. controls (90.39,550.97) and (173.56,550.01) .. (135.03,595.77) ;
\draw [shift={(133.82,597.17)}, rotate = 311.29] [color={rgb, 255:red, 0; green, 0; blue, 0 }  ][line width=0.75]    (10.93,-4.9) .. controls (6.95,-2.3) and (3.31,-0.67) .. (0,0) .. controls (3.31,0.67) and (6.95,2.3) .. (10.93,4.9)   ;
\draw    (130.99,699.83) .. controls (178.61,700.08) and (210.68,680.4) .. (212.7,651.84) ;
\draw [shift={(212.79,650.08)}, rotate = 91.57] [color={rgb, 255:red, 0; green, 0; blue, 0 }  ][line width=0.75]    (10.93,-4.9) .. controls (6.95,-2.3) and (3.31,-0.67) .. (0,0) .. controls (3.31,0.67) and (6.95,2.3) .. (10.93,4.9)   ;
\draw    (212.79,650.08) .. controls (164.37,650.08) and (132.29,669.29) .. (131.03,698.06) ;
\draw [shift={(130.99,699.83)}, rotate = 270] [color={rgb, 255:red, 0; green, 0; blue, 0 }  ][line width=0.75]    (10.93,-4.9) .. controls (6.95,-2.3) and (3.31,-0.67) .. (0,0) .. controls (3.31,0.67) and (6.95,2.3) .. (10.93,4.9)   ;
\draw    (350.34,649.83) -- (428.18,600.41) ;
\draw [shift={(429.87,599.33)}, rotate = 147.59] [color={rgb, 255:red, 0; green, 0; blue, 0 }  ][line width=0.75]    (10.93,-4.9) .. controls (6.95,-2.3) and (3.31,-0.67) .. (0,0) .. controls (3.31,0.67) and (6.95,2.3) .. (10.93,4.9)   ;
\draw    (352.04,650.89) -- (368.52,661.15) -- (429.87,699.33) ;
\draw [shift={(350.34,649.83)}, rotate = 31.9] [color={rgb, 255:red, 0; green, 0; blue, 0 }  ][line width=0.75]    (10.93,-4.9) .. controls (6.95,-2.3) and (3.31,-0.67) .. (0,0) .. controls (3.31,0.67) and (6.95,2.3) .. (10.93,4.9)   ;
\draw    (429.87,699.33) .. controls (476.64,699.58) and (508.13,679.9) .. (510.12,651.34) ;
\draw [shift={(510.2,649.58)}, rotate = 91.54] [color={rgb, 255:red, 0; green, 0; blue, 0 }  ][line width=0.75]    (10.93,-4.9) .. controls (6.95,-2.3) and (3.31,-0.67) .. (0,0) .. controls (3.31,0.67) and (6.95,2.3) .. (10.93,4.9)   ;
\draw    (510.2,649.58) .. controls (462.66,649.58) and (431.15,668.79) .. (429.91,697.56) ;
\draw [shift={(429.87,699.33)}, rotate = 270] [color={rgb, 255:red, 0; green, 0; blue, 0 }  ][line width=0.75]    (10.93,-4.9) .. controls (6.95,-2.3) and (3.31,-0.67) .. (0,0) .. controls (3.31,0.67) and (6.95,2.3) .. (10.93,4.9)   ;
\draw [color={rgb, 255:red, 0; green, 0; blue, 0 }  ,draw opacity=1 ][line width=0.75]    (261.5,650.33) -- (318.5,650.11) ;
\draw [shift={(320.5,650.1)}, rotate = 179.77] [color={rgb, 255:red, 0; green, 0; blue, 0 }  ,draw opacity=1 ][line width=0.75]    (10.93,-4.9) .. controls (6.95,-2.3) and (3.31,-0.67) .. (0,0) .. controls (3.31,0.67) and (6.95,2.3) .. (10.93,4.9)   ;
\draw [shift={(261.5,650.33)}, rotate = 179.77] [color={rgb, 255:red, 0; green, 0; blue, 0 }  ,draw opacity=1 ][line width=0.75]    (0,5.59) -- (0,-5.59)   ;
\draw    (210.37,648.33) .. controls (171.39,601.97) and (254.56,601.01) .. (216.03,646.77) ;
\draw [shift={(214.82,648.17)}, rotate = 311.29] [color={rgb, 255:red, 0; green, 0; blue, 0 }  ][line width=0.75]    (10.93,-4.9) .. controls (6.95,-2.3) and (3.31,-0.67) .. (0,0) .. controls (3.31,0.67) and (6.95,2.3) .. (10.93,4.9)   ;
\draw    (427.37,597.83) .. controls (388.39,551.47) and (471.56,550.51) .. (433.03,596.27) ;
\draw [shift={(431.82,597.67)}, rotate = 311.29] [color={rgb, 255:red, 0; green, 0; blue, 0 }  ][line width=0.75]    (10.93,-4.9) .. controls (6.95,-2.3) and (3.31,-0.67) .. (0,0) .. controls (3.31,0.67) and (6.95,2.3) .. (10.93,4.9)   ;
\draw    (507.87,648.33) .. controls (468.89,601.97) and (552.06,601.01) .. (513.53,646.77) ;
\draw [shift={(512.32,648.17)}, rotate = 311.29] [color={rgb, 255:red, 0; green, 0; blue, 0 }  ][line width=0.75]    (10.93,-4.9) .. controls (6.95,-2.3) and (3.31,-0.67) .. (0,0) .. controls (3.31,0.67) and (6.95,2.3) .. (10.93,4.9)   ;

\draw (127.39,544) node [anchor=north west][inner sep=0.75pt]    {$\roler$};
\draw (207.96,595) node [anchor=north west][inner sep=0.75pt]    {$\roles$};
\draw (424.68,543.5) node [anchor=north west][inner sep=0.75pt]    {$\roler$};
\draw (505.77,595) node [anchor=north west][inner sep=0.75pt]    {$\roles$};
\draw (373.51,579.63) node [anchor=north west][inner sep=0.75pt]  [color={rgb, 255:red, 208; green, 2; blue, 27 }  ,opacity=1 ]  {$\conceptC_{\Self.\roler}$};
\draw (519.77,647.13) node [anchor=north west][inner sep=0.75pt]  [color={rgb, 255:red, 208; green, 2; blue, 27 }  ,opacity=1 ]  {$\conceptC_{\Self.\roles}$};
\draw (112.3,717.83) node [anchor=north west][inner sep=0.75pt]    {$(\interI, \elD)$};
\draw (407.83,718.33) node [anchor=north west][inner sep=0.75pt]    {$(\interI_{\Self}, \elD)$};

\end{tikzpicture}

\end{center}

Let $\trans{\Self}$ be the described transformation, mapping $(\interI, \elD)$ to~$(\interI_{\Self}, \elD)$.

\begin{proposition}\label{prop:bISelf-game-reduction}
    Let $k \in \Nomega$ and let $\DL{L}$ be a DL satisfying $\ALC \subseteq
    \DL{L} \subseteq \ALCOIb$. Then Duplicator has a winning strategy in a $k$-round
    $\DL{L}_\Self(\vocabV)$-bisimulation game on
    $(\interI, \elD ; \interJ, \elE)$ iff she has a winning strategy
    in a $k$-round $\DL{L}(\vocabV)$-bisimulation game on
    $(\trans{\Self}(\interI), \elD ; \trans{\Self}(\interJ), \elE)$.

\end{proposition}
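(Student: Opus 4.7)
The plan is to exhibit a tight, move-for-move correspondence between the two games, essentially showing that self-enrichment trades the $\Self$-clause in the harmony condition for agreement on the freshly added concept names $\conceptC_{\Self.\roler}$. Since $\transSelf$ does not alter domains, individual-name interpretations, or role-name interpretations, the pool of legal moves available to Spoiler is literally the same object in both games; only the harmony check differs, and I will show these two checks are logically equivalent.

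The first step is the \emph{harmony correspondence}. Fix $\elD \in \DeltaI$ and $\elE \in \DeltaJ$. On the left, $\DL{L}_\Self(\vocabV)$-harmony demands (i) agreement on all $\conceptA \in \sigma_c$, together with (ii) the $\Self$-clause ``$(\elD,\elD) \in \roler^{\interI}$ iff $(\elE,\elE) \in \roler^{\interJ}$ for every $\roler \in \sigma_r$''. On the right, $\DL{L}(\vocabV^{\extSelf})$-harmony demands agreement on all of $\sigmaSelf_c = \sigma_c \cup \{\conceptC_{\Self.\roler} \mid \roler \in \sigma_r\}$, plus any additional conditions from $\Phi$ (for $\extO$, $\extI$, $\extb$), which involve only $\sigma_i$ and $\sigma_r$ and are therefore unchanged by self-enrichment. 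Because $\interI_\Self$ is a conservative extension of $\interI$ on $\sigma_c$ and because $(\conceptC_{\Self.\roler})^{\interI_\Self} = \{\elD \mid (\elD,\elD) \in \roler^{\interI}\}$ by construction, conditions (i)+(ii) are equivalent to the right-hand harmony check.

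The second step is the induction on $k$ to transfer winning strategies. I proceed by induction on the number of remaining rounds. For the base case, ``no moves left'' reduces to the harmony check just handled. For the inductive step, Spoiler's move in either game consists of picking an interpretation, a role name $\roler \in \sigma_r$, and an element $\elD' \in \DeltaI$ witnessing the role condition $(\heartsuit)$ (with the $\extI$- and $\extb$-clauses if applicable); because $\roler^{\interI_\Self} = \roler^{\interI}$ for every $\roler \in \sigma_r$ and similarly for $\interJ$, these move sets coincide verbatim, and the same is true for Duplicator's legal responses under $(\clubsuit)$. Hence a winning strategy on one side reads off as a winning strategy on the other, with the invariant that every visited configuration remains in harmony by the first step.

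The main thing to check carefully is that no game-side condition secretly depends on concept-name interpretations of $\sigma_c$ beyond what is already handled by harmony: the $\extI$-clause concerns role tuples only, and the $\extb$-clause is phrased purely over $\sigma_r$, so both are insensitive to $\transSelf$. This is the only delicate point, and once it is observed the proposition follows by the two steps above. I expect the write-up to be short: a single paragraph verifying the harmony equivalence, followed by a brief inductive argument noting that role-based moves and responses are preserved verbatim.
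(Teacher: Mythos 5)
Your proof is correct, and it rests on exactly the same key observation as the paper's: self-enrichment is invisible to the role structure and converts the $\Self$-clause of harmony into agreement on the fresh concept names $\conceptC_{\Self.\roler}$, via $(\conceptC_{\Self.\roler})^{\interI_{\Self}} = (\exists\roler.\Self)^{\interI}$. The route differs slightly in where the bookkeeping happens. The paper first invokes \cref{fact:games-bisimulations-and-concept-eq} on both sides, which replaces the game-theoretic statement by the existence of bisimulation relations in the sense of \cref{def:bisim-relation}; it then takes the \emph{same} relation $\mathcal{Z}$ on both sides and only has to check that clause \textit{(e)} for $\bisimRelSym{}$ corresponds to clause \textit{(a)} for the new concept names in $\bisimRelSym{\Self}$, with no induction needed because the back-and-forth clauses \textit{(b)}, \textit{(c)}, \textit{(f)}, \textit{(g)} are verbatim identical. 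You instead argue directly at the level of plays, proving the harmony conditions equivalent and then running an induction on the number of remaining rounds to transfer strategies. Your version is more self-contained (it does not rely on the equivalence between games and bisimulations) at the cost of the explicit induction; the paper's version is shorter given that \cref{fact:games-bisimulations-and-concept-eq} is already available and is reused uniformly for the other reductions. One small point to make explicit in a final write-up: the right-hand game should be read over the enriched vocabulary $\vocabV^{\extSelf}$ (as you do), and for $k=\omega$ the inductive transfer should be phrased as preservation of the ability to survive every finite prefix, which your invariant already gives you.
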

\begin{proof}
    By applying~\cref{fact:games-bisimulations-and-concept-eq} to both sides,
    it is sufficient to prove the following:

    \textit{
        There is a $\DL{L}_\Self(\vocabV)$-$k$-bisimulation
        $\bisimRelSym{}$ between $\structPairA$ and $\structPairB$ such that $\bisimRel{}$
        iff
        there is a $\DL{L}(\vocabV^{\extSelf})$-$k$-bisimulation
        $\bisimRelSym{\Self}$ between $\transPairA{\Self}$ and $\transPairB{\Self}$ such that $\bisimRel{\Self}$
    }

\noindent($\Longrightarrow$) Let us assume $\bisimRelSym{}$ is the bisimulation from
implication predecessor and take $\bisimRelSym{\Self} \triangleq \bisimRelSym{}$.
We now need to prove that $\bisimRelSym{\Self}$ is a valid bisimulation. Notice
that the only way in which $\trans{\Self}$-reduced interpretations differ are the
atomic concepts, so it is sufficient to prove that case $(a)$
from~\cref{def:bisim-relation} holds for new $\conceptC_{\Self.\roler}$ concepts.
Take any $a \in \interI, b \in \interJ$.

    \begin{align*}
        \bisimRelAB{\Self} &\Longrightarrow \bisimRelAB{}       && \bisimRelSym{\Self} = \bisimRelSym{} \\
        &\Longrightarrow (a, a) \in \roler^{\interI} \iff (b, b) \in \roler^{\interJ}       && \text{def. } \bisimRelSym{}, \: (e) \\
        &\Longrightarrow a \in (\exists\roler.\Self)^{\interI} \iff b \in (\exists\roler.\Self)^{\interJ}     && \text{def. } \exists\roler.\Self \\
        &\Longrightarrow a \in (\conceptC_{\Self.\roler})^{\interI_{\Self}} \iff b \in (\conceptC_{\Self.\roler})^{\interJ_{\Self}} && \text{def. } \conceptC_{\Self.\roler} \\
    \end{align*}

\noindent($\Longleftarrow$) Proof for the other side is analogous. Let us again
assume $\bisimRelSym{\Self}$ is the bisimulation from implication predecessor
and take $\bisimRelSym{} \triangleq \bisimRelSym{\Self}$. We now need to prove
that $\bisimRelSym{}$ is a valid bisimulation. This time, the only case that
needs special attention is $(e)$ from~\cref{def:bisim-relation}.
Take any $a \in \interI, b \in \interJ$.

    \begin{align*}
        \bisimRelAB{} &\Longrightarrow \bisimRelAB{\Self}       && \bisimRelSym{} = \bisimRelSym{\Self} \\
        &\Longrightarrow a \in (\conceptC_{\Self.\roler})^{\interI_{\Self}} \iff b \in (\conceptC_{\Self.\roler})^{\interJ_{\Self}}     && \text{def. } \bisimRelSym{}, \: (a) \\
        &\Longrightarrow a \in (\exists\roler.\Self)^{\interI} \iff b \in (\exists\roler.\Self)^{\interJ}     && \text{def. } \conceptC_{\Self.\roler} \\
        &\Longrightarrow (a, a) \in \roler^{\interI} \iff (b, b) \in \roler^{\interJ}       && \text{def. } \exists\roler.\Self \\
    \end{align*}

\end{proof}

\subsection{Role inverses}

Our next goal is to incorporate inverses of roles.
Let $\sigmaI_r \triangleq \sigma_r \cup \{ \roler_{\textit{inv}} \mid \roler \in \sigma_r \}$
By the \emph{inverse-enrichment} of a $\vocabV \triangleq (\sigma_i, \sigma_c, \sigma_r)$-interpretation
$\interI$ we mean the $\vocabV^{\extI} \triangleq (\sigma_i, \sigma_c, \sigmaI_r)$-interpretation $\interI_{\extI}$,
where the $(\sigma_i, \sigma_c, \emptyset)$-reducts of $\interI$ and $\interI_{\extI}$ are equal, and the
interpretations of role names $\roler_{\textit{inv}}$ are defined as
$(\roler_{\textit{inv}})^{\interI_{\extI}} = (\roler^-)^{\interI}$.

\begin{center}

\tikzset{every picture/.style={line width=0.75pt}} 

\begin{tikzpicture}[x=0.75pt,y=0.75pt,yscale=-1,xscale=1]

\draw    (50.5,949.83) -- (129.79,900.39) ;
\draw [shift={(131.49,899.33)}, rotate = 148.06] [color={rgb, 255:red, 0; green, 0; blue, 0 }  ][line width=0.75]    (10.93,-4.9) .. controls (6.95,-2.3) and (3.31,-0.67) .. (0,0) .. controls (3.31,0.67) and (6.95,2.3) .. (10.93,4.9)   ;
\draw    (52.21,950.88) -- (69.01,961.15) -- (131.49,999.33) ;
\draw [shift={(50.5,949.83)}, rotate = 31.43] [color={rgb, 255:red, 0; green, 0; blue, 0 }  ][line width=0.75]    (10.93,-4.9) .. controls (6.95,-2.3) and (3.31,-0.67) .. (0,0) .. controls (3.31,0.67) and (6.95,2.3) .. (10.93,4.9)   ;
\draw    (129.87,896.83) .. controls (90.89,850.47) and (174.06,849.51) .. (135.53,895.27) ;
\draw [shift={(134.32,896.67)}, rotate = 311.29] [color={rgb, 255:red, 0; green, 0; blue, 0 }  ][line width=0.75]    (10.93,-4.9) .. controls (6.95,-2.3) and (3.31,-0.67) .. (0,0) .. controls (3.31,0.67) and (6.95,2.3) .. (10.93,4.9)   ;
\draw    (131.49,999.33) .. controls (179.11,999.58) and (211.18,979.9) .. (213.2,951.34) ;
\draw [shift={(213.29,949.58)}, rotate = 91.57] [color={rgb, 255:red, 0; green, 0; blue, 0 }  ][line width=0.75]    (10.93,-4.9) .. controls (6.95,-2.3) and (3.31,-0.67) .. (0,0) .. controls (3.31,0.67) and (6.95,2.3) .. (10.93,4.9)   ;
\draw    (213.29,949.58) .. controls (164.87,949.58) and (132.79,968.79) .. (131.53,997.56) ;
\draw [shift={(131.49,999.33)}, rotate = 270] [color={rgb, 255:red, 0; green, 0; blue, 0 }  ][line width=0.75]    (10.93,-4.9) .. controls (6.95,-2.3) and (3.31,-0.67) .. (0,0) .. controls (3.31,0.67) and (6.95,2.3) .. (10.93,4.9)   ;
\draw [color={rgb, 255:red, 0; green, 0; blue, 0 }  ,draw opacity=1 ][line width=0.75]    (262,949.83) -- (319,949.61) ;
\draw [shift={(321,949.6)}, rotate = 179.77] [color={rgb, 255:red, 0; green, 0; blue, 0 }  ,draw opacity=1 ][line width=0.75]    (10.93,-4.9) .. controls (6.95,-2.3) and (3.31,-0.67) .. (0,0) .. controls (3.31,0.67) and (6.95,2.3) .. (10.93,4.9)   ;
\draw [shift={(262,949.83)}, rotate = 179.77] [color={rgb, 255:red, 0; green, 0; blue, 0 }  ,draw opacity=1 ][line width=0.75]    (0,5.59) -- (0,-5.59)   ;
\draw    (210.87,947.83) .. controls (171.89,901.47) and (255.06,900.51) .. (216.53,946.27) ;
\draw [shift={(215.32,947.67)}, rotate = 311.29] [color={rgb, 255:red, 0; green, 0; blue, 0 }  ][line width=0.75]    (10.93,-4.9) .. controls (6.95,-2.3) and (3.31,-0.67) .. (0,0) .. controls (3.31,0.67) and (6.95,2.3) .. (10.93,4.9)   ;
\draw    (429.87,896.33) .. controls (390.89,849.97) and (474.06,849.01) .. (435.53,894.77) ;
\draw [shift={(434.32,896.17)}, rotate = 311.29] [color={rgb, 255:red, 0; green, 0; blue, 0 }  ][line width=0.75]    (10.93,-4.9) .. controls (6.95,-2.3) and (3.31,-0.67) .. (0,0) .. controls (3.31,0.67) and (6.95,2.3) .. (10.93,4.9)   ;
\draw    (431.49,998.83) .. controls (479.11,999.08) and (511.18,979.4) .. (513.2,950.84) ;
\draw [shift={(513.29,949.08)}, rotate = 91.57] [color={rgb, 255:red, 0; green, 0; blue, 0 }  ][line width=0.75]    (10.93,-4.9) .. controls (6.95,-2.3) and (3.31,-0.67) .. (0,0) .. controls (3.31,0.67) and (6.95,2.3) .. (10.93,4.9)   ;
\draw    (513.29,949.08) .. controls (464.87,949.08) and (432.79,968.29) .. (431.53,997.06) ;
\draw [shift={(431.49,998.83)}, rotate = 270] [color={rgb, 255:red, 0; green, 0; blue, 0 }  ][line width=0.75]    (10.93,-4.9) .. controls (6.95,-2.3) and (3.31,-0.67) .. (0,0) .. controls (3.31,0.67) and (6.95,2.3) .. (10.93,4.9)   ;
\draw    (510.87,947.33) .. controls (471.89,900.97) and (555.06,900.01) .. (516.53,945.77) ;
\draw [shift={(515.32,947.17)}, rotate = 311.29] [color={rgb, 255:red, 0; green, 0; blue, 0 }  ][line width=0.75]    (10.93,-4.9) .. controls (6.95,-2.3) and (3.31,-0.67) .. (0,0) .. controls (3.31,0.67) and (6.95,2.3) .. (10.93,4.9)   ;
\draw    (353.17,950.33) .. controls (400.55,949.97) and (431.99,929.64) .. (432.79,900.58) ;
\draw [shift={(350.99,950.33)}, rotate = 0.29] [color={rgb, 255:red, 0; green, 0; blue, 0 }  ][line width=0.75]    (10.93,-4.9) .. controls (6.95,-2.3) and (3.31,-0.67) .. (0,0) .. controls (3.31,0.67) and (6.95,2.3) .. (10.93,4.9)   ;
\draw    (430.58,900.6) .. controls (382.42,901.19) and (350.99,921.03) .. (350.99,950.33) ;
\draw [shift={(432.79,900.58)}, rotate = 180] [color={rgb, 255:red, 0; green, 0; blue, 0 }  ][line width=0.75]    (10.93,-4.9) .. controls (6.95,-2.3) and (3.31,-0.67) .. (0,0) .. controls (3.31,0.67) and (6.95,2.3) .. (10.93,4.9)   ;
\draw    (350.99,950.33) .. controls (351.2,979.01) and (397.78,999.76) .. (429.56,998.91) ;
\draw [shift={(431.49,998.83)}, rotate = 176.82] [color={rgb, 255:red, 0; green, 0; blue, 0 }  ][line width=0.75]    (10.93,-4.9) .. controls (6.95,-2.3) and (3.31,-0.67) .. (0,0) .. controls (3.31,0.67) and (6.95,2.3) .. (10.93,4.9)   ;
\draw    (350.99,950.33) .. controls (379.62,950.59) and (428.69,969.33) .. (431.38,997.12) ;
\draw [shift={(431.49,998.83)}, rotate = 268.43] [color={rgb, 255:red, 0; green, 0; blue, 0 }  ][line width=0.75]    (10.93,-4.9) .. controls (6.95,-2.3) and (3.31,-0.67) .. (0,0) .. controls (3.31,0.67) and (6.95,2.3) .. (10.93,4.9)   ;

\draw (112.8,1017.83) node [anchor=north west][inner sep=0.75pt]    {$(\mathcal{I} ,\mathrm{d})$};
\draw (74.45,913.05) node [anchor=north west][inner sep=0.75pt]  [rotate=-326.43]  {$\mathit{r}_{1}$};
\draw (125,845.3) node [anchor=north west][inner sep=0.75pt]    {$\mathit{r}_{2}$};
\draw (207,894.8) node [anchor=north west][inner sep=0.75pt]    {$\mathit{r}_{3}$};
\draw (145.8,947.6) node [anchor=north west][inner sep=0.75pt]  [rotate=-330.23]  {$\mathit{r}_{4}$};
\draw (188.69,991.43) node [anchor=north west][inner sep=0.75pt]  [rotate=-323.02]  {$\mathit{r}_{5}$};
\draw (82.59,973.61) node [anchor=north west][inner sep=0.75pt]  [rotate=-31.21]  {$\mathit{r}_{6}$};
\draw (404.8,1018.33) node [anchor=north west][inner sep=0.75pt]    {$(\mathcal{I}_{\mathcal{I}} ,\mathrm{d})$};
\draw (358.21,899.45) node [anchor=north west][inner sep=0.75pt]  [rotate=-329.04]  {$\mathit{r}_{1}$};
\draw (414.5,840.8) node [anchor=north west][inner sep=0.75pt]    {$\mathit{r}_{2} ,\textcolor[rgb]{0.82,0.01,0.11}{r_{2}^{-}}$};
\draw (407.51,947.9) node [anchor=north west][inner sep=0.75pt]  [rotate=-25.19]  {$\mathit{r}_{6}$};
\draw (497,891.8) node [anchor=north west][inner sep=0.75pt]    {$\mathit{r}_{3} ,\textcolor[rgb]{0.82,0.01,0.11}{r}\textcolor[rgb]{0.82,0.01,0.11}{_{3}^{-}}$};
\draw (438.87,945.7) node [anchor=north west][inner sep=0.75pt]  [rotate=-336.02]  {$\mathit{r}_{4} ,\textcolor[rgb]{0.82,0.01,0.11}{r}\textcolor[rgb]{0.82,0.01,0.11}{_{5}^{-}}$};
\draw (470.22,992.46) node [anchor=north west][inner sep=0.75pt]  [rotate=-333.16]  {$\mathit{r}_{5} ,\textcolor[rgb]{0.82,0.01,0.11}{r}\textcolor[rgb]{0.82,0.01,0.11}{_{4}^{-}}$};
\draw (368.11,977.87) node [anchor=north west][inner sep=0.75pt]  [rotate=-29.84]  {$\textcolor[rgb]{0.82,0.01,0.11}{r}\textcolor[rgb]{0.82,0.01,0.11}{_{6}^{-}}$};
\draw (390.46,920.33) node [anchor=north west][inner sep=0.75pt]  [rotate=-328.1]  {$\textcolor[rgb]{0.82,0.01,0.11}{r}\textcolor[rgb]{0.82,0.01,0.11}{_{1}^{-}}$};

\end{tikzpicture}

\end{center}

Let $\trans{\extI}$ be the described transformation, mapping $(\interI, \elD)$ to
$(\interI_{\extI}, \elD)$.
The proposition follows in a similar pattern to~\cref{prop:bISelf-game-reduction}:

\begin{proposition}\label{prop:bI-game-reduction}
    Let $k \in \Nomega$ and let $\DL{L}$ be a DL satisfying $\ALC \subseteq \DL{L} \subseteq \ALCOb$.
    Then Duplicator has a winning strategy in a $k$-round $\DL{L}\extI(\vocabV)$-bisimulation game
    on $(\interI, \elD ; \interJ, \elE)$
    iff she has a winning strategy in a $k$-round $\DL{L}(\vocabV^{\extI})$-bisimulation game on
    $(\trans{\extI}(\interI), \elD; \trans{\extI}(\interJ), \elE)$.
\end{proposition}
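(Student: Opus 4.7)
The plan is to mimic the proof of \cref{prop:bISelf-game-reduction}. First I apply \cref{fact:games-bisimulations-and-concept-eq} to both sides to reduce the claim to the equivalence: there exists a $\DL{L}\extI(\vocabV)$-$k$-bisimulation $\bisimRelSym{}$ between $\structPairA$ and $\structPairB$ with $\bisimRel{}$ iff there exists a $\DL{L}(\vocabV^{\extI})$-$k$-bisimulation $\bisimRelSym{\extI}$ between $\transPairA{\extI}$ and $\transPairB{\extI}$ with $\bisimRel{\extI}$.

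In both directions I take the same underlying relation, \ie $\bisimRelSym{\extI} \triangleq \bisimRelSym{}$, and verify that it satisfies the bisimulation clauses for the target logic. Because $\trans{\extI}$ leaves the interpretations of individual names, concept names and original role names untouched and only enriches the signature with fresh role names $\roler_{\textit{inv}}$ whose semantics is $(\roler^-)^{\interI}$, clauses $(a)$ and $(d)$ of \cref{def:bisim-relation} transfer without change, and clauses $(b)$, $(c)$ restricted to original role names in $\sigma_r$ also transfer verbatim.

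The substantive content is the interplay between clause $(f)$ of the source bisimulation and the forth/back conditions for the newly added role names on the target. For the ($\Longrightarrow$) direction the only extra obligation is to verify $(b)$ and $(c)$ for each added role $\roler_{\textit{inv}} \in \sigmaI_r$; unfolding the definition of $\roler_{\textit{inv}}$, the hypothesis $(\elD, \elD') \in \roler_{\textit{inv}}^{\interI_{\extI}}$ becomes $(\elD', \elD) \in \roler^{\interI}$, so the required witness $\elE'$ is precisely the one supplied by clause $(f)$ (inverses) of the source $\DL{L}\extI$-bisimulation. Dually, for the ($\Longleftarrow$) direction the only clause needing attention is $(f)$ of $\bisimRelSym{}$; rewriting $(\elD', \elD) \in \roler^{\interI}$ as $(\elD, \elD') \in \roler_{\textit{inv}}^{\interI_{\extI}}$ brings it under clauses $(b)$, $(c)$ of $\bisimRelSym{\extI}$ applied to the role $\roler_{\textit{inv}}$.

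The main obstacle, though it is bookkeeping rather than a real difficulty, is handling the case $\extb \in \Phi$. Clause $(g)$ on the source side requires matching both forward and backward $\sigma_r$-label sets around the transition, while clause $(g)$ on the target side only demands equality of forward $\sigmaI_r$-label sets. Since $\roler \mapsto \roler_{\textit{inv}}$ is a bijection between $\sigma_r$ and $\sigmaI_r \setminus \sigma_r$ whose interpretation is the converse of $\roler$, the backward $\sigma_r$-label set on the source coincides with the slice of the forward $\sigmaI_r$-label set indexed by inverse names on the target, so the two boolean conditions match up and the argument goes through.
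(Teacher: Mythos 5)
Your proof is correct and follows essentially the same route as the paper's: reduce via \cref{fact:games-bisimulations-and-concept-eq} to the existence of bisimulations, reuse the identical relation on both sides, and discharge the conditions for the fresh roles by unfolding $\roler_{\textit{inv}}$ into the inverse clause $(f)$ of \cref{def:bisim-relation}. You are in fact slightly more thorough than the paper, whose proof only discusses clauses $(b)$ and $(c)$ and silently skips the interaction with clause $(g)$ when $\extb$ is present in $\DL{L}$; your observation that the backward $\sigma_r$-label set on the source coincides with the inverse-indexed slice of the forward $\sigmaI_r$-label set on the target is exactly the bookkeeping needed to close that case.
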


\begin{proof}
    By applying~\cref{fact:games-bisimulations-and-concept-eq} to both sides,
    it is sufficient to prove the following:

    \textit{
        There is a $\DL{L}_\extI(\vocabV)$-$k$-bisimulation
        $\bisimRelSym{}$ between $\structPairA$ and $\structPairB$ such that $\bisimRel{}$
        iff
        there is a $\DL{L}(\vocabV^{\extI})$-$k$-bisimulation
        $\bisimRelSym{\extI}$ between $\transPairA{\extI}$ and $\transPairB{\extI}$ such that $\bisimRel{\extI}$
    }

\noindent($\Longrightarrow$) Let us assume $\bisimRelSym{}$ is the bisimulation from
implication predecessor and take $\bisimRelSym{\extI} \triangleq \bisimRelSym{}$.
Notice that the only way in which $\trans{\extI}$-reduced interpretations differ
are the added fresh inverse roles, so it is sufficient to prove that cases
$(b)$ and $(c)$ from~\cref{def:bisim-relation} hold for $\sigma_r^{\extI}$ roles.
The case for roles in $\sigma_r$ is trivial, as there were no changes to them made and we have
that $\bisimRelSym{\extI} = \bisimRelSym{}$. Take
$a, a' \in \interI, b \in \interJ, \rolerI \in \sigma_r^{\extI} \setminus \sigma_r$
and assume that $\bisimRelAB{\extI}$ and $(a', a) \in \rolerIInv$. Let us consider the case $(b)$,
case $(c)$ will follow analogously.
We need to show that there exists $b' \in \interJ$ s.t. $(b', b) \in \rolerJInv$
and $\bisimRelOf{\extI}{aa'}{bb'}$. By construction, $\rolerI$ has a corresponding role
$\roler$ s.t. $(a, a') \in \roler^{\interI}$. From $\bisimRelAB{}$ assumption,
we can extract $b'$ s.t. $(b, b') \in \roler^{\interJ}$. By definition of the construction,
this implies that $(b', b) \in \rolerJInv$ which closes the proof.

\noindent($\Longleftarrow$) Proceeds similarly as the proof above.

\end{proof}

\subsection{Safe boolean roles combinations}

We focus next on safe boolean combinations of roles. Given a finite $\sigma_r
\subseteq \Rlang$, let $\sigma_r^{\extb}$ be composed of role names having the
form $\roler_S$, where $S$ is any non-empty subset of $\sigma_r$. By the
\emph{b-enrichment} of a $\vocabV \triangleq (\sigma_i, \sigma_c,
\sigma_r)$-interpretation $\interI$ we mean the $\vocabV^{\extb} \triangleq (\sigma_i,
\sigma_c, \sigma_r^{\extb})$-interpretation $\interI_{\extb}$, where the
$(\sigma_i, \sigma_c, \emptyset)$-reducts of $\interI$ and $\interI_{\extb}$
are equal and the interpretation of role names $\roler_S \in \sigma_r^{\extb}$
is defined as
$
    \{
        (\elD, \elE)
        \mid
        S = \{ \roler \in \sigma_r \mid (\elD, \elE) \in \roler^{\interI} \}
    \}
$.

\begin{center}

\tikzset{every picture/.style={line width=0.75pt}} 

\begin{tikzpicture}[x=0.75pt,y=0.75pt,yscale=-1,xscale=1]

\draw    (180.09,800.73) .. controls (131.93,801.31) and (100.5,821.16) .. (100.5,850.46) ;
\draw [shift={(182.3,800.71)}, rotate = 180] [color={rgb, 255:red, 0; green, 0; blue, 0 }  ][line width=0.75]    (10.93,-4.9) .. controls (6.95,-2.3) and (3.31,-0.67) .. (0,0) .. controls (3.31,0.67) and (6.95,2.3) .. (10.93,4.9)   ;
\draw    (100.5,850.46) -- (179.79,801.02) ;
\draw [shift={(181.49,799.96)}, rotate = 148.06] [color={rgb, 255:red, 0; green, 0; blue, 0 }  ][line width=0.75]    (10.93,-4.9) .. controls (6.95,-2.3) and (3.31,-0.67) .. (0,0) .. controls (3.31,0.67) and (6.95,2.3) .. (10.93,4.9)   ;
\draw    (102.21,851.5) -- (119.01,861.78) -- (181.49,899.96) ;
\draw [shift={(100.5,850.46)}, rotate = 31.43] [color={rgb, 255:red, 0; green, 0; blue, 0 }  ][line width=0.75]    (10.93,-4.9) .. controls (6.95,-2.3) and (3.31,-0.67) .. (0,0) .. controls (3.31,0.67) and (6.95,2.3) .. (10.93,4.9)   ;
\draw    (179.87,797.46) .. controls (140.89,751.1) and (224.06,750.14) .. (185.53,795.89) ;
\draw [shift={(184.32,797.3)}, rotate = 311.29] [color={rgb, 255:red, 0; green, 0; blue, 0 }  ][line width=0.75]    (10.93,-4.9) .. controls (6.95,-2.3) and (3.31,-0.67) .. (0,0) .. controls (3.31,0.67) and (6.95,2.3) .. (10.93,4.9)   ;
\draw    (181.49,899.96) .. controls (229.11,900.21) and (261.18,880.52) .. (263.2,851.97) ;
\draw [shift={(263.29,850.21)}, rotate = 91.57] [color={rgb, 255:red, 0; green, 0; blue, 0 }  ][line width=0.75]    (10.93,-4.9) .. controls (6.95,-2.3) and (3.31,-0.67) .. (0,0) .. controls (3.31,0.67) and (6.95,2.3) .. (10.93,4.9)   ;
\draw    (263.29,850.21) .. controls (214.87,850.21) and (182.79,869.42) .. (181.53,898.19) ;
\draw [shift={(181.49,899.96)}, rotate = 270] [color={rgb, 255:red, 0; green, 0; blue, 0 }  ][line width=0.75]    (10.93,-4.9) .. controls (6.95,-2.3) and (3.31,-0.67) .. (0,0) .. controls (3.31,0.67) and (6.95,2.3) .. (10.93,4.9)   ;
\draw [color={rgb, 255:red, 0; green, 0; blue, 0 }  ,draw opacity=1 ][line width=0.75]    (312,850.46) -- (369,850.24) ;
\draw [shift={(371,850.23)}, rotate = 179.77] [color={rgb, 255:red, 0; green, 0; blue, 0 }  ,draw opacity=1 ][line width=0.75]    (10.93,-4.9) .. controls (6.95,-2.3) and (3.31,-0.67) .. (0,0) .. controls (3.31,0.67) and (6.95,2.3) .. (10.93,4.9)   ;
\draw [shift={(312,850.46)}, rotate = 179.77] [color={rgb, 255:red, 0; green, 0; blue, 0 }  ,draw opacity=1 ][line width=0.75]    (0,5.59) -- (0,-5.59)   ;
\draw    (100.5,850.46) .. controls (148.12,850.71) and (180.19,831.02) .. (182.21,802.47) ;
\draw [shift={(182.3,800.71)}, rotate = 91.57] [color={rgb, 255:red, 0; green, 0; blue, 0 }  ][line width=0.75]    (10.93,-4.9) .. controls (6.95,-2.3) and (3.31,-0.67) .. (0,0) .. controls (3.31,0.67) and (6.95,2.3) .. (10.93,4.9)   ;
\draw    (400.5,850.03) -- (479.79,800.59) ;
\draw [shift={(481.49,799.53)}, rotate = 148.06] [color={rgb, 255:red, 0; green, 0; blue, 0 }  ][line width=0.75]    (10.93,-4.9) .. controls (6.95,-2.3) and (3.31,-0.67) .. (0,0) .. controls (3.31,0.67) and (6.95,2.3) .. (10.93,4.9)   ;
\draw    (402.21,851.08) -- (419.01,861.35) -- (481.49,899.53) ;
\draw [shift={(400.5,850.03)}, rotate = 31.43] [color={rgb, 255:red, 0; green, 0; blue, 0 }  ][line width=0.75]    (10.93,-4.9) .. controls (6.95,-2.3) and (3.31,-0.67) .. (0,0) .. controls (3.31,0.67) and (6.95,2.3) .. (10.93,4.9)   ;
\draw    (479.87,797.03) .. controls (440.89,750.67) and (524.06,749.71) .. (485.53,795.47) ;
\draw [shift={(484.32,796.87)}, rotate = 311.29] [color={rgb, 255:red, 0; green, 0; blue, 0 }  ][line width=0.75]    (10.93,-4.9) .. controls (6.95,-2.3) and (3.31,-0.67) .. (0,0) .. controls (3.31,0.67) and (6.95,2.3) .. (10.93,4.9)   ;
\draw    (481.49,899.53) .. controls (529.11,899.78) and (561.18,880.1) .. (563.2,851.54) ;
\draw [shift={(563.29,849.78)}, rotate = 91.57] [color={rgb, 255:red, 0; green, 0; blue, 0 }  ][line width=0.75]    (10.93,-4.9) .. controls (6.95,-2.3) and (3.31,-0.67) .. (0,0) .. controls (3.31,0.67) and (6.95,2.3) .. (10.93,4.9)   ;
\draw    (563.29,849.78) .. controls (514.87,849.78) and (482.79,868.99) .. (481.53,897.76) ;
\draw [shift={(481.49,899.53)}, rotate = 270] [color={rgb, 255:red, 0; green, 0; blue, 0 }  ][line width=0.75]    (10.93,-4.9) .. controls (6.95,-2.3) and (3.31,-0.67) .. (0,0) .. controls (3.31,0.67) and (6.95,2.3) .. (10.93,4.9)   ;
\draw    (181.49,899.96) .. controls (234.46,933.86) and (281.06,910.67) .. (263.83,852) ;
\draw [shift={(263.29,850.21)}, rotate = 72.68] [color={rgb, 255:red, 0; green, 0; blue, 0 }  ][line width=0.75]    (10.93,-4.9) .. controls (6.95,-2.3) and (3.31,-0.67) .. (0,0) .. controls (3.31,0.67) and (6.95,2.3) .. (10.93,4.9)   ;
\draw    (260.87,848.63) .. controls (221.89,802.27) and (305.06,801.31) .. (266.53,847.07) ;
\draw [shift={(265.32,848.47)}, rotate = 311.29] [color={rgb, 255:red, 0; green, 0; blue, 0 }  ][line width=0.75]    (10.93,-4.9) .. controls (6.95,-2.3) and (3.31,-0.67) .. (0,0) .. controls (3.31,0.67) and (6.95,2.3) .. (10.93,4.9)   ;
\draw    (560.87,848.63) .. controls (521.89,802.27) and (605.06,801.31) .. (566.53,847.07) ;
\draw [shift={(565.32,848.47)}, rotate = 311.29] [color={rgb, 255:red, 0; green, 0; blue, 0 }  ][line width=0.75]    (10.93,-4.9) .. controls (6.95,-2.3) and (3.31,-0.67) .. (0,0) .. controls (3.31,0.67) and (6.95,2.3) .. (10.93,4.9)   ;

\draw (162.8,929.46) node [anchor=north west][inner sep=0.75pt]    {$(\mathcal{I} ,\mathrm{d})$};
\draw (109.45,800.68) node [anchor=north west][inner sep=0.75pt]  [rotate=-326.43]  {$\mathit{r}_{1}$};
\draw (175,745.93) node [anchor=north west][inner sep=0.75pt]    {$\mathit{r}_{2}$};
\draw (190.8,850.23) node [anchor=north west][inner sep=0.75pt]  [rotate=-330.23]  {$\mathit{r}_{4}$};
\draw (230.14,896.51) node [anchor=north west][inner sep=0.75pt]  [rotate=-328.93]  {$\mathit{r}_{5}$};
\draw (132.59,874.24) node [anchor=north west][inner sep=0.75pt]  [rotate=-31.21]  {$\mathit{r} 7$};
\draw (150.45,846.68) node [anchor=north west][inner sep=0.75pt]  [rotate=-326.43]  {$\mathit{r}_{8}$};
\draw (139.45,830.68) node [anchor=north west][inner sep=0.75pt]  [rotate=-326.43]  {$\mathit{r}_{9}$};
\draw (462.8,930.03) node [anchor=north west][inner sep=0.75pt]    {$(\mathcal{I}_\extb,\mathrm{d})$};
\draw (400.45,825.25) node [anchor=north west][inner sep=0.75pt]  [rotate=-326.43]  {$\textcolor[rgb]{0.82,0.01,0.11}{\{\mathit{r}_{1} ,\mathit{r}_{8} ,\mathit{r}_{9}\}}$};
\draw (469,742.5) node [anchor=north west][inner sep=0.75pt]    {$\textcolor[rgb]{0.82,0.01,0.11}{\{\mathit{r}_{2}\}}$};
\draw (522.39,902.18) node [anchor=north west][inner sep=0.75pt]  [rotate=-325.84]  {$\textcolor[rgb]{0.82,0.01,0.11}{\{\mathit{r}_{5} ,\mathit{r}_{6}\}}$};
\draw (428.59,871.81) node [anchor=north west][inner sep=0.75pt]  [rotate=-31.21]  {$\textcolor[rgb]{0.82,0.01,0.11}{\{\mathit{r}_{7}\}}$};
\draw (246.14,919.51) node [anchor=north west][inner sep=0.75pt]  [rotate=-328.93]  {$\mathit{r}_{6}$};
\draw (257,797.6) node [anchor=north west][inner sep=0.75pt]    {$\mathit{r}_{3}$};
\draw (549,794.6) node [anchor=north west][inner sep=0.75pt]    {$\textcolor[rgb]{0.82,0.01,0.11}{\{}\mathit{\textcolor[rgb]{0.82,0.01,0.11}{r}}\textcolor[rgb]{0.82,0.01,0.11}{_{3}}\textcolor[rgb]{0.82,0.01,0.11}{\}}$};
\draw (482.39,849.18) node [anchor=north west][inner sep=0.75pt]  [rotate=-325.84]  {$\textcolor[rgb]{0.82,0.01,0.11}{\{}\textcolor[rgb]{0.82,0.01,0.11}{\mathit{r}}\textcolor[rgb]{0.82,0.01,0.11}{_{4}\}}$};

\end{tikzpicture}

\end{center}

Let $\trans{\extb}$ be the
described transformation, mapping $(\interI, \elD)$ to $(\interI_{\extb}, \elD)$. Once more,
the following proposition is straightforward:

\begin{proposition}\label{prop:b-game-reduction}
    Let $k \in \Nomega$ and let $\DL{L}$ be a DL satisfying $\ALC \subseteq \DL{L} \subseteq \ALCO$.
    Then Duplicator has a winning strategy in a $k$-round
    $\DL{L}\mathit{b}(\vocabV)$-bisimulation-game on $(\interI, \elD ;
    \interJ, \elE)$ iff she has a winning strategy in a $k$-round
    $\DL{L}(\vocabV^{\extb})$-bisimulation-game on $(\trans{\extb}(\interI),
    \elD ; \trans{\extb}(\interJ), \elE)$.
\end{proposition}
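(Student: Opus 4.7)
The plan is to mirror the templates of Propositions~\ref{prop:bISelf-game-reduction} and~\ref{prop:bI-game-reduction}: first invoke Fact~\ref{fact:games-bisimulations-and-concept-eq} on both sides, reducing the statement to the equivalence of the existence of a $\DL{L}\extb(\vocabV)$-$k$-bisimulation $\bisimRelSym{}$ between $\structPairA$ and $\structPairB$ with $\bisimRel{}$ and the existence of a $\DL{L}(\vocabV^{\extb})$-$k$-bisimulation $\bisimRelSym{\extb}$ between $\transPairA{\extb}$ and $\transPairB{\extb}$ with $\bisimRel{\extb}$. Since $\trans{\extb}$ preserves $\DeltaI, \DeltaJ$ and leaves concept and nominal interpretations intact (only the role signature is replaced with $\sigma_r^{\extb}$), in both directions I take the witnessing relations to coincide: $\bisimRelSym{\extb} \triangleq \bisimRelSym{}$ and vice versa. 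Clauses $(a)$ and, when $\extO \in \DL{L}$, $(d)$ of Definition~\ref{def:bisim-relation} are then preserved trivially; clauses $(e)$ and $(f)$ do not arise, since $\DL{L} \subseteq \ALCO$.

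For the $(\Longrightarrow)$ direction, the content is to verify clauses $(b), (c)$ with respect to the fresh roles $\roler_S \in \sigma_r^{\extb}$. Suppose $\bisimRelSym{}(s\elD, s'\elE)$ and $(\elD, \elD') \in \roler_S^{\interI_{\extb}}$. By construction of the $\extb$-enrichment we have $S = \{ \rolep \in \sigma_r \mid (\elD, \elD') \in \rolep^{\interI} \}$, and this set is non-empty since a $\roler_S$-edge demands a non-empty label set; picking any $\rolep \in S$ and applying clause $(g)$ of $\bisimRelSym{}$ produces $\elE' \in \DeltaJ$ with $\bisimRelSym{}(s\elD\elD', s'\elE\elE')$ and $\{ \rolep \in \sigma_r \mid (\elE, \elE') \in \rolep^{\interJ} \} = S$, whence $(\elE, \elE') \in \roler_S^{\interJ_{\extb}}$ by definition. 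The backward clause $(c)$ is symmetric.

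For the $(\Longleftarrow)$ direction, the substance is to recover clause $(g)$ together with the original forward and backward clauses over $\sigma_r$. Given $\bisimRelSym{\extb}(s\elD, s'\elE)$ and $(\elD, \elD') \in \roler^{\interI}$ for some $\roler \in \sigma_r$, form $S \triangleq \{ \rolep \in \sigma_r \mid (\elD, \elD') \in \rolep^{\interI} \}$, which contains $\roler$ and so is non-empty, and observe that $(\elD, \elD') \in \roler_S^{\interI_{\extb}}$. Clause $(b)$ of $\bisimRelSym{\extb}$ applied to the role $\roler_S \in \sigma_r^{\extb}$ yields $\elE' \in \DeltaJ$ satisfying $\bisimRelSym{\extb}(s\elD\elD', s'\elE\elE')$ and $(\elE, \elE') \in \roler_S^{\interJ_{\extb}}$, which unfolds to $\{ \rolep \in \sigma_r \mid (\elE, \elE') \in \rolep^{\interJ} \} = S$. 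Since $\roler \in S$, this gives $(\elE, \elE') \in \roler^{\interJ}$, establishing clauses $(b)$ and $(g)$ simultaneously; clause $(c)$ follows symmetrically.

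No step is conceptually difficult; the only subtlety is the bookkeeping connecting the ``role-type set'' formulation of clause $(g)$ with the composite roles $\roler_S$ of the $\extb$-enrichment, and ensuring $S$ is non-empty whenever one passes between representations, which is always forced by the witnessing $\roler$.
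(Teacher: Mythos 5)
Your proof is correct and follows exactly the route the paper intends: it invokes Fact~\ref{fact:games-bisimulations-and-concept-eq}, takes the witnessing bisimulations to be identical, and translates between clause $(g)$ and the composite roles $\roler_S$ via the observation that $\roler_S$-edges are precisely the 2-types. The paper itself only sketches this ("left as an exercise for the reader"), so your writeup is a faithful completion of the argument the paper gestures at, including the correct handling of the non-emptiness of $S$ and the restriction to clauses $(a)$--$(d)$ and $(g)$ forced by $\DL{L} \subseteq \ALCO$.
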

\begin{proof}

    The key observation here is that safe boolean roles combinations are
    giving us the power to define any 2-type as a step in the bisimulation.
    Henceforth, we convert the interpretation such that the arrows represent exactly
    2-types and therefore a move in the game can cover any move that
    could have been expressed by roles combinations. A detailed proof is
    very similar to~\cref{prop:bISelf-game-reduction} and \cref{prop:bI-game-reduction}
    and thus shall be left as an exercise for the reader.
\end{proof}

\subsection{Nominals}

Finally, we proceed with the case of nominals. In this case, we need to be extra
careful, as the comonads introduced in the next section will act as unravelling
on interpretations, and we do not want to create multiple copies of a nominal.
Recall that the Gaifman graph $\mathsf{G}_{\interI} = (V_{\interI},E_{\interI})$
of an interpretation $\interI$ is a simple undirected graph whose nodes are
domain elements from $\DeltaI$ and an edge exists between two nodes when there
is a role that connects them in $\interI$.

Let $
    \sigmaO_c \triangleq \sigma_c \cup \{ \conceptC_{\indvo, \roler} \mid
    \indvo \in \sigma_i, \roler \in \sigma_r \}
$
and $\sigmaO_r \triangleq \sigma_r \cup \{ \roler_{\indvo} \mid \indvo \in \sigma_i \}$.
By the
\emph{nominal-enrichment} of a $\vocabV \triangleq (\sigma_i, \sigma_c, \sigma_r)$-interpretation
$\interI$ we mean the $\vocabV^{\extO} \triangleq(\sigma_i, \sigmaO_c, \sigmaO_r)$-interpretation
$\interI_{\extO}$ defined in the following steps. We encourage the reader to
consult the example depicted below while going through the steps:

\begin{itemize}\itemsep0em
    \item \textbf{(A)}
        First, we get rid of unreachable elements from $\interI$. More
        precisely, let $\interJ$ to be the substructure of $\interI$ restricted
        to the set of all elements reachable in (finitely-many steps) from
        $\elD$ in $\mathsf{G}_\interI$. Without the loss of generality, we can
        assume that all $\indvo^{\interI}$ for $\indvo \in \sigma_i$ are reachable.
    \item \redText{\textbf{(B)}}
        For each pair $(\elD, \indvo) \in \DeltaI \times \sigma_i$ such that
        there is a $\roler$-connection from $\elD$ to
        $\indvo^{\interI}$, we insert a ``trampoline'' element
        labelled by the unique concept name $\redText{\conceptC_{\indvo, \roler}}$ and we
        \redText{$\roler$-connect} it with $\elD$.

        Trampoline elements are used to bookkeep information about connections
        between elements and named elements. Let $\interJ$ be the resulting
        interpretation.
    \item \textbf{(C)}
        We next divide $\interJ$ into components. Let $\interJ_\indvo$ for
        $\indvo \in \sigma_i \cup \{ \elD \}$ (with $\elD$ being the root element)
        be induced subinterpretations of $\interJ$ obtained by removing all elements
        $\{ \indvo^{\interI} \mid \indvo \in \sigma_i \}$ from $\interJ$ except
        the element mentioned in the subscript (that serve the role of
        distinguished elements of the components). In each component $\interJ_\indvo$,
        we take only elements reachable from $\indvo$. Take $\interJ'$ to be the
        disjoint sum of the components.

    \item \orangeText{\textbf{(D)}}
        In the last step, we will link components. For all $\indvo \in \sigma_i$, take
        $\textrm{dist}_\indvo$ to be the length of the shortest path from
        $\elD$ to $\indvo^{\interI}$ in $\mathsf{G}_\interI$. We will
        connect $\elD$ to $\indvo^{\interJ'}$ by a dummy path of length
        precisely $\textrm{dist}_\indvo$. Thus, we introduce dummy elements
        $\orangeText{\elD_1^{\indvo}, \ldots, \elD_{\textrm{dist}_\indvo-1}^{\indvo}}$
        to $\Delta^{\interJ'}$ and employ the fresh role name $\orangeText{\roler_{\indvo}}$,
        whose interpretation will contain precisely the pairs $\orangeText{(\elD,
        \elD_1^{\indvo}), (\elD_1^{\indvo}, \elD_2^{\indvo}),
        \ldots, (\elD_{\textrm{dist}_\indvo-1}^{\indvo},
        \indvo^{\interJ'})}$. The resulting interpretation is the desired
        $\interI_{\extO}$.
\end{itemize}

\begin{center}

\tikzset{every picture/.style={line width=0.75pt}} 

\begin{tikzpicture}[x=0.75pt,y=0.75pt,yscale=-1,xscale=1]

\draw    (63.81,1633.46) -- (138.96,1587.47) ;
\draw [shift={(140.67,1586.43)}, rotate = 148.54] [color={rgb, 255:red, 0; green, 0; blue, 0 }  ][line width=0.75]    (10.93,-4.9) .. controls (6.95,-2.3) and (3.31,-0.67) .. (0,0) .. controls (3.31,0.67) and (6.95,2.3) .. (10.93,4.9)   ;
\draw    (63.81,1633.46) -- (138.95,1678.52) ;
\draw [shift={(140.67,1679.55)}, rotate = 210.95] [color={rgb, 255:red, 0; green, 0; blue, 0 }  ][line width=0.75]    (10.93,-4.9) .. controls (6.95,-2.3) and (3.31,-0.67) .. (0,0) .. controls (3.31,0.67) and (6.95,2.3) .. (10.93,4.9)   ;
\draw    (139.13,1584.1) .. controls (102.14,1540.93) and (181.06,1540.04) .. (144.5,1582.64) ;
\draw [shift={(143.36,1583.95)}, rotate = 311.83] [color={rgb, 255:red, 0; green, 0; blue, 0 }  ][line width=0.75]    (10.93,-4.9) .. controls (6.95,-2.3) and (3.31,-0.67) .. (0,0) .. controls (3.31,0.67) and (6.95,2.3) .. (10.93,4.9)   ;
\draw    (140.67,1679.55) .. controls (185.86,1679.78) and (216.29,1661.45) .. (218.21,1634.86) ;
\draw [shift={(218.29,1633.22)}, rotate = 91.6] [color={rgb, 255:red, 0; green, 0; blue, 0 }  ][line width=0.75]    (10.93,-4.9) .. controls (6.95,-2.3) and (3.31,-0.67) .. (0,0) .. controls (3.31,0.67) and (6.95,2.3) .. (10.93,4.9)   ;
\draw    (218.29,1633.22) .. controls (172.35,1633.22) and (141.9,1651.11) .. (140.71,1677.9) ;
\draw [shift={(140.67,1679.55)}, rotate = 270] [color={rgb, 255:red, 0; green, 0; blue, 0 }  ][line width=0.75]    (10.93,-4.9) .. controls (6.95,-2.3) and (3.31,-0.67) .. (0,0) .. controls (3.31,0.67) and (6.95,2.3) .. (10.93,4.9)   ;
\draw    (216,1631.75) .. controls (179.01,1588.58) and (257.93,1587.69) .. (221.37,1630.29) ;
\draw [shift={(220.23,1631.6)}, rotate = 311.83] [color={rgb, 255:red, 0; green, 0; blue, 0 }  ][line width=0.75]    (10.93,-4.9) .. controls (6.95,-2.3) and (3.31,-0.67) .. (0,0) .. controls (3.31,0.67) and (6.95,2.3) .. (10.93,4.9)   ;
\draw   (140.67,1679.55) -- (166.86,1728.13) -- (114.47,1728.13) -- cycle ;

\draw [color={rgb, 255:red, 208; green, 2; blue, 27 }  ,draw opacity=1 ] [dash pattern={on 3.75pt off 1.5pt}]  (347.55,1633.46) .. controls (338.72,1669.93) and (341.28,1685.86) .. (358.35,1712.78) ;
\draw [shift={(359.42,1714.44)}, rotate = 237.16] [color={rgb, 255:red, 208; green, 2; blue, 27 }  ,draw opacity=1 ][line width=0.75]    (10.93,-4.9) .. controls (6.95,-2.3) and (3.31,-0.67) .. (0,0) .. controls (3.31,0.67) and (6.95,2.3) .. (10.93,4.9)   ;
\draw    (347.55,1633.46) -- (422.71,1587.47) ;
\draw [shift={(424.41,1586.43)}, rotate = 148.54] [color={rgb, 255:red, 0; green, 0; blue, 0 }  ][line width=0.75]    (10.93,-4.9) .. controls (6.95,-2.3) and (3.31,-0.67) .. (0,0) .. controls (3.31,0.67) and (6.95,2.3) .. (10.93,4.9)   ;
\draw    (347.55,1633.46) -- (422.7,1678.52) ;
\draw [shift={(424.41,1679.55)}, rotate = 210.95] [color={rgb, 255:red, 0; green, 0; blue, 0 }  ][line width=0.75]    (10.93,-4.9) .. controls (6.95,-2.3) and (3.31,-0.67) .. (0,0) .. controls (3.31,0.67) and (6.95,2.3) .. (10.93,4.9)   ;
\draw    (422.87,1584.1) .. controls (385.89,1540.93) and (464.81,1540.04) .. (428.25,1582.64) ;
\draw [shift={(427.1,1583.95)}, rotate = 311.83] [color={rgb, 255:red, 0; green, 0; blue, 0 }  ][line width=0.75]    (10.93,-4.9) .. controls (6.95,-2.3) and (3.31,-0.67) .. (0,0) .. controls (3.31,0.67) and (6.95,2.3) .. (10.93,4.9)   ;
\draw    (424.41,1679.55) .. controls (469.6,1679.78) and (500.03,1661.45) .. (501.96,1634.86) ;
\draw [shift={(502.04,1633.22)}, rotate = 91.6] [color={rgb, 255:red, 0; green, 0; blue, 0 }  ][line width=0.75]    (10.93,-4.9) .. controls (6.95,-2.3) and (3.31,-0.67) .. (0,0) .. controls (3.31,0.67) and (6.95,2.3) .. (10.93,4.9)   ;
\draw    (502.04,1633.22) .. controls (456.09,1633.22) and (425.65,1651.11) .. (424.45,1677.9) ;
\draw [shift={(424.41,1679.55)}, rotate = 270] [color={rgb, 255:red, 0; green, 0; blue, 0 }  ][line width=0.75]    (10.93,-4.9) .. controls (6.95,-2.3) and (3.31,-0.67) .. (0,0) .. controls (3.31,0.67) and (6.95,2.3) .. (10.93,4.9)   ;
\draw    (499.74,1631.75) .. controls (462.75,1588.58) and (541.67,1587.69) .. (505.11,1630.29) ;
\draw [shift={(503.97,1631.6)}, rotate = 311.83] [color={rgb, 255:red, 0; green, 0; blue, 0 }  ][line width=0.75]    (10.93,-4.9) .. controls (6.95,-2.3) and (3.31,-0.67) .. (0,0) .. controls (3.31,0.67) and (6.95,2.3) .. (10.93,4.9)   ;
\draw   (424.41,1679.55) -- (450.61,1728.13) -- (398.22,1728.13) -- cycle ;
\draw [color={rgb, 255:red, 208; green, 2; blue, 27 }  ,draw opacity=1 ] [dash pattern={on 3.75pt off 1.5pt}]  (502.04,1633.22) .. controls (514.17,1654.48) and (517.45,1669.51) .. (519.84,1696.62) ;
\draw [shift={(519.99,1698.3)}, rotate = 265.12] [color={rgb, 255:red, 208; green, 2; blue, 27 }  ,draw opacity=1 ][line width=0.75]    (10.93,-4.9) .. controls (6.95,-2.3) and (3.31,-0.67) .. (0,0) .. controls (3.31,0.67) and (6.95,2.3) .. (10.93,4.9)   ;
\draw [color={rgb, 255:red, 208; green, 2; blue, 27 }  ,draw opacity=1 ] [dash pattern={on 3.75pt off 1.5pt}]  (424.41,1679.55) .. controls (454.65,1685.45) and (463.45,1690.04) .. (480.76,1709.51) ;
\draw [shift={(481.83,1710.72)}, rotate = 228.65] [color={rgb, 255:red, 208; green, 2; blue, 27 }  ,draw opacity=1 ][line width=0.75]    (10.93,-4.9) .. controls (6.95,-2.3) and (3.31,-0.67) .. (0,0) .. controls (3.31,0.67) and (6.95,2.3) .. (10.93,4.9)   ;
\draw [color={rgb, 255:red, 208; green, 2; blue, 27 }  ,draw opacity=1 ] [dash pattern={on 3.75pt off 1.5pt}]  (184.37,1840.01) .. controls (155.14,1864.14) and (147.52,1878.44) .. (144.9,1910.04) ;
\draw [shift={(144.74,1911.99)}, rotate = 274.29] [color={rgb, 255:red, 208; green, 2; blue, 27 }  ,draw opacity=1 ][line width=0.75]    (10.93,-4.9) .. controls (6.95,-2.3) and (3.31,-0.67) .. (0,0) .. controls (3.31,0.67) and (6.95,2.3) .. (10.93,4.9)   ;
\draw    (184.37,1840.01) -- (217.18,1910.51) ;
\draw [shift={(218.02,1912.32)}, rotate = 245.05] [color={rgb, 255:red, 0; green, 0; blue, 0 }  ][line width=0.75]    (10.93,-4.9) .. controls (6.95,-2.3) and (3.31,-0.67) .. (0,0) .. controls (3.31,0.67) and (6.95,2.3) .. (10.93,4.9)   ;
\draw    (223.13,1915.34) .. controls (273.8,1941.97) and (207.04,1969.5) .. (218.88,1918.59) ;
\draw [shift={(219.27,1917.02)}, rotate = 104.37] [color={rgb, 255:red, 0; green, 0; blue, 0 }  ][line width=0.75]    (10.93,-4.9) .. controls (6.95,-2.3) and (3.31,-0.67) .. (0,0) .. controls (3.31,0.67) and (6.95,2.3) .. (10.93,4.9)   ;
\draw    (420.03,1850.12) .. controls (383.04,1806.95) and (461.96,1806.06) .. (425.4,1848.66) ;
\draw [shift={(424.25,1849.97)}, rotate = 311.83] [color={rgb, 255:red, 0; green, 0; blue, 0 }  ][line width=0.75]    (10.93,-4.9) .. controls (6.95,-2.3) and (3.31,-0.67) .. (0,0) .. controls (3.31,0.67) and (6.95,2.3) .. (10.93,4.9)   ;
\draw [color={rgb, 255:red, 208; green, 2; blue, 27 }  ,draw opacity=1 ] [dash pattern={on 3.75pt off 1.5pt}]  (424.25,1849.97) .. controls (446.59,1877.49) and (450.34,1895.57) .. (450.14,1920.88) ;
\draw [shift={(450.12,1922.84)}, rotate = 270.87] [color={rgb, 255:red, 208; green, 2; blue, 27 }  ,draw opacity=1 ][line width=0.75]    (10.93,-4.9) .. controls (6.95,-2.3) and (3.31,-0.67) .. (0,0) .. controls (3.31,0.67) and (6.95,2.3) .. (10.93,4.9)   ;
\draw [color={rgb, 255:red, 208; green, 2; blue, 27 }  ,draw opacity=1 ] [dash pattern={on 3.75pt off 1.5pt}]  (424.25,1849.97) .. controls (407.82,1876.08) and (405.1,1897.27) .. (408.43,1920.69) ;
\draw [shift={(408.69,1922.49)}, rotate = 261.23] [color={rgb, 255:red, 208; green, 2; blue, 27 }  ,draw opacity=1 ][line width=0.75]    (10.93,-4.9) .. controls (6.95,-2.3) and (3.31,-0.67) .. (0,0) .. controls (3.31,0.67) and (6.95,2.3) .. (10.93,4.9)   ;
\draw   (288.71,1847.63) -- (314.9,1896.21) -- (262.51,1896.21) -- cycle ;
\draw [color={rgb, 255:red, 208; green, 2; blue, 27 }  ,draw opacity=1 ] [dash pattern={on 3.75pt off 1.5pt}]  (288.71,1847.63) .. controls (309.54,1850.78) and (329.9,1861.48) .. (344.98,1877.55) ;
\draw [shift={(346.13,1878.8)}, rotate = 227.83] [color={rgb, 255:red, 208; green, 2; blue, 27 }  ,draw opacity=1 ][line width=0.75]    (10.93,-4.9) .. controls (6.95,-2.3) and (3.31,-0.67) .. (0,0) .. controls (3.31,0.67) and (6.95,2.3) .. (10.93,4.9)   ;
\draw [color={rgb, 255:red, 0; green, 0; blue, 0 }  ,draw opacity=1 ][line width=0.75]    (266.42,1633.58) -- (303.33,1633.43) ;
\draw [shift={(305.33,1633.43)}, rotate = 179.77] [color={rgb, 255:red, 0; green, 0; blue, 0 }  ,draw opacity=1 ][line width=0.75]    (10.93,-4.9) .. controls (6.95,-2.3) and (3.31,-0.67) .. (0,0) .. controls (3.31,0.67) and (6.95,2.3) .. (10.93,4.9)   ;
\draw [shift={(266.42,1633.58)}, rotate = 179.77] [color={rgb, 255:red, 0; green, 0; blue, 0 }  ,draw opacity=1 ][line width=0.75]    (0,5.59) -- (0,-5.59)   ;
\draw [color={rgb, 255:red, 245; green, 166; blue, 35 }  ,draw opacity=1 ] [dash pattern={on 3.75pt off 1.5pt}]  (184.37,1840.01) .. controls (219.38,1827.77) and (255.82,1828.49) .. (287.27,1846.79) ;
\draw [shift={(288.71,1847.63)}, rotate = 211.03] [color={rgb, 255:red, 245; green, 166; blue, 35 }  ,draw opacity=1 ][line width=0.75]    (10.93,-3.29) .. controls (6.95,-1.4) and (3.31,-0.3) .. (0,0) .. controls (3.31,0.3) and (6.95,1.4) .. (10.93,3.29)   ;
\draw [color={rgb, 255:red, 245; green, 166; blue, 35 }  ,draw opacity=1 ] [dash pattern={on 3.75pt off 1.5pt}]  (184.37,1840.01) .. controls (214.96,1819.29) and (250.16,1814.68) .. (285.54,1820.58) ;
\draw [shift={(287.16,1820.86)}, rotate = 189.97] [color={rgb, 255:red, 245; green, 166; blue, 35 }  ,draw opacity=1 ][line width=0.75]    (10.93,-3.29) .. controls (6.95,-1.4) and (3.31,-0.3) .. (0,0) .. controls (3.31,0.3) and (6.95,1.4) .. (10.93,3.29)   ;
\draw [color={rgb, 255:red, 245; green, 166; blue, 35 }  ,draw opacity=1 ] [dash pattern={on 3.75pt off 1.5pt}]  (308.17,1822.93) .. controls (344.5,1823.85) and (373.69,1830.25) .. (418.66,1849.53) ;
\draw [shift={(420.03,1850.12)}, rotate = 203.35] [color={rgb, 255:red, 245; green, 166; blue, 35 }  ,draw opacity=1 ][line width=0.75]    (10.93,-3.29) .. controls (6.95,-1.4) and (3.31,-0.3) .. (0,0) .. controls (3.31,0.3) and (6.95,1.4) .. (10.93,3.29)   ;
\draw [color={rgb, 255:red, 0; green, 0; blue, 0 }  ,draw opacity=1 ][line width=0.75]    (424.9,1754.64) -- (425.28,1783.54) ;
\draw [shift={(425.3,1785.54)}, rotate = 269.25] [color={rgb, 255:red, 0; green, 0; blue, 0 }  ,draw opacity=1 ][line width=0.75]    (10.93,-4.9) .. controls (6.95,-2.3) and (3.31,-0.67) .. (0,0) .. controls (3.31,0.67) and (6.95,2.3) .. (10.93,4.9)   ;
\draw [shift={(424.9,1754.64)}, rotate = 269.25] [color={rgb, 255:red, 0; green, 0; blue, 0 }  ,draw opacity=1 ][line width=0.75]    (0,5.59) -- (0,-5.59)   ;
\draw [color={rgb, 255:red, 208; green, 2; blue, 27 }  ,draw opacity=1 ] [dash pattern={on 3.75pt off 1.5pt}]  (502.04,1633.22) .. controls (520.6,1649.19) and (528.73,1662.24) .. (540.07,1686.98) ;
\draw [shift={(540.77,1688.51)}, rotate = 245.53] [color={rgb, 255:red, 208; green, 2; blue, 27 }  ,draw opacity=1 ][line width=0.75]    (10.93,-4.9) .. controls (6.95,-2.3) and (3.31,-0.67) .. (0,0) .. controls (3.31,0.67) and (6.95,2.3) .. (10.93,4.9)   ;

\draw (175.86,1818.17) node [anchor=north west][inner sep=0.75pt]    {$\mathrm{d}$};
\draw (132.13,1861.42) node [anchor=north west][inner sep=0.75pt]  [color={rgb, 255:red, 208; green, 2; blue, 27 }  ,opacity=1 ]  {$\mathit{r}_{1}$};
\draw (124.15,1914.2) node [anchor=north west][inner sep=0.75pt]  [color={rgb, 255:red, 208; green, 2; blue, 27 }  ,opacity=1 ]  {$C_{\mathrm{o}_{r} ,r_{1}}$};
\draw (434.16,1845.42) node [anchor=north west][inner sep=0.75pt]    {$\mathrm{o}_{\mathit{s}}$};
\draw (413.95,1800.22) node [anchor=north west][inner sep=0.75pt]    {$\mathit{s}_{2}$};
\draw (435.46,1926.16) node [anchor=north west][inner sep=0.75pt]  [color={rgb, 255:red, 208; green, 2; blue, 27 }  ,opacity=1 ]  {$C_{\mathrm{o}_{r} ,r_{2}}$};
\draw (445.81,1868.05) node [anchor=north west][inner sep=0.75pt]  [color={rgb, 255:red, 208; green, 2; blue, 27 }  ,opacity=1 ]  {$\mathit{r}_{2}$};
\draw (390.02,1925.58) node [anchor=north west][inner sep=0.75pt]  [color={rgb, 255:red, 208; green, 2; blue, 27 }  ,opacity=1 ]  {$C_{\mathrm{o}_{s} ,s_{2}}$};
\draw (391.39,1867.05) node [anchor=north west][inner sep=0.75pt]  [color={rgb, 255:red, 208; green, 2; blue, 27 }  ,opacity=1 ]  {$\mathit{s}_{2}$};
\draw (266.83,1844.73) node [anchor=north west][inner sep=0.75pt]    {$\mathrm{o}_{\mathit{r}}$};
\draw (281.19,1876.83) node [anchor=north west][inner sep=0.75pt]    {$\alpha $};
\draw (337.44,1882.09) node [anchor=north west][inner sep=0.75pt]  [color={rgb, 255:red, 208; green, 2; blue, 27 }  ,opacity=1 ]  {$C_{\mathrm{o}_{s} ,s_{1}}$};
\draw (308.61,1861.68) node [anchor=north west][inner sep=0.75pt]  [color={rgb, 255:red, 208; green, 2; blue, 27 }  ,opacity=1 ]  {$\mathit{s}_{1}$};
\draw (289.2,1809.55) node [anchor=north west][inner sep=0.75pt]  [color={rgb, 255:red, 245; green, 166; blue, 35 }  ,opacity=1 ]  {$\mathrm{d}{_{1}^{\mathrm{o}_{s}}}$};
\draw (227.82,1802.48) node [anchor=north west][inner sep=0.75pt]  [color={rgb, 255:red, 245; green, 166; blue, 35 }  ,opacity=1 ]  {$r_{\mathrm{o}_{s}}$};
\draw (354.03,1812.72) node [anchor=north west][inner sep=0.75pt]  [color={rgb, 255:red, 245; green, 166; blue, 35 }  ,opacity=1 ]  {$r_{\mathrm{o}_{s}}$};
\draw (228.77,1834.14) node [anchor=north west][inner sep=0.75pt]  [color={rgb, 255:red, 245; green, 166; blue, 35 }  ,opacity=1 ]  {$r_{\mathrm{o}_{r}}$};
\draw (269.52,1612.75) node [anchor=north west][inner sep=0.75pt]  [color={rgb, 255:red, 208; green, 2; blue, 27 }  ,opacity=1 ] [align=left] {\textbf{(B)}};
\draw (432.11,1756.95) node [anchor=north west][inner sep=0.75pt]   [align=left] {\textbf{(C),\textcolor[rgb]{0.96,0.65,0.14}{(D)}}};
\draw (343.14,1716.81) node [anchor=north west][inner sep=0.75pt]  [color={rgb, 255:red, 208; green, 2; blue, 27 }  ,opacity=1 ]  {$C_{\mathrm{o}_{r} ,r_{1}}$};
\draw (332.44,1626.8) node [anchor=north west][inner sep=0.75pt]    {$\mathrm{d}$};
\draw (514.18,1630.57) node [anchor=north west][inner sep=0.75pt]    {$\mathrm{o}_{\mathit{s}}$};
\draw (400.64,1680.37) node [anchor=north west][inner sep=0.75pt]    {$\mathrm{o}_{\mathit{r}}$};
\draw (416.9,1708.75) node [anchor=north west][inner sep=0.75pt]    {$\alpha $};
\draw (371.24,1659.14) node [anchor=north west][inner sep=0.75pt]    {$\mathit{r}_{1}$};
\draw (477.53,1670.31) node [anchor=north west][inner sep=0.75pt]    {$\mathit{s}_{1}$};
\draw (439.57,1624.68) node [anchor=north west][inner sep=0.75pt]    {$\mathit{r}_{2}$};
\draw (493.66,1581.85) node [anchor=north west][inner sep=0.75pt]    {$\mathit{s}_{2}$};
\draw (540.77,1688.51) node [anchor=north west][inner sep=0.75pt]  [color={rgb, 255:red, 208; green, 2; blue, 27 }  ,opacity=1 ]  {$C_{\mathrm{o}_{r} ,r_{2}}$};
\draw (525.57,1645.21) node [anchor=north west][inner sep=0.75pt]  [color={rgb, 255:red, 208; green, 2; blue, 27 }  ,opacity=1 ]  {$\mathit{r}_{2}$};
\draw (506.7,1703.28) node [anchor=north west][inner sep=0.75pt]  [color={rgb, 255:red, 208; green, 2; blue, 27 }  ,opacity=1 ]  {$C_{\mathrm{o}_{s} ,s_{2}}$};
\draw (473.15,1714.01) node [anchor=north west][inner sep=0.75pt]  [color={rgb, 255:red, 208; green, 2; blue, 27 }  ,opacity=1 ]  {$C_{\mathrm{o}_{s} ,s_{1}}$};
\draw (444.31,1693.59) node [anchor=north west][inner sep=0.75pt]  [color={rgb, 255:red, 208; green, 2; blue, 27 }  ,opacity=1 ]  {$\mathit{s}_{1}$};
\draw (495.57,1662.82) node [anchor=north west][inner sep=0.75pt]  [color={rgb, 255:red, 208; green, 2; blue, 27 }  ,opacity=1 ]  {$\mathit{s}_{2}$};
\draw (326.64,1671.25) node [anchor=north west][inner sep=0.75pt]  [color={rgb, 255:red, 208; green, 2; blue, 27 }  ,opacity=1 ]  {$\mathit{r}_{1}$};
\draw (209.92,1581.85) node [anchor=north west][inner sep=0.75pt]    {$\mathit{s}_{2}$};
\draw (155.83,1624.68) node [anchor=north west][inner sep=0.75pt]    {$\mathit{r}_{2}$};
\draw (193.78,1670.31) node [anchor=north west][inner sep=0.75pt]    {$\mathit{s}_{1}$};
\draw (87.5,1659.14) node [anchor=north west][inner sep=0.75pt]    {$\mathit{r}_{1}$};
\draw (133.15,1708.75) node [anchor=north west][inner sep=0.75pt]    {$\alpha $};
\draw (116.89,1680.37) node [anchor=north west][inner sep=0.75pt]    {$\mathrm{o}_{\mathit{r}}$};
\draw (224.43,1633.57) node [anchor=north west][inner sep=0.75pt]    {$\mathrm{o}_{\mathit{s}}$};
\draw (49.64,1624.01) node [anchor=north west][inner sep=0.75pt]    {$\mathrm{d}$};

\end{tikzpicture}

\end{center}

Let $\transO$ be the described transformation, mapping $(\interI, \elD)$ to $(\interI_{\extO}, \elD)$.

\begin{lemma}\label{prop:bISelfO-game-reduction}
    Let $k \in \Nomega$. Duplicator has a winning strategy in a $k$-round $\ALCO(\vocabV)$-bisimulation game
    on $(\interI, \elD)$ and $(\interJ, \elE)$ iff she has a winning strategy
    in a $k$-round $\ALC(\vocabV^{\extO})$-bisimulation game on
    $(\trans{\extO}(\interI), \elD)$ and $(\trans{\extO}(\interJ), \elE)$.
\end{lemma}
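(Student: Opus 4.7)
By~\cref{fact:games-bisimulations-and-concept-eq}, it suffices to prove the equivalent statement in terms of bisimulations: the existence of an $\ALCO(\vocabV)$-$k$-bisimulation $\bisimRelSym{}$ between $\structPairA$ and $\structPairB$ with $\bisimRel{}$ is equivalent to the existence of an $\ALC(\vocabV^{\extO})$-$k$-bisimulation $\bisimRelSym{\extO}$ between $\transPairA{\extO}$ and $\transPairB{\extO}$ with $\bisimRel{\extO}$. Unlike in the preceding propositions, the two bisimulations cannot simply be taken to be equal, since $\transO$ genuinely restructures the interpretation. Instead, I would construct each bisimulation from the other by tracking a correspondence between original elements and their transformed counterparts---regular copies, trampolines, and dummy-path endpoints.

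For the $(\Longrightarrow)$ direction, given $\bisimRelSym{}$, I define $\bisimRelSym{\extO}$ by closure under the three kinds of Spoiler moves available in the transformed game. A move along an original role $\roler \in \sigma_r$ to a regular element is mirrored by Duplicator using conditions $(b)$ and $(c)$ of $\bisimRelSym{}$. A move along $\roler$ to a trampoline labelled $\conceptC_{\indvo, \roler}$ corresponds in the original game to Spoiler's move along $\roler$ to $\indvo^{\interI}$; conditions $(b)$ and $(d)$ of $\bisimRelSym{}$ force Duplicator's response in $\interJ$ to hit $\indvo^{\interJ}$ via $\roler$, so in the transformed game it lands on the matching trampoline on the other side, preserving harmony on the fresh concept labels. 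A move along a dummy role $\roler_{\indvo}$ is deterministic and of the same length $\textrm{dist}_{\indvo}$ on both sides, so Duplicator mirrors it step-by-step; harmony is trivial on the unlabelled dummy elements and, at the endpoint, on $\indvo$'s component by condition $(d)$ of $\bisimRelSym{}$ applied to the pair $(\indvo^{\interI},\indvo^{\interJ})$. Subsequent moves inside the $\indvo$-component correspond to original moves starting at the nominals $\indvo^{\interI}$ and $\indvo^{\interJ}$, and are once more covered by $\bisimRelSym{}$. The $(\Longleftarrow)$ direction proceeds symmetrically: I extract from $\bisimRelSym{\extO}$ the pairs ending at regular elements, and recover condition $(d)$ of the original bisimulation from the harmony of trampoline concept labels $\conceptC_{\indvo, \roler}$ together with the matched $\roler_{\indvo}$-dummy-path endpoints in both transformed structures.

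The main obstacle is the bookkeeping across this case distinction, in particular verifying that the $k$-round bound is preserved. Step $(D)$ is the key design choice here: dummy paths have length exactly $\textrm{dist}_{\indvo}$, so they cost as many rounds in the transformed game as the shortest path to the nominal costs in the original interpretation, while step $(C)$ ensures that each nominal has a unique in-component copy serving as the canonical representative once reached. With these invariants in place, the remaining verifications of conditions $(a)$, $(b)$, and $(c)$ for $\bisimRelSym{\extO}$ (and of conditions $(a)$--$(d)$ for $\bisimRelSym{}$ in the converse direction) are mechanical, though tedious to spell out exhaustively.
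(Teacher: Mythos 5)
Your proposal is correct in substance but follows a genuinely different route from the paper. The paper proves this lemma by a direct induction on the number of rounds $k$, transferring Duplicator's strategy move-by-move with a case analysis on what Spoiler selects (a dummy element, a trampoline, a constant, or an ordinary element) in each direction. You instead first invoke \cref{fact:games-bisimulations-and-concept-eq} to recast the statement at the level of bisimulation relations and then build $\bisimRelSym{\extO}$ from $\bisimRelSym{}$ (and conversely) via an explicit correspondence between original elements and their regular copies, trampolines, and dummy-path elements. This is the same strategy the paper uses for the $\Self$, $\extI$ and $\extb$ reductions, so your treatment is arguably more uniform with the rest of \cref{section:game-reductions}, whereas the paper switches methodology for the nominal case; the price is that your relation must carry sequences that ``teleport'' between components, since step \textbf{(C)} removes the nominal copies from each component and a visit to $\indvo^{\interI}$ in the original game has no single corresponding edge in $\trans{\extO}(\interI)$. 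Your key invariants match the paper's: trampoline labels $\conceptC_{\indvo,\roler}$ are forced to agree by conditions $(b)$ and $(d)$ of $\bisimRelSym{}$, and the dummy paths of length $\textrm{dist}_\indvo$ account for the round budget. The one point to spell out with care (which the paper itself also glosses over) is why the pairs needed to continue the game inside the $\indvo$-component after traversing a dummy path are actually present in $\bisimRelSym{}$: this requires relating the Gaifman distance $\textrm{dist}_\indvo$, which is undirected, to directed reachability of $\indvo^{\interI}$ within the remaining rounds. Your observation that a directed path of length $j$ bounds $\textrm{dist}_\indvo$ by $j$ gives the budget in one direction, but the existence of a $\bisimRelSym{}$-pair at the nominal still has to be argued rather than assumed. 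Modulo that shared imprecision, your proof is sound and at the same level of detail as the paper's.
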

\begin{proof}[\textbf{Proof} ($\Longrightarrow$)]

    We proceed with the proof by induction on $k$, the depth parameter.
    Interpretation of concept names for distinguished elements is left
    unchanged by $\trans{\extO}$, hence Duplicator has a winning strategy
    in the $0$-round bisimulation game. Suppose now that the implication
    holds for games with at most $k$ rounds and let us show it holds for
    games with $k{+}1$ rounds. Suppose that Duplicator has a winning
    strategy in any $k{+}1$-round $\ALCO(\vocabV)$-bisimulation game. Let
    $(\trans{\extO}(\interI), s \elD ; \trans{\extO}(\interJ), s' \elE)$
    be a configuration of the $\ALC(\vocabV^{\extO})$-bisimulation game
    following the promised (by inductive hypothesis) $k$-round winning
    strategy of Duplicator. We will show how to proceed with the next step
    of the game. W.l.o.g. assume that Spoiler selected
    $\trans{\extO}(\interI)$ and decided to choose an element $\elD'$;
    we need to reply with an element $\elE'$ in the second structure.
    There are the following cases:

    \begin{enumerate}
        \item Spoiler chooses a dummy element. We reply with the
            corresponding element, which can be done without any problems since
            dummy paths of length at most $k{+}1$ leading to named elements
            have equal lengths in both interpretations. Dummy paths longer than
            $k{+}1$ are clearly equal up to $k{+}1$ elements.

        \item $\elD'$ selected by Spoiler is a trampoline. Notice that we have
            defined the trampolines in such a way that they reflect all possible
            connections to constants. Hence, by having $k{+}1$ rounds winning strategy
            in $\ALCO(\vocabV)$-bisimulation game, it implies that the elements
            reachable within $k$ steps must have had the same connections to
            constants, which means that Duplicator can respond with a trampoline
            of equal concept names.

        \item Spoiler chooses a constant $\indvo^\interI$. The only way which
            we could access a constant was via a dummy path of length at most $k$,
            which means that $d$, $e$ were on the paths labelled by the $\roler_\indvo$,
            thus they lead to the same constants, $\indvo^\interI$ and $\indvo^\interJ$,
            respectively.

        \item Spoiler chooses an ``ordinary`` element $\elD'$, that is, an
            element which does not match any of the above conditions. Then it
            means that $\elD'$ was a copy of an element in the original
            interpretation, thus, we can follow the same move that was
            made in the original interpretation by $\ALCO(\vocabV)$-winning strategy.
    \end{enumerate}

\end{proof}
\begin{proof}[($\Longleftarrow$)]

    We again proceed by induction on $k$. The base case proceeds analogously
    to the previous implication.
    Suppose now that the implication holds for games with at most $k$ rounds and
    let us show it holds for games with $k{+}1$ rounds. Suppose that
    Duplicator has a winning strategy in any $k{+}1$-round
    $\ALC(\vocabV^\extO)$-bisimulation game.
    Let $(\interI s \elD; \interJ s' \elE)$ be a configuration of the
    $\ALC\extO(\vocabV)$-bisimulation game following the promised (by
    inductive hypothesis) $k$-round winning strategy of Duplicator. We will
    show how to proceed with the next step of the game. W.l.o.g. assume that Spoiler
    selected $\interI$ and decided to choose an element
    $\elD'$; we need to reply with an element $\elE'$ in the
    second structure. There are the following cases:

    \begin{enumerate}

        \item Spoiler chooses a constant $\indvo^\interI$ via role $\roler$.
            From $\ALC(\vocabV^\extO)$-winning strategy, this means
            that in the $\trans{\extO}(\interI)$ there must have been a
            trampoline which encodes the possible connections to a constant,
            thus there was also a trampoline in $\trans{\extO}(\interJ)$
            with the same concept names, which implies that there are the same
            connections to constants from $\elD$ and $\elE$, hence,
            Duplicator can choose a constant $\indvo^\interJ$ using also $\roler$.

        \item Spoiler jumps out of the constant, \ie he was in $\indvo^\interI$
            and now using role $\roler$ selects $\elD'$ that is not a
            constant. Should $\indvo^\interI$ be accessible within $k$ steps,
            it means that we can access it in $\trans{\extO}(\interI)$ using a
            dummy path of length $ \leq k$. The outgoing connections from
            constants were restored in $\trans{\extO}(\interI)$, henceforth, from
            the constant $\indvo^{\trans{\extO}(\interI)}$ we also have a $\roler$
            connection to a copy of the element $\elD'$. This implies that
            according to $\ALC(\vocabV^\extO)$-winning strategy,
            we have a $\roler$ move to an element $\elE'$ in
            $\trans{\extO}(\interI)$. Since $\elE'$ cannot be a constant,
            it is a direct copy of an element from $\interJ$, which gives us
            a valid response for Duplicator.

        \item Spoiler chooses an ``ordinary`` element $\elD'$, that is, an
            element which does not match any of the above conditions.
            Notice that this means neither $\elD$ nor $\elD'$ can be
            a constant. That means that we have a copy of both of the elements
            $\elD$ and $\elD'$ along with all the connections between
            them, which means that Duplicator can respond following the
            $k{+}1$ step of the $\ALC(\vocabV^\extO)$-winning strategy.
    \end{enumerate}

\end{proof}

\subsection{Combining reductions}

We wrap up the above reductions, with a goal that the winning strategy of
Duplicator in a $\DL{L}\Phi$-bisimulation game is equivalent to the winning
strategy in a certain $\ALC$-bisimulation game. Note that the order of
applications of reduction matters, \eg we should apply first the $\transI$
reduction, and only then $\transb$; otherwise we will not get all possible
combinations of roles with inverses. Hence, we first proceed with $\transSelf$
reduction, then with $\transI$, with $\transb$ and finally with $\transO$. Let
$\trans{\Phi}$ be a composition of reductions for extensions $\Phi \in \extSet$
in the above order.

\begin{theorem}\label{theorem:sublogics-game-reduction}
    Let $k \in \Nomega$ and $\DL{L}\Phi$ satisfy $\ALC \subseteq \DL{L}\Phi \subseteq \ALCOIbSelf$.
    Then Duplicator has a winning strategy in a $k$-round $\DL{L}\Phi(\vocabV)$-bisimulation game
    on $(\interI, \elD)$ and $(\interJ, \elE)$ iff she has a winning strategy
    in a $k$-round $\DL{L}(\vocabV^{\Phi})$-bisimulation game on
    $(\trans{\Phi}(\interI), \elD)$ and $(\trans{\Phi}(\interJ), \elE)$.
\end{theorem}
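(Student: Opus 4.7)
The plan is to prove \cref{theorem:sublogics-game-reduction} by successively applying the four reduction results \cref{prop:bISelf-game-reduction,prop:bI-game-reduction,prop:b-game-reduction,prop:bISelfO-game-reduction}, peeling off one extension of $\Phi$ at a time. Each of those results already establishes the desired biconditional for a single operator, so composing the four biconditionals should yield exactly the statement of the theorem. Formally, I would induct on $|\Phi|$: the base case $\Phi = \emptyset$ is immediate, since $\trans{\Phi}$ is the identity and both sides of the equivalence speak about the same $\ALC(\vocabV)$-bisimulation game. For the inductive step, pick the largest $x \in \Phi$ with respect to the order $\extSelf \sqsubset \extI \sqsubset \extb \sqsubset \extO$, apply the proposition corresponding to $x$ to move from the $\DL{L}(\Phi)(\vocabV)$-game on $(\interI, \elD)$, $(\interJ, \elE)$ to the $\DL{L}(\Phi \setminus \{x\})(\vocabV^{x})$-game on $(\trans{x}(\interI), \elD)$, $(\trans{x}(\interJ), \elE)$, and then invoke the inductive hypothesis.

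The delicate point is that the order of application is not arbitrary: each proposition peels off operator $x$ from a logic $\DL{L}$ that must already lie in a prescribed sublogic. Concretely, \cref{prop:bISelf-game-reduction} requires $\DL{L} \subseteq \ALCOIb$, \cref{prop:bI-game-reduction} requires $\DL{L} \subseteq \ALCOb$, \cref{prop:b-game-reduction} requires $\DL{L} \subseteq \ALCO$, and \cref{prop:bISelfO-game-reduction} covers the final reduction from $\ALCO$ to $\ALC$. Consequently $\Self$ must be eliminated first, then $\extI$, then $\extb$, and finally $\extO$, matching exactly the order chosen in the definition of $\trans{\Phi} = \transO \comp \transb \comp \transI \comp \transSelf$ (where any $\trans{x}$ with $x \notin \Phi$ is read as the identity).

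The main obstacle is to check that the preconditions of the next proposition are preserved after each reduction: the enrichment $\vocabV^{x}$ only adds fresh concept names (for $\extSelf$ and partly for $\extO$) and fresh role names (for $\extI$, $\extb$, and partly for $\extO$), and the actions of $\trans{x}$ on the underlying interpretation leave the interpretations of old symbols unchanged up to the unravelling steps \textbf{(A)--(D)} for nominals. In particular, no new logical operator is ever introduced, so eliminating $\extSelf$ first genuinely leaves us with a logic inside $\ALCOIb$, eliminating $\extI$ next inside $\ALCOb$, and so on. Once this bookkeeping is in place the induction composes the four biconditionals without further work, and the chain of equivalences terminates at the $\ALC(\vocabV^{\Phi})$-bisimulation game on $(\trans{\Phi}(\interI), \elD)$ and $(\trans{\Phi}(\interJ), \elE)$, as required.
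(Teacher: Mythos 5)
Your proof is correct and is essentially the paper's own argument, which likewise just applies \cref{prop:bISelf-game-reduction,prop:bI-game-reduction,prop:b-game-reduction} and \cref{prop:bISelfO-game-reduction} consecutively in the order $\extSelf, \extI, \extb, \extO$; your induction on $|\Phi|$ simply makes the composition explicit. The only blemish is the phrase ``pick the largest $x \in \Phi$'' in the inductive step: to satisfy the side conditions of the propositions you must peel off the \emph{least} element under your order $\extSelf \sqsubset \extI \sqsubset \extb \sqsubset \extO$ (so that $\Self$ is eliminated first, while the other operators may still be present in the residual logic), which is exactly what your second paragraph then correctly stipulates.
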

\begin{proof}
    The key idea here is grounded on the composition of the reduction
    functions. Given $\Phi$, we simply apply consecutively Propositions
    \ref{prop:bISelf-game-reduction}--\ref{prop:b-game-reduction}
    and~\cref{prop:bISelfO-game-reduction}.
\end{proof}

\section{Game Comonads}\label{section:game-comonads}

Having defined a family of game reductions, we are going to start employing
basic category theory primitives to define denotational semantics for
bisimulation games. In this chapter, we focus on vanilla $\ALC$.
Since $\ALC$ is a notational variant of the multi-modal logic, it
suffices to translate the work done in~\cite{relating-struct-and-pow} to the
description logic setting. Subsequently, we prove that such a definition of a
generalised game coincides with our definition of
$\ALC(\vocabV)$-bisimulation game defined in~\cref{section:bisim-games}.
This chapter may be a bit heavy for readers not familiar enough
with category~theory.

\textbf{The setting}. In what follows, we shall work in the category of pointed interpretations
$\pointedCat$ over a vocabulary $\vocabV$, where objects $\structPairA $
are $\vocabV$-pointed-interpretations, and morphisms
$ h : \structPairA \rightarrow \structPairB $ are homomorphisms
between interpretations that preserve the distinguished element,
\ie $ h \: d = e $.
With $ \comonadFunctorPhi $, we will denote the corresponding game comonad,
where $k$ is the depth parameter and $ \Phi \subseteq \extSet$
parametrizes the set of language extensions. We will be a bit
careless and write $ \comonadFunctorOver{\extI\extO}$ in place of
$ \comonadFunctorOver{\{\extI, \extO\}} $, or likewise, $\comonadFunctorOver{}$
to denote $\comonadFunctorOver{\{\}}$.

\subsection{A comonad for $\ALC$}
We start with introducing the comonad for $\ALC$, which will be the base for the further ones.

\begin{definition}[$\ALC$-comonad]
    For every $k \ge 0$, we define a comonad $ \comonadFunctor$ on
    $\pointedCatALC$,\footnote{Notice $\emptyset$ in place of $\sigma_i$. This
    is because $\ALC$-concepts cannot speak about individual names.} where
    $\comonadFunctor$ unravels\footnote{For the notion of unravelling consult
    e.g.~\cite[Definition 3.21]{dlbook}.} $(\interI, \elD)$ from $ \elD$,
    up to depth $k$. More precisely:

    \begin{itemize}\itemsep0em
        \item
            The domain of $\comonadStructA $ is composed of sequences $[a_0,
            \roler_0, a_1, \roler_2, \ldots] \in \DeltaI (\sigma_r \DeltaI)^*$,
            where we additionally require that $(a_i, a_{i{+}1}) \in
            \roler_i^{\interI}$ and $a_0 = \elD$. The singleton sequence
            $[\elD]$ serves as the distinguished element of
            $\comonadStructA$.

        \item
            The functorial action on morphisms for $\comonadFunctor$ satisfies:

            \begin{center}
                $ \comonadFunctor (h : \structPairA \rightarrow \structPairB) : \comonadFunctor \structPairA \rightarrow \comonadFunctor \structPairB$ \\
                $ (\comonadFunctor \: h) [a_0, \alpha_1, a_1, ...,  \alpha_j, a_j] = [h \: a_0, \alpha_1, h \: a_1, ...,  \alpha_j, h \: a_j] $
            \end{center}
        \item
            The map $ \counitOf{\structA} : \comonadStructA \rightarrow \structPairA $
            sends a sequence to its last element.
        \item
            Concept names $\conceptC \in \sigma_c$ are interpreted such that $
            \unaryRel{\comonadStructA}{s} $ iff
            $ \unaryRel{\structA}{\counitOf{\structA}s} $.
        \item
            For role names $\roler \in \sigma_r$, we put $(s,t) \in
            \roler^{\comonadStructA}$ iff there is $d' \in \DeltaI$ so that
            $t = s[\roler, d']$.
        \item
            For a morphism $ \kleisliHomo $, we define Kleisli coextension $\kleisliExtHomo$ recursively
            by $h^* [d] = [e]$ and $h^{*} (s[\alpha, d']) = h^{*} (s)[\alpha, h(s[\alpha, d'])])$.
    \end{itemize}
\end{definition}


Having defined the structure, we now need to prove that it indeed forms a comonad
in the category-theoretic sense. We shall prove that $\comonadFunctor$ is a
functor, $\counit$ and $(\cdot)^*$ behave well and that the triple
$ (\comonadFunctor, \counit, (\cdot)^*) $ fulfils the comonad laws.
We start with a small lemma that shall be used later in the proofs:

\begin{lemma}\label{lemma:counit-functor-commute}
    The following diagram in $\pointedCatALC$ category commutes

    \begin{center}
        \begin{tikzcd}[row sep=large, column sep = large]
            \comonadStructA \arrow[r, "\comonadFunctor h"] \arrow[d, "\counitA"'] & \comonadStructB \arrow[d, "\counitB"] \\
            A \arrow[r, "h"']                                        & B
        \end{tikzcd}
    \end{center}
\end{lemma}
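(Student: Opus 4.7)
The plan is a direct diagram chase using the explicit formulas for $\counit$ and the functorial action: both are defined pointwise on the final coordinate of a sequence, so commutativity will be a one-line unfolding of definitions rather than anything substantive.

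Concretely, fix an arbitrary sequence $s = [a_0, \alpha_1, a_1, \ldots, \alpha_j, a_j] \in \comonadStructA$. Going down-then-right, $\counitA$ projects $s$ onto its last element, so $(h \comp \counitA)(s) = h(a_j)$. Going right-then-down, the functorial action gives $\comonadFunctor h(s) = [h(a_0), \alpha_1, h(a_1), \ldots, \alpha_j, h(a_j)]$, whose final entry is $h(a_j)$, hence $(\counitB \comp \comonadFunctor h)(s) = h(a_j)$ as well. Since the two paths agree on every $s$, the square commutes. A minor housekeeping observation: both composites are morphisms in $\pointedCatALC$, because each factor is such a morphism by construction (preserving concept interpretations, role interpretations, and the distinguished element) and composition preserves that property.

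There is essentially no main obstacle; the lemma is exactly the naturality of $\counit$ viewed as a transformation $\comonadFunctor \Rightarrow 1_{\pointedCatALC}$, and naturality is immediate from the pointwise definitions. This is the auxiliary fact that will be invoked downstream when verifying the comonad laws for $(\comonadFunctor, \counit, (\cdot)^*)$ and, later on, when establishing analogous commutations for the enriched and relative variants of $\comonadFunctor$.
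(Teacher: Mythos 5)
Your proposal is correct and matches the paper's own proof essentially verbatim: both arguments fix an arbitrary sequence $s = [a_0, \alpha_1, \ldots, \alpha_j, a_j]$ and observe that each composite sends it to $h(a_j)$ by unfolding the definitions of $\counit$ and the functorial action. The extra remark that the composites are morphisms is harmless but not needed for the commutativity claim itself.
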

\begin{proof}
    Let $ s = \sequence \in \comonadStructA$. Then
    \begin{align*}
        h (\counitA \: s) &= h \: a_j       && \text{def. } \counitA \\
        &= \counitB [h \: a_j]              && \text{def. } \counitB \\
        &= \counitB [h \: a_0, \alpha_1, h \: a_1, ...,  \alpha_j, h \: a_j] && \text{def. } \counitB \\
        &= \counitB (\comonadFunctor \: h \: s)    && \text{def. }  \comonadFunctor \: h
    \end{align*}
\end{proof}

\begin{proposition}\label{prop:DL-is-a-functor}
    $\comonadFunctor$ is a functor
\end{proposition}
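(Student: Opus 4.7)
The plan is to verify the four functor axioms: that $\comonadFunctor$ is well-defined on objects of $\pointedCatALC$, that $\comonadFunctor h$ is a morphism in $\pointedCatALC$ whenever $h$ is, and that $\comonadFunctor$ preserves identities and composition.

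On objects, $\comonadStructA$ is already specified to be a pointed $(\emptyset, \sigma_c, \sigma_r)$-interpretation with distinguished element $[\elD]$, with explicit definitions for concept and role interpretations, so one just has to read off that the construction indeed yields an object of $\pointedCatALC$.

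The crux is checking that $\comonadFunctor h : \comonadStructA \to \comonadStructB$ is a pointed homomorphism whenever $h : \structPairA \to \structPairB$ is. First, $(\comonadFunctor h)[\elD] = [h\,\elD] = [\elE]$, so the distinguished element is preserved. For concept names $\conceptC \in \sigma_c$, by definition $s \in \conceptC^{\comonadStructA}$ iff $\counitA s \in \conceptC^{\structA}$; combining this characterisation with \cref{lemma:counit-functor-commute} and the fact that $h$ preserves $\conceptC$, one gets $(\comonadFunctor h)\,s \in \conceptC^{\comonadStructB}$. For role names $\roler \in \sigma_r$, if $(s,t) \in \roler^{\comonadStructA}$ then $t = s[\roler, \elD']$ with $(\counitA s, \elD') \in \roler^{\structA}$; applying $\comonadFunctor h$ pointwise gives $(\comonadFunctor h)\,t = (\comonadFunctor h)(s)[\roler, h\,\elD']$, and $h$'s role-preservation together with the definition of $\roler^{\comonadStructB}$ shows $((\comonadFunctor h)\,s, (\comonadFunctor h)\,t) \in \roler^{\comonadStructB}$.

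Preservation of identities and of composition is immediate from the pointwise definition of $\comonadFunctor h$ on sequences: $(\comonadFunctor 1_A)\,s = s$, and $\comonadFunctor(g \circ h)$ coincides with $\comonadFunctor g \circ \comonadFunctor h$ by applying each map elementwise along $[a_0, \alpha_1, \ldots, \alpha_j, a_j]$. The whole argument is essentially bookkeeping on sequences; the only clause that merits any care is the role-preservation step, where one must track that appending a new edge $[\roler, \elD']$ commutes with the pointwise action of $\comonadFunctor h$. I do not expect a genuine obstacle here, only a place where the notation could obscure an otherwise elementary verification.
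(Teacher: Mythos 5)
Your proposal is correct and follows essentially the same route as the paper: direct verification that $\comonadFunctor$ sends objects to objects and homomorphisms to homomorphisms, followed by a pointwise check of the identity and composition laws. The only (harmless) divergence is in the role-preservation step, where you argue directly from the append-form definition of $\roler^{\comonadStructA}$ rather than routing through the counit characterisation and \cref{lemma:counit-functor-commute} as the paper does; you also explicitly note preservation of the distinguished element, which the paper leaves implicit here.
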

\begin{proof}
    We need to prove two properties

\noindent \bulletname{1}
        $\comonadFunctor$ maps objects to objects and morphisms to morphisms.

        \noindent \textbf{Objects.}
        For an interpretation $\structA$, its unravelling $\comonadStructA$ is
        also an interpretation over $(\sigmaTriple)$ which follows from the standard
        results (see e.g.~\cite[Definition 3.21]{dlbook}).

        \noindent \textbf{Morphisms.}
        Suppose $h : \structA \rightarrow \structB \in \morphismsOf{\pointedCatALC}$
        and $s, t \in \comonadStructA$.
        \begin{align*}
            \relComonad{}{s}{t} &\iff \relA{\counitA \: s}{\counitA t}        && \text{def. $\relBare{}{\comonadStructA}$ } \\
            &\: \; \Longrightarrow \relB{h(\counitA \: s)}{h \: (\counitA t)}                  && \text{h is homomorphism} \\
            &\iff \relB{\counitA (\comonadFunctor \: h \: s)}{\counitA (\comonadFunctor \: h \: t)}     && \text{\cref{lemma:counit-functor-commute}} \\
            &\iff \relComonadB{}{\comonadFunctor \: h \: s}{\comonadFunctor \: h \: t}  && \text{def. $\relBare{}{\comonadStructB}$ }
        \end{align*}

        \noindent Concept names follow similarly. \\

\noindent \bulletname{2}
        $\comonadFunctor (g \compose f) = (\comonadFunctor \: g) \compose (\comonadFunctor \: f)$
        and $\comonadFunctor \: id_\structA = id_{\comonadFunctor \structA} $
        equations are satisfied.
        \begin{align*}
            \comonadFunctor (g \compose f) s &= [(g \compose f) \: a_0, \alpha_1, (g \compose f) \: a_1, ...,  \alpha_j, (g \compose f) \: a_j]     && \text{def. $\comonadFunctor (g \compose f)$}  \\
            &= [g (f a_0), \alpha_1, g (f \: a_1), ...,  \alpha_j, g (f \: a_j)]    && \text{def. $\comp$}  \\
            &= \comonadFunctor \: g \: [f a_0, \alpha_1, f \: a_1, ...,  \alpha_j, f \: a_j]      && \text{def. $\comonadFunctor \: g$} \\
            &= \comonadFunctor \: g \: (\comonadFunctor f s)      && \text{def. $\comonadFunctor \: f $} \\
            &= (\comonadFunctor \: g) \compose (\comonadFunctor \: f) s \\ \\
            \comonadFunctor \: id_\structA \: s &=
            [id_\structA \: a_0, \alpha_1, id_\structA \: a_1, ...,  \alpha_j, id_\structA \: a_j]  && \text{def. } \comonadFunctor \: id_\structA \\
            &= \sequence = s    && \text{def. $id_\structA$} \\
            &= id_{\comonadFunctor \structA} \: s   && \text{def. $id_{\comonadFunctor \structA}$}
        \end{align*}
\end{proof}

\begin{proposition}\label{prop:e-is-a-morphism}
    $\counitA$ is a morphism in $\pointedCatALC$
\end{proposition}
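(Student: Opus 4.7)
The plan is to verify the two conditions that define a morphism in $\pointedCatALC$: that $\counitA$ preserves the distinguished element, and that it is a homomorphism of $(\emptyset, \sigma_c, \sigma_r)$-interpretations (i.e., preserves interpretations of concept names and role names).

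First, I would handle the distinguished element. By the definition of $\comonadFunctor$, the distinguished element of $\comonadStructA$ is the singleton sequence $[\elD]$, and $\counitA$ was defined to send any sequence to its last element; hence $\counitA [\elD] = \elD$, which is exactly the distinguished element of $\structPairA$.

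Next, for concept names, observe that the interpretation of each $\conceptC \in \sigma_c$ on the unravelling was defined precisely as the pullback along $\counitA$: we have $\unaryRel{\comonadStructA}{s}$ iff $\unaryRel{\structA}{\counitA s}$. Thus the homomorphism condition for concept names holds tautologically (in fact, as an equivalence). For role names $\roler \in \sigma_r$, suppose $(s, t) \in \roler^{\comonadStructA}$. By the definition of roles on the unravelling, there exists $d' \in \DeltaI$ with $t = s[\roler, d']$, and the well-formedness constraint on sequences in $\comonadStructA$ requires consecutive elements to be related in $\interI$ by the role appearing between them; so the last element of $s$ stands in $\roler^{\interI}$-relation to $d'$. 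Since $\counitA s$ is by definition the last element of $s$ and $\counitA t = d'$, we conclude $(\counitA s, \counitA t) \in \roler^{\interI}$ as required.

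There is no real obstacle here: the statement is essentially an unpacking of definitions. The interpretations of concept names and role names on $\comonadStructA$ were specifically tailored so that the counit becomes a homomorphism by construction, and the action on the distinguished element is immediate from how $\counitA$ acts on the singleton sequence.
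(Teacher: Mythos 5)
Your proof is correct and follows essentially the same route as the paper's: both verify preservation of the distinguished element (the singleton $[\elD]$ maps to $\elD$) and the homomorphism condition for concept and role names by unfolding the definitions of the interpretations on $\comonadStructA$. Your version is merely more explicit about why the role-name case works (invoking the well-formedness constraint $(a_i, a_{i+1}) \in \roler_i^{\interI}$ on sequences), which the paper leaves implicit under ``by the definition of interpretation.''
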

\begin{proof}
    We need to show that $\counitA$ is a homomorphism and that
    it preserves the distinguished elements.
    Suppose $\relComonad{}{s}{t}$. Then $\relA{\counitA s}{\counitA t}$
    by the definition of interpretation. A distinguished element is represented
    by a singleton $[d]$ and since counit takes the last elements it
    clearly preserves them. The case for concept names is similar. \\
\end{proof}

\begin{proposition}\label{prop:e-is-natural-trans}
    $ \counit : \comonadFunctor \longrightarrow 1_{\pointedCatALC}$ is a natural transformation.
\end{proposition}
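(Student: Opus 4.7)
The plan is to observe that the definition of a natural transformation $\counit : \comonadFunctor \Rightarrow 1_{\pointedCatALC}$ requires exactly two ingredients: a component $\counitOf{\structA} : \comonadStructA \rightarrow \structPairA$ in $\pointedCatALC$ for every object $\structPairA$, and the commutativity of the naturality square for every morphism $h : \structPairA \rightarrow \structPairB$. Both of these are already available in the preceding material, so the proof reduces to citing them.

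First I would invoke \cref{prop:e-is-a-morphism}, which guarantees that for every $\structPairA$ the map $\counitA$ is a morphism in $\pointedCatALC$ (it is a homomorphism of the relational structure and preserves the distinguished element, since $[d]$ is sent to $d$). This provides the components of $\counit$. Then I would invoke \cref{lemma:counit-functor-commute}, which is precisely the naturality square
\[
\counitB \comp (\comonadFunctor \: h) = h \comp \counitA,
\]
established for an arbitrary morphism $h : \structPairA \rightarrow \structPairB$ in $\pointedCatALC$. Combining the two yields naturality.

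There is essentially no obstacle: the calculation underlying naturality has already been carried out in \cref{lemma:counit-functor-commute} by unfolding the definitions of $\counit$ and the functorial action of $\comonadFunctor$ on a generic sequence $\sequence$, where both sides evaluate to $h \: a_j$. The only small bookkeeping point worth mentioning is that the square lives in $\pointedCatALC$ rather than in $\mathsf{Set}$, but this is precisely why \cref{prop:e-is-a-morphism} is needed: it ensures that each $\counitA$ is a legitimate arrow in the ambient category, so the equality of underlying maps supplied by \cref{lemma:counit-functor-commute} is an equality of $\pointedCatALC$-morphisms. The proof thus consists of a one- or two-line citation of these two results.
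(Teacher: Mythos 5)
Your proposal is correct and follows essentially the same route as the paper: the paper likewise cites \cref{prop:e-is-a-morphism} for the components and then verifies the naturality square by the very computation recorded in \cref{lemma:counit-functor-commute} (the paper simply repeats that calculation inline rather than citing the lemma). Your version is, if anything, slightly tidier for noticing that the square has already been established.
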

\begin{proof}

For arbitrary $\structPairA, \structPairB \in \pointedCatALC$, we need to show that

    \begin{center}
    \begin{tikzcd}[row sep=huge, column sep = huge]
        \comonadStructA \arrow[d, "\comonadFunctor h"'] \arrow[r, "\counitA"] & \structPairA \arrow[d, "h"] \\
        \comonadStructB \arrow[r, "\counitB"']                & \structPairB
    \end{tikzcd}
    \end{center}

\noindent From~\cref{prop:e-is-a-morphism} we already know that $\counitA$ and $\counitB$
are morphisms. What is left to show is that the diagram commutes:

\begin{align*}
    (h \comp \: \counitA) \sequence &= h \: a_j                                 && \text{def. } \counitA \\
    &= \counitB \: [h \: a_0, \alpha_1, h \: a_1, ...,  \alpha_j, h \: a_j]     && \text{def. } \counitB \\
    &= (\counitB \comp \comonadFunctor \: h) \: \sequence                   && \text{def. } \comonadFunctor \: h
\end{align*}

\end{proof}

\begin{proposition}\label{prop:DL-is-a-comonad}
    The triple $ (\comonadFunctor, \counit, (\cdot)^*) $ is a comonad in Kleisli
    form on $ \pointedCatALC $
\end{proposition}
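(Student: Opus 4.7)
The plan is to verify the three Kleisli comonad laws together with the fact that the Kleisli coextension $h^{*}$ genuinely produces a morphism in $\pointedCatALC$. Throughout, the proofs will proceed by structural induction on sequences $s \in \comonadStructA$, with base case $s = [d]$ and inductive step $s' = s[\alpha, d']$, exploiting the recursive definition $h^{*}(s[\alpha, d']) = h^{*}(s)[\alpha, h(s[\alpha, d'])]$.

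First I would check that $h^{*} : \comonadStructA \to \comonadStructB$ is a valid $\pointedCatALC$-morphism whenever $h : \comonadStructA \to \structPairB$ is. Preservation of the distinguished element is immediate since $h^{*}[d] = [e]$. For the homomorphism conditions, the role-name clause essentially builds itself in: by definition, $(s, s[\alpha, d']) \in \alpha^{\comonadStructA}$ is sent by $h^{*}$ to the pair $(h^{*}(s), h^{*}(s)[\alpha, h(s[\alpha, d'])])$, which lies in $\alpha^{\comonadStructB}$ by the definition of the role interpretation on the unravelling. For concept names, one uses that $\unaryRel{\comonadStructA}{s}$ iff $\unaryRel{\structA}{\counitA s}$, together with the fact that $h$ preserves concepts at the last element.

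Next I would dispatch the three equations. The counit law $\coextensionof{\counitA} = \arrowid{\comonadStructA}$ is an easy induction: base case $(\counitA)^{*}[d] = [d]$, inductive step unfolds to $(\counitA)^{*}(s)[\alpha, \counitA(s[\alpha, d'])] = s[\alpha, d']$ using the inductive hypothesis and the fact that $\counitA$ returns the last element. The left-counit law $\counitB \comp h^{*} = h$ follows by reading off the last element of $h^{*}(s[\alpha, d']) = h^{*}(s)[\alpha, h(s[\alpha, d'])]$, which is exactly $h(s[\alpha, d'])$; the base case is $\counitB [e] = e = h[d]$.

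The main obstacle, as usual for Kleisli presentations, will be the associativity law $(g \comp f^{*})^{*} = g^{*} \comp f^{*}$ for $f : \comonadStructA \to \structPairB$ and $g : \comonadStructB \to \structPairC$. Here I would again induct on $s$. Writing $k = g \comp f^{*}$, the base case $k^{*}[d] = [k[d]\text{'s image}] = [g(f^{*}[d])] = [g[e]] = g^{*}[e] = g^{*}(f^{*}[d])$ is direct. For the step $s[\alpha, d']$, I would unfold the left-hand side to $k^{*}(s)[\alpha, k(s[\alpha, d'])] = k^{*}(s)[\alpha, g(f^{*}(s[\alpha, d']))]$ and the right-hand side to $g^{*}(f^{*}(s))[\alpha, g(f^{*}(s)[\alpha, f(s[\alpha, d'])])]$; by the inductive hypothesis these first components agree, and the second components match after one more unfolding of the definition of $f^{*}$ on $s[\alpha, d']$. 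The bookkeeping here is the only delicate part, but it amounts to matching the last element of each recursive step on the nose.
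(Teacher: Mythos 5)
Your proposal is correct and follows essentially the same route as the paper: verify the counit laws, associativity, and well-definedness of $h^{*}$ by unfolding the recursive definition of the coextension on sequences. The only (cosmetic) differences are that you organise the calculations as clean structural inductions where the paper unrolls whole sequences and invokes the prefix chain $f^{*}[a_0] \sqsubseteq \cdots \sqsubseteq f^{*}s$, and you check that $h^{*}$ preserves roles directly from the definition of $\roler^{\comonadStructB}$ rather than deriving it from the homomorphism property of $h$ together with the law $\counitB \comp h^{*} = h$ — arguably the more watertight argument.
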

\begin{proof}

    From \cref{prop:e-is-natural-trans} we have that $\counit$ is a natural
    transformation and from~\cref{prop:DL-is-a-functor} that $\comonadFunctor$
    is a functor. We need to show now that the comonadic laws are satisfied
    and that Kleisli extension behaves as expected. Precisely, we need to prove
    the following properties:

    \begin{itemize}\itemsep0em
        \nameditem{A}{$\counitA^* = id_{\comonadStructA} $}
        \nameditem{B}{$\counit \compose f^* = f $}
        \nameditem{C}{$ (g \compose f^*)^* = g^* \compose f^* $}
        \nameditem{D}{if $h$ is a morphism in $\pointedCatALC$ then $h^*$ is a morphism in $\pointedCatALC$}
    \end{itemize}
    Let $ s'' = \sequenceOf{a}{j-2}, s' = s''[\alpha_{j-1}, a_{j-1}], s = s'[\alpha_j, a_j] $.
    We will prove the comonad laws extensionally. \\

\noindent \bulletname{A} We start by showing that Kleisli extension of counit yields an identity.
    \begin{align*}
        \counitA^* s &= (\counitA^* s')[\alpha_j, \counitA s]       && \text{def. } (-)^* \\
        &= (\counitA^* s'')[\alpha_{j-1}, \counitA s', \alpha_j, \counitA s]       && \text{def. } (-)^* \\
        &= [\counitA [a_0], \alpha_1, \counitA [a_0, \alpha_1, a_1], ..., \alpha_{j-1}, \counitA s', \alpha_j, \counitA s]       && \text{apply inductively} \\
        &= [a_0, \alpha_1, a_1, ..., \alpha_{j-1}, a_{j-1}, \alpha_j, a_j] = s      && \text{def. } \counitA \\
        &= id_{\comonadStructA} s
    \end{align*}

\noindent \bulletname{B}
Let $\structPairA, \structPairB \in \pointedCatALC$ and $ f : \comonadStructA \rightarrow \structPairB $.
Then the following diagram commutes:


 \begin{minipage}{\linewidth}
  \centering
  \begin{minipage}{0.25\linewidth}

    \begin{center}
    \begin{tikzcd}[row sep=large, column sep=large]
    {\comonadStructA} \arrow[d, "f^*"'] \arrow[rd, "f"] &    \\
    {\comonadStructB} \arrow[r, "\counitB"']            & {\structPairB}
    \end{tikzcd}
    \end{center}

  \end{minipage}
  \hspace{0.05\linewidth}
  \begin{minipage}{0.65\linewidth}

    \begin{align*}
        (\counitB \compose f^*) s &= \counitB (f^* s) \\
        &= \counitB (f^*(s')[\alpha_{j}, f \: s)])   && \text{def. } (-)^*  \\
        &= f \: s   && \text{def. } \counitB \\
    \end{align*}

    \end{minipage}
  \end{minipage}

\noindent \bulletname{C}
Let $\structPairA, \structPairB, \structPair{K}{k} \in \pointedCatALC$ and
$
    f : \comonadStructA \rightarrow \structPairB, \: \:
    g : \comonadStructB \rightarrow \structPair{K}{k}
$.
Then the following diagram commutes:

    \begin{center}
    \begin{tikzcd}[row sep=large, column sep=large]
        \comonadStructA \arrow[d, "f^*"'] \arrow[rd, "(g \comp f^*)^*"] &   \\
        \comonadStructB \arrow[r, "g^*"']            & \comonadStructOver{}{K}{k}
    \end{tikzcd}
    \end{center}

    \begin{align*}
        (g \compose f^*)^* s &= (g \compose f^*)^*(s')[\alpha_{j}, (g \compose f^*) \: s)]  && \text{def. } (-)^* \\
        &= (g \compose f^*)^*(s'')[\alpha_{j-1}, (g \compose f^*) \: s', \alpha_{j}, (g \compose f^*) \: s]  && \text{def. } (-)^* \\
        &= [(g \compose f^*) [a_0], \alpha_1, (g \compose f^*) [a_0, \alpha_1, a_1], ..., \alpha_{j-1}, (g \compose f^*) \: s', \alpha_{j}, (g \compose f^*) \: s]  && \text{ind.} \\
        &= [g (f^* [a_0]), \alpha_1, g (f^* [a_0, \alpha_1, a_1]), ..., \alpha_{j-1}, g (f^* \: s'), \alpha_{j}, g (f^* \: s)] \\
        &= (1) \newline
        \intertext{since $f^* [a_0] \sqsubseteq f^* [a_0, \alpha_1, a_1] \sqsubseteq ... \sqsubseteq f^* s' \sqsubseteq f^* s$,
        we get that}
        (1) &= g^* (f^* \: s) \\
            &= (g^* \compose f^*) \: s
    \end{align*}

\noindent \bulletname{D}
        Suppose that $h$ is a morphism in $\pointedCatALC$.
        \begin{align*}
            \relComonad{}{s}{t} &\Longrightarrow \relA{h \: s}{h \: t}         && \text{h is homo.} \\
            &\Longrightarrow \relB{\counitA (h^* \: s)}{\counitA (h^* \: t)}    && \text{by \bulletname{B}} \\
            &\Longrightarrow \relComonadB{}{h^* \: s}{h^* \: t}                  && \text{def. } \relBare{}{\comonadStructB}
        \end{align*}
\end{proof}

Having the $\ALC$-comonad defined, as the next step we introduce sufficient
categorical background required to define bisimulation games in an
abstract-enough way.

\subsection{Tree-like structures, paths and embeddings.}

A \emph{covering relation}~$\prec$ for a partial order $\leq$ is a relation
satisfying $x \prec y \triangleq x \leq y \land x \ne y \land (\forall z . x
\leq z \leq y \implies z = x \lor z = y)$. This is employed to define tree-like
structures below, which  will intuitively serve as the description of
bisimulation game strategies.

\begin{definition}
    An \emph{ordered interpretation} $\ordStructA$ is a pointed
    interpretation $(\interI, \elD)$ equipped with a partial order on
    $\DeltaI$ such that $\uparrow \! (d) \triangleq \{ d' \in \DeltaI \mid d \le d' \}$
    is a tree order that satisfies the following condition
    $(\textbf{D})$ for $x, y \in \: \uparrow \! (d)$, we have $x \prec y$
    iff $(x,y)\in \roler^{\interI}$ for some $\roler \in \sigma_r$. Morphisms
    between ordered interpretations preserve the covering relation. We put
    $\pointedOrdCatK$ to be the category of ordered interpretation as objects
    with $k$ bounding the height of the underlying tree.
\end{definition}

We next define different kinds of embeddings, essential to characterize plays.

\begin{definition}
    A morphism in $\pointedOrdCatK$ is an \emph{embedding} if it is an injective
    strong homomorphism. We write $ e : \structA \embedding \structB$
    to mean that $e$ is an embedding. Now, we define a subcategory $\pathsCat$
    of $\pointedCat$ whose objects have linear tree orders, so they
    comprise a single branch. We say that $e : P \embedding \structA $ is a
    \textit{path embedding} if $P$ is a path. A morphism
    $ f : \structA \rightarrow \structB \in \morphismsOf{\pointedOrdCatK}$ is a
    \textit{pathwise embedding} if for any path embedding
    $ e : P \embedding \structA, f \circ e$ is a path embedding.
\end{definition}

Let $\sqsubseteq$ be the lexicographical order on sequences from $\DeltaI$.
From the construction of $\pointedOrdCatK$, we can extract a free functor, for
which construction is justified by the following lemma:
\begin{lemma}
    There exists a canonical functor $F_k \: \structA = (\comonadStructA, \sqsubseteq)$.
\end{lemma}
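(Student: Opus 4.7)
The plan is to define $F_k$ on objects by $F_k(\structA, d) \triangleq (\comonadStructA, \sqsubseteq)$, where $\sqsubseteq$ is the prefix/lexicographic order on the play sequences $[a_0, \alpha_1, a_1, \ldots, \alpha_j, a_j]$ that constitute the domain of $\comonadStructA$, and to define its action on arrows by $F_k h \triangleq \comonadFunctor h$. Functoriality will then be inherited for free from \cref{prop:DL-is-a-functor}, so essentially all of the work is to check that $F_k$ lands in the subcategory $\pointedOrdCatK$, i.e. that (a) $(\comonadStructA, \sqsubseteq)$ is a legitimate ordered interpretation of height at most $k$, and (b) $\comonadFunctor h$ preserves the covering relation.

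For (a), first I would observe that every element of the domain of $\comonadStructA$ is a sequence beginning with $d$, hence $[d]$ is the minimum with respect to $\sqsubseteq$ and $\uparrow\!([d])$ coincides with the whole domain. Because distinct extensions of a common prefix are incomparable in $\sqsubseteq$ while all prefixes of a given sequence form a chain, $(\uparrow\!([d]), \sqsubseteq)$ is a tree order, and the bound on length $k$ bounds its height. To verify condition $(\textbf{D})$, I would spell out the covering relation: $s \prec t$ means $t$ extends $s$ by exactly one step $[\alpha, d']$, which is exactly the definition of $(s, t) \in \alpha^{\comonadStructA}$ in the unravelling. The converse inclusion is the same observation read backwards.

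For (b), given a morphism $h : \structPairA \to \structPairB$ in $\pointedCatALC$, the action of $\comonadFunctor h$ on a sequence preserves its length and replaces only the carrier elements; consequently $s \prec t$ in $\comonadStructA$ implies $\comonadFunctor h(s) \prec \comonadFunctor h(t)$ in $\comonadStructB$, so $\comonadFunctor h$ is a morphism of $\pointedOrdCatK$. Preservation of the distinguished element $[d] \mapsto [h\,d] = [e]$ is immediate from $h\,d = e$.

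Finally, functoriality of $F_k$ (identity and composition) transports directly from the corresponding clauses in the proof of \cref{prop:DL-is-a-functor}, since $F_k$ and $\comonadFunctor$ agree on morphisms and the order structure is canonically attached to the output object. The only mildly delicate point is the mismatch between ``lexicographic'' and ``prefix'' orders: I would state once and for all that $\sqsubseteq$ is read as the prefix order on plays, so that chains correspond to initial segments of a single play and antichains to branching choices, which is precisely what makes $(\comonadStructA, \sqsubseteq)$ a tree; this is the step where a reader might stumble, but nothing beyond this clarification is required.
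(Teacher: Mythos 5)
Your proposal is correct, but it takes a genuinely different and more elementary route than the paper's. You build $F_k$ by hand: you check that the prefix order turns $\comonadStructA$ into a legitimate object of $\pointedOrdCatK$ (a tree order rooted at $[d]$, condition $(\textbf{D})$ holding because $s \prec t$ exactly when $t = s[\alpha,d']$, which is the definition of $\alpha^{\comonadStructA}$, and height bounded by $k$), you check that $\comonadFunctor h$ preserves the covering relation since it preserves lengths and role labels, and you inherit functoriality from \cref{prop:DL-is-a-functor}. The paper instead argues abstractly: it invokes the isomorphism $\EMcat \cong \pointedOrdCatK$ imported from Abramsky--Shah, observes the forgetful functor $U_k : \pointedOrdCatK \to \pointedCat$, and obtains $F_k$ as the right adjoint of $U_k$, i.e.\ as the cofree-coalgebra functor of the terminal resolution of $\comonadFunctor$. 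The abstract route establishes strictly more than the literal statement: the adjunction $U_k \dashv F_k$ is what makes $F_k$ \emph{canonical} and is what the subsequent transfer of \cref{bnf-game-theorem} from the prior work implicitly relies on. Your route proves exactly what the lemma asserts and has the virtue of being self-contained and checkable without Eilenberg--Moore machinery; to recover the full strength of the paper's argument you would additionally need to verify that your concrete $(\comonadStructA,\sqsubseteq)$ satisfies the universal property of the cofree coalgebra. Your closing remark that $\sqsubseteq$ must be read as the prefix order rather than a genuine lexicographic total order is a correct and worthwhile clarification --- condition $(\textbf{D})$ would fail for a total order, so the paper's wording is best read as you suggest.
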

\begin{proof}
    The proof is heavy and relies on several categorical notions that are not crucial for
    the paper hence we do not introduce them here; consult~\cite[Chapters 9
    \& 10.3]{awodeybook} instead. The goal is to describe the desired functor
    in a way such that it yields the canonical, terminal resolution of a
    comonad $\comonadFunctor$. First, from~\cite[Theorem 9.5]{relating-struct-and-pow}
    we know that for any $k > 0$, the Eilenberg-Moore category $\EMcat$ is
    isomorphic to $\pointedOrdCatK$. Having that, we can observe that there is
    a forgetful functor $U_k : \pointedOrdCatK \rightarrow \pointedCat $
    mapping $\ordStructA$ to $\structPairA$ which forgets the partial order.
    Thus, we can employ the result that follows from~\cite[Theorem
    9.6]{relating-struct-and-pow} to infer that the functor $U_k$ has a right
    adjoint $F_k$. The relationship between introduced categories is depicted
    on the diagram below, where the arrow from $\pathsCat$ to $\pointedCat$ is the
    evident inclusion functor.

     \begin{center}
         \begin{tikzcd}
            \EMcat \cong \pointedOrdCatK \arrow[r, "U_k"', bend right] & \pointedCat \arrow[l, "F_k"', bend right] & \pathsCat \arrow[l]
         \end{tikzcd}
     \end{center}

    The comonad arising from $F \dashv U$ adjunction is precisely $\comonadFunctor$.
\end{proof}

\subsection{A categorical view on games}

Given a sufficient background, we can move on to the main result, namely, the
characterisation of $\equiv^{\ALC_k}$ in the language of category theory. We
start with defining what it means for a morphism in $ f : \structA
\rightarrow \structB \in \morphismsOf{\pointedOrdCatK}$ to be \textit{open}.
This holds if, whenever we have a commutative square as on the LHS then there is an
embedding $Q \embedding \structA$ such that the diagram on the RHS commutes.

 \begin{minipage}{\linewidth}
      \centering
      \begin{minipage}{0.45\linewidth}
\begin{center}
    \begin{tikzcd}
    P \arrow[d, tail] \arrow[r, tail] & Q \arrow[d, tail] \\
    \structA \arrow[r, "f"']       & \structB
    \end{tikzcd}
\end{center}
      \end{minipage}
      \hspace{0.05\linewidth}
      \begin{minipage}{0.45\linewidth}
\begin{center}
    \begin{tikzcd}
    P \arrow[d, tail] \arrow[r, tail] & Q \arrow[d, tail] \arrow[ld, tail] \\
    \structA \arrow[r, "f"']       & \structB
    \end{tikzcd}
\end{center}
      \end{minipage}
  \end{minipage}

\noindent

Finally, we can define \textit{back-and-forth equivalence} $\backAndForth$
between objects in $\pointedCat$, intuitively corresponding to conditions (b)
and (c) from the definition of a bisimulation. This holds if there is an object
$R$ in $\pointedOrdCatK$ and a span of open pathwise embeddings such that:

\begin{center}
    \begin{tikzcd}
                    & R \arrow[ld] \arrow[rd] &                 \\
    F_k \structPairA &                         & F_k \structPairB
    \end{tikzcd}
\end{center}

We shall now define a back-and-forth game $\backAndForthGameOf{\Phi}$ played
between the interpretations $\structPairA$ and $\structPairB$. Positions of the
game are pairs $(s, t) \in \comonadStructProdPhi$. We define a relation
$\winningPos \subseteq \comonadStructProdPhi$ as follows. A pair $(s, t)$ is in
$\winningPos$ iff for some path $P$, \textit{path embeddings} $e_1 : P
\embedding \structA$ and $e_2 : P \embedding \structB$ , and $p \in P$, $s =
e_1 \: p$ and $t = e_2 \: p$. The intention is that $\winningPos$ picks out the
winning positions for Duplicator. At the start of each round of the game, the
position is specified by $(s, t) \in \comonadStructProdPhi$. The initial
position is $([d], [e])$. The round proceeds as follows. Spoiler either chooses
$s' \succ s$, and Duplicator must respond with $t' \succ t$, producing the new
position $(s', t')$; or Spoiler chooses $t'' \succ t$, and Duplicator must
respond with $s'' \succ s$, producing the new position $(s'', t'')$. Duplicator
wins the round if she can respond, and the new position is in
$\winningPos$. We follow the same notation convention as for
$\comonadFunctorPhi$ with respect to extensions $\Phi$ of the game
$\backAndForthGameSymbol{\Phi}$. The following theorem follows
from~\cite[Theorem 10.1]{relating-struct-and-pow}.

\begin{theorem}\label{bnf-game-theorem}
    Duplicator has a winning strategy in $\backAndForthGame$ game if and only if $\backAndForth$.
\end{theorem}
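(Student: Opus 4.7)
The plan is to prove the equivalence in both directions by exhibiting the direct correspondence between plays of the back-and-forth game and paths in the ordered interpretations obtained via $F_k$. The key conceptual bridge is that elements of $\comonadStructA$ are precisely plays of Spoiler in $\structPairA$, and a position $(s,t)\in\comonadStructProd$ in $W$ records a pair of path embeddings into $\structPairA$ and $\structPairB$ that agree on some common path $P$. From this point of view, a span of open pathwise embeddings is exactly the ``transcript'' of a winning strategy.

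For the forward direction ($\Rightarrow$), suppose Duplicator has a winning strategy $\sigma$ in $\backAndForthGame$. I would build $R \in \pointedOrdCatK$ as the interpretation whose domain consists of all positions $(s,t)\in\comonadStructProd$ reachable from $([d],[e])$ by plays consistent with $\sigma$, ordered by the prefix relation (componentwise, which here coincides in both coordinates since $\sigma$ produces equal-length sequences). The condition $(\textbf{D})$ for ordered interpretations will hold by construction, because each covering step corresponds to one round of the game and is thus witnessed by an $\roler$-edge in both unravellings. The two projections $\pi_1 \colon R \to F_k\structPairA$ and $\pi_2 \colon R \to F_k\structPairB$ sending $(s,t) \mapsto s$ and $(s,t) \mapsto t$ will be pathwise embeddings because along any branch of $R$ both coordinates trace out a single path. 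Openness of $\pi_i$ is exactly the winning condition on $\sigma$: the ability to extend a commutative square translates to Duplicator's ability to respond to any one-step Spoiler move on side $i$.

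For the backward direction ($\Leftarrow$), assume a span $F_k\structPairA \xleftarrow{e_1} R \xrightarrow{e_2} F_k\structPairB$ of open pathwise embeddings. I would define $\sigma$ by induction on the round number, maintaining the invariant that the current position $(s,t)$ lies in the image of $(e_1,e_2)$ composed with the unique branch from the root of $R$ identified at the previous round. A Spoiler move extending $s$ to $s' \succ s$ yields a path $P \embedding F_k\structPairA$ through $s'$; pulling back through $e_1$ and using openness produces an extension in $R$, whose image under $e_2$ gives Duplicator's reply $t' \succ t$. Symmetric treatment handles Spoiler's moves on the right. Because $e_1$ and $e_2$ are pathwise embeddings, every position produced this way lies in $W$, so $\sigma$ is winning.

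The main obstacle will be the careful bookkeeping in the forward direction: verifying that the set of reachable positions under $\sigma$ really carries the structure of an object in $\pointedOrdCatK$ — in particular that $\uparrow\!(([d],[e]))$ is a tree of height at most $k$ satisfying~$(\textbf{D})$ — and that openness of the projections is equivalent to, rather than merely implied by, the Duplicator's ability to respond. This essentially repeats, in our pointed and enriched-vocabulary setting, the argument of \cite[Theorem~10.1]{relating-struct-and-pow}; since our $\ALC$-comonad $\comonadFunctor$ differs from the modal comonad of Abramsky--Shah only in that we track concept names from $\sigma_c$ rather than propositional variables and pass through $\pointedCatALC$ rather than the category of relational structures, the transfer is routine once the analogue of the Eilenberg--Moore isomorphism $\EMcat \cong \pointedOrdCatK$ is in hand, which we invoked in the preceding lemma.
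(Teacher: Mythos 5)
Your proposal is correct, and it reconstructs essentially the argument the paper relies on: the paper in fact gives no proof of this theorem at all, deferring entirely to Theorem~10.1 of~\cite{relating-struct-and-pow}, and your two-directional construction (span of open pathwise embeddings from the set of positions reachable under a winning strategy, and conversely a strategy obtained by lifting Spoiler's one-step path extensions through openness) is precisely that cited argument transported to the pointed, concept-name-enriched setting. The only caveat is the bookkeeping you already flag yourself — checking that the reachable positions form an object of $\pointedOrdCatK$ satisfying condition $(\textbf{D})$ and that openness is equivalent to (not just implied by) Duplicator's ability to respond — which is exactly the routine transfer the paper leaves implicit.
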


The above theorem with the aforementioned definitions were just slight variations
of theorems and notions presented in~\cite{relating-struct-and-pow}. We have
accommodated them to the description logic setting and now we will glue them
together with our definition of the bisimulation game
from~\cref{section:bisim-games}.

\begin{theorem}\label{bnf-game-is-bisimiulation-theorem}
    Given interpretations $\structPairA$ and $\structPairB$, the
    $\backAndForthGame$ game for the $\comonadFunctor$ comonad is
    equivalent to the $k$-round $\ALC(\vocabV)$-bisimulation game between $\structPairA$ and $\structPairB$.
\end{theorem}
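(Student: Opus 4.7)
The plan is to establish a direct, configuration-by-configuration correspondence between the two games and then argue that winning strategies transfer. The key observation is that elements of $\comonadStructA$ are exactly sequences of the form $[a_0, \alpha_1, a_1, \ldots, \alpha_j, a_j]$ with $(a_i, a_{i+1}) \in \alpha_i^{\interI}$ and $a_0 = \elD$, which are precisely the play histories that appear as configurations in the $\ALC(\vocabV)$-bisimulation game. So, I would begin by identifying a position $(s, t) \in \comonadStructProd$ of $\backAndForthGame$ with the configuration $(\interI, s; \interJ, t)$ of the bisimulation game.

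Next, I would show that the move sets agree. Viewing $\comonadStructA$ through the free functor $F_k$ as an ordered interpretation, the covering relation $\prec$ is by condition $(\textbf{D})$ exactly the relation $s \prec s'$ iff $s' = s[\roler, d']$ for some $\roler \in \sigma_r$ and $d' \in \DeltaI$ with $(\counitA s, d') \in \roler^{\interI}$. Hence Spoiler's move in $\backAndForthGame$ (choosing $s' \succ s$ on either side) is literally Spoiler's move in the $\ALC(\vocabV)$-bisimulation game (choosing a role $\roler$ and a successor element), and similarly for Duplicator's response.

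The main obstacle is showing that the categorical winning set $\winningPos$ coincides with the invariant that the two configurations remain in $\vocabV$-harmony along the whole play history. I would argue it as follows: given $(s, t) \in W$, there is a path $P$ and path embeddings $e_1 : P \embedding F_k\structA$, $e_2 : P \embedding F_k\structB$ with a common $p \in P$ mapping to $s$ and $t$. Since embeddings in $\pointedOrdCatK$ are injective strong homomorphisms preserving the covering relation, the two branches from the root to $s$ and to $t$ must have equal length, carry identical role labels $\alpha_1, \ldots, \alpha_j$ at corresponding positions, and the corresponding elements must satisfy the same concept names in $\sigma_c$, which is precisely $\vocabV$-harmony at every prefix. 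Conversely, given two play histories $s, t$ of equal length with matching role labels and harmony at every corresponding pair, one builds a path $P$ of that length whose role labels and concept labelling at each node are inherited from the common structure, and defines $e_1, e_2$ by sending the $i$-th node of $P$ to the $i$-th prefix of $s$ and $t$ respectively; strong homomorphism and injectivity follow from the tree shape of $F_k\structA$ and $F_k\structB$ and from harmony.

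Having pinned down the correspondence between positions, moves, and winning set, the equivalence of winning strategies follows by induction on the number of remaining rounds: Duplicator's back-and-forth strategy gives, at each configuration, a successor that maintains $W$ (hence harmony), which is exactly a bisimulation-game response, and vice versa. Combining this with \cref{bnf-game-theorem} yields the theorem. I expect the checking of the winning-condition correspondence, in particular verifying strongness of the embeddings (no spurious role edges introduced) and the treatment of the initial position $([\elD], [\elE])$, to be the only genuinely delicate step; the rest is bookkeeping on sequences.
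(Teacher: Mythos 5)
Your proposal is correct and follows essentially the same route as the paper: identify positions and moves of $\backAndForthGame$ with configurations and moves of the bisimulation game, then show that membership in $\winningPos$ coincides with $\vocabV$-harmony along the play histories (via the strong-homomorphism property of the path embeddings and transitivity of harmony), and conclude by induction on the number of rounds. If anything, you are more explicit than the paper about the converse direction of the winning-condition correspondence, namely the construction of the path $P$ and the embeddings from two harmonious histories, which the paper leaves implicit in its ``iff''.
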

\begin{proof}
    First, note that configurations and the moves are structurally the same in
    both games. Hence, by induction over $k$ it suffices to show that the
    winning conditions coincide.

    \noindent \textbf{Base}. Let $k = 0$ and suppose $([d], [e]) \in \winningPos$.
    That holds iff there are path embeddings $e_1 : P \embedding \structA$,
    $e_2 : P \embedding \structB$ and $p \in P$ such that $e_1 \: p = [d]$ and
    $e_2 \: p = [e]$. By strong homomorphism property, $d$ is in $\vocabV$-harmony
    with $p$, which in turn is in $\vocabV$-harmony with $d$, which
    by transitivity of $\vocabV$-harmony concludes this case.

    \noindent \textbf{Step}. Assume that the proposition holds for all $i \le k$.
    We need to show that the winning conditions coincide for games of length $k+1$.
    Suppose $s = s'[\alpha_s, d'], t = t'[\alpha_t, e']$ and $(s, t) \in \winningPos$.
    That holds iff there are path embeddings
    $e_1 : P \embedding \structA$, $e_2 : P \embedding \structB$ and $p \in P$
    such that $e_1 \: p = s$ and $e_2 \: p = t$. By definition of $\winningPos$
    relation, we get that $(s', t') \in \winningPos$ and hence, by the induction
    hypothesis, $s, t$ are a valid winning configuration in $\ALC$ game.
    It remains to show that $[\alpha_s, d']$ and $[\alpha_t, e']$ are valid
    moves leading to winning positions. From $e_1 \: p = s$ and $e_2 \: p = t$
    we immediately get that $\alpha_s = \alpha_t$ and since $e_1, e_2$ are
    embeddings we have that $d'$ is in $\vocabV$-harmony with $p$ which in turn is in
    $\vocabV$-harmony with $e'$, hence by transitivity of $\vocabV$-harmony, we are done.
\end{proof}

By applying \cref{bnf-game-theorem}, \cref{bnf-game-is-bisimiulation-theorem}
and \cref{fact:games-bisimulations-and-concept-eq}, we derive our first result
on comonadic semantics for description logic games, namely:

\begin{theorem}
    $ \structPairA \equiv^{\ALC_k} \structPairB \; \iff \; \backAndForth$.
\end{theorem}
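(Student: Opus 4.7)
The proof is a three-step chaining of the results that have just been assembled, and the plan is simply to traverse the following diagram of equivalences:
\[
    \structPairA \equiv^{\ALC_k} \structPairB
    \;\iff\;
    \text{Duplicator wins the $k$-round $\ALC(\vocabV)$-bisimulation game}
    \;\iff\;
    \text{Duplicator wins } \backAndForthGame
    \;\iff\;
    \backAndForth.
\]
First, I would unpack $\equiv^{\ALC_k}$ and invoke \cref{fact:games-bisimulations-and-concept-eq} (instantiated with $\DL{L} = \ALC$ and $\Phi = \emptyset$) to pass from concept-equivalence at rank $k$ to the existence of a Duplicator winning strategy in the $k$-round $\ALC(\vocabV)$-bisimulation game on $\structPairA$ and $\structPairB$. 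This step is completely off-the-shelf; it is the standard model-theoretic characterisation of bisimilarity.

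Next, I would apply \cref{bnf-game-is-bisimiulation-theorem}, which identifies, round-by-round, the configurations and legal moves of the back-and-forth game $\backAndForthGame$ for the comonad $\comonadFunctor$ with those of the $\ALC(\vocabV)$-bisimulation game, and shows that the winning positions $\winningPos$ coincide with the harmony-preserving configurations reachable by play. Transferring a winning strategy across this equivalence is mechanical. Finally, I would apply \cref{bnf-game-theorem} to translate the existence of a Duplicator winning strategy in $\backAndForthGame$ into the categorical back-and-forth equivalence $\backAndForth$, expressed as a span of open pathwise embeddings over $F_k\structPairA$ and $F_k\structPairB$.

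There is, honestly, no genuine obstacle at this point: all three equivalences have already been established, so the only substantive content of the proof is to observe that they can be composed. The mild subtlety worth flagging is bookkeeping around $k \in \Nomega$, in particular the case $k = \omega$: \cref{fact:games-bisimulations-and-concept-eq} and \cref{bnf-game-theorem} are already stated uniformly in this parameter, and the identification of games in \cref{bnf-game-is-bisimiulation-theorem} was proved by induction on $k$, so the $\omega$ case follows by observing that an $\omega$-round strategy restricts consistently to all finite depths. Thus, the entire proof reduces to a one-line composition.

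\begin{proof}
Combine \cref{fact:games-bisimulations-and-concept-eq} (with $\DL{L} = \ALC$, $\Phi = \emptyset$), \cref{bnf-game-is-bisimiulation-theorem}, and \cref{bnf-game-theorem}.
\end{proof}
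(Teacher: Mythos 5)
Your proof is correct and is exactly the paper's argument: the theorem is stated there as an immediate consequence of composing \cref{fact:games-bisimulations-and-concept-eq}, \cref{bnf-game-is-bisimiulation-theorem}, and \cref{bnf-game-theorem}, in precisely the chain of equivalences you describe. Nothing further is needed.
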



\section{Comonads for extensions of $\ALC$}
\label{section:extensions}

We have defined description logic comonad in the previous chapter
and in~\cref{section:game-reductions} we have constructed a family of game
reductions that eliminate the logic extensions. By leveraging cautious
categorical operations, we now combine these two and arrive at having
game comonads for all considered extensions of $\ALC$.

\subsection{A generalized framework for extensions}

The approach that we undertook relies on an observation that we had based on
how $I$-morphisms were incorporated in~\cite{relating-struct-and-pow}. In our
case, relative comonads serve as a tool to start within the base category where
our objects live and then enrich the interpretations encoding the
additional capabilities available in bisimulation games for richer logics. We
do this via the already-presented reductions
from~\cref{section:game-reductions}, followed by the notion of unravelling
using $\comonadFunctor$ defined in~\cref{section:game-comonads}, all
established in a generalised framework using relative comonads.

\begin{definition}
    A \emph{vocabulary-map} $\vocabMap$ is a triple
    $(\vocabMap_i, \vocabMap_c, \vocabMap_r) : \Ilang \times \Clang \times \Rlang \rightarrow \Ilang \times \Clang \times \Rlang$
    that maps the vocabulary $(\sigmaTriple) \longmapsto (\vocabMap_i (\sigma_i), \vocabMap_c (\sigma_c), \vocabMap_r (\sigma_r))$.
\end{definition}

\begin{definition}[Reduction functor]\label{prop:reduction-functor}
    Let $\vocabMap$ be a vocabulary map and $\reduction{f}$ a game reduction.
    A \emph{$(\reduction{f}, \vocabMap)$-reduction-functor} is a
    functor
    $ J : \pointedCat \rightarrow \pointedCatOf{\vocabMap \: \vocabV}$
    acting $\structPair{I}{d} \longmapsto \structPair{\reductionI{f} \: I}{\reductionR{f} \: d}$.
\end{definition}

While~\cref{prop:reduction-functor} is stated in a general setting, we
only consider the reductions from~\cref{section:game-reductions}.
Clearly, the functors map objects to objects. When it comes to morphisms,
however, we need to handle a certain delicacy. To make reasoning simpler, let us
focus for a moment on $\ALCSelf$. Notice that interpretations that are
$\ALC$-homomorphic are not necessarily $\ALCSelf$-homomorphic, as that would
mean that self operator is expressible in bare $\ALC$, which we know is
not the case. Consecutively, that means that homomorphic interpretations are
not necessarily homomorphic after applying $\transSelf$ reduction.

To tackle this issue, we shall submerse ourselves into a particular wide subcategory,
a subcategory containing all the objects of the category of interest.

\begin{definition}
    Given $\Phi \subseteq \extSet$, a $\Phi$-subcategory of $\pointedCat$ is
    a subcategory of $\pointedCat$ with all objects from $\pointedCat$ and
    morphisms limited to $\ALC\Phi$-homomorphisms.
\end{definition}

\begin{proof}
    We need to show that the $\Phi$-subcategory of $\pointedCat$ indeed forms a
    category. First, it is easy to see that we still have identity morphisms
    on objects. Second, $\ALC\Phi$-homomorphisms are closed under composition
    which concludes the proof.
\end{proof}

From now on, when considering a set of extensions $\Phi$, we shall work in
a $\Phi$-subcategory. In this setting, the action on morphisms for reduction
functors is an identity, as the very same homomorphism will work as per
\cref{theorem:sublogics-game-reduction}. To restrain the reader from
drowning in overly verbose notation, the underlying $\Phi$-subcategory will
be taken implicitly from the context.
To sum up, we obtain a family of $(\reduction{f}_\genericExt,
\vocabMap_\genericExt)$-reduction-functors, where $\genericExt \in \extSet$ are
considered logic extensions.

\begin{definition}
    Let $\vocabMap, \vocabMap'$ be a vocabulary-maps. We say that a functor
    $ F : \pointedCat \rightarrow \pointedCatOf{\vocabMap \: \vocabV} $
    is invariant over vocabulary-maps iff for any $\vocabMap'$ it
    can be lifted to
    $ F_{\vocabMap'} : \pointedCatOf{\vocabMap' \: \vocabV} \rightarrow \pointedCatOf{\vocabMap \: (\vocabMap' \: \vocabV)} $.
    We shall omit the subscript should the coercion be unambiguous.
\end{definition}

\begin{lemma}\label{prop:vocab-maps-inv-composition}
    Invariance over vocabulary maps behaves well under composition, \ie,
    the composition of functors invariant over vocabulary maps yields a functor
    invariant over vocabulary maps.
\end{lemma}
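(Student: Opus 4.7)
The plan is to chain the two lifts. Suppose $F_1$ and $F_2$ are functors invariant over vocabulary-maps, with respective outer vocabulary-maps $\vocabMap_1$ and $\vocabMap_2$, so that $F_1 : \pointedCat \rightarrow \pointedCatOf{\vocabMap_1 \: \vocabV}$ and $F_2 : \pointedCat \rightarrow \pointedCatOf{\vocabMap_2 \: \vocabV}$. In order to even form the composite $F_2 \comp F_1$ we must first lift $F_2$ along $\vocabMap_1$, which the invariance hypothesis of $F_2$ grants us. The resulting functor $(F_2)_{\vocabMap_1} \comp F_1$ goes from $\pointedCat$ to $\pointedCatOf{\vocabMap_2 \: (\vocabMap_1 \: \vocabV)}$, so its outer vocabulary-map is $\vocabMap_2 \comp \vocabMap_1$. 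Invariance of this composite is exactly what we need to prove.

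To this end, fix an arbitrary vocabulary-map $\vocabMap'$. By invariance of $F_1$ we obtain a lift $(F_1)_{\vocabMap'} : \pointedCatOf{\vocabMap' \: \vocabV} \rightarrow \pointedCatOf{\vocabMap_1 \: (\vocabMap' \: \vocabV)}$. Instantiating the invariance of $F_2$ with the input map $\vocabMap_1 \comp \vocabMap'$ yields $(F_2)_{\vocabMap_1 \comp \vocabMap'} : \pointedCatOf{(\vocabMap_1 \comp \vocabMap') \: \vocabV} \rightarrow \pointedCatOf{\vocabMap_2 \: ((\vocabMap_1 \comp \vocabMap') \: \vocabV)}$. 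Since vocabulary-map composition is defined componentwise on the triple $(\sigmaTriple)$, the intermediate codomain $\vocabMap_1 \: (\vocabMap' \: \vocabV)$ is definitionally equal to $(\vocabMap_1 \comp \vocabMap') \: \vocabV$, so the two lifts concatenate into a functor $\pointedCatOf{\vocabMap' \: \vocabV} \rightarrow \pointedCatOf{(\vocabMap_2 \comp \vocabMap_1) \: (\vocabMap' \: \vocabV)}$, which is precisely the lift of $F_2 \comp F_1$ along $\vocabMap'$ required by the definition of invariance.

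There is no genuine obstacle; the argument is pure bookkeeping on the types, reducing to associativity of vocabulary-map application, which is immediate from the componentwise definition. Functoriality of the composite lift follows from the functoriality of each factor together with associativity of functor composition, and within the $\Phi$-subcategory discipline fixed earlier the reduction functors act as identity on morphisms, so no extra verification on the morphism component is required to conclude.
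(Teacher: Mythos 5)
Your proposal is correct and follows essentially the same route as the paper: lift each factor along the appropriate vocabulary-map and observe that the two lifts concatenate into the required lift of the composite, everything reducing to associativity of vocabulary-map application. The only cosmetic difference is that you take both functors to have domain $\pointedCat$ and so must pre-compose the second lift with $\vocabMap_1 \comp \vocabMap'$, whereas the paper states the factors as already composable and lifts both along a single $\vocabMap''$; the underlying argument is identical.
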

\begin{proof}
    Let $
        F : \pointedCatOf{\vocabV} \longrightarrow \pointedCatOf{\vocabMap \: \vocabV}, \:
        G : \pointedCatOf{\vocabMap \: \vocabV} \longrightarrow \pointedCatOf{\vocabMap' \: \vocabV}
    $ be functors invariant over vocabulary maps. We want to show that
    $
        (G \: \comp \: F) : \pointedCatOf{\vocabV} \longrightarrow \pointedCatOf{\vocabMap' \: \vocabV}
    $ is invariant over vocabulary maps. Let us take any vocabulary map $\vocabMap''$.
    By assumption, we can lift $F, \: G$ to $
        F_{\vocabMap''} : \pointedCatOf{\vocabMap'' \: \vocabV} \longrightarrow \pointedCatOf{(\vocabMap \comp \vocabMap'') \: \vocabV}, \:
        G_{\vocabMap''} : \pointedCatOf{(\vocabMap \comp \vocabMap'') \: \vocabV} \longrightarrow \pointedCatOf{(\vocabMap' \comp \vocabMap'') \: \vocabV}
    $. Then such composition is of the form $
        (G_{\vocabMap''} \comp F_{\vocabMap''}) : \pointedCatOf{\vocabMap'' \: \vocabV} \longrightarrow \pointedCatOf{(\vocabMap' \comp \vocabMap'') \: \vocabV}
    $ and thus $(G \comp F)$ is invariant over vocabulary maps.

 \begin{minipage}{\linewidth}
  \centering
  \begin{minipage}{0.30\linewidth}

    \begin{center}
    \begin{tikzcd}[row sep=huge, column sep=huge]
    {\pointedCatOf{\vocabV}} \arrow[d, "F"'] \arrow[rd, "G \comp F"] &    \\
    {\pointedCatOf{\vocabMap \: \vocabV}} \arrow[r, "G"']            & {\pointedCatOf{\vocabMap' \: \vocabV}}
    \end{tikzcd}
    \end{center}

  \end{minipage}
  \hspace{0.05\linewidth}
  \begin{minipage}{0.60\linewidth}

    \begin{center}
    \begin{tikzcd}[row sep=huge, column sep=huge]
        {\pointedCatOf{\vocabMap'' \: \vocabV}} \arrow[d, "F_{\vocabMap''}"'] \arrow[rd, "(G \comp F)_{\vocabMap''}"] &    \\
        {\pointedCatOf{(\vocabMap \comp \vocabMap'') \: \vocabV}} \arrow[r, "G_{\vocabMap''}"']            & {\pointedCatOf{(\vocabMap' \comp \vocabMap'') \: \vocabV}}
    \end{tikzcd}
    \end{center}

    \end{minipage}
  \end{minipage}

\end{proof}

What we want to capture by this is that such a functor acting on
$\pointedCat$ category is natural in $\vocabV$, \ie does not depend on the
contents of the concepts or roles. It is easy to see the following facts:

\begin{observation}\label{obs:comonad-vocab-maps-invariant}
    $\comonadFunctor$ is invariant over vocabulary-maps.
\end{observation}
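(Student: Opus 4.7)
The plan is to observe that the unravelling construction defining $\comonadFunctor$ is stated in a vocabulary-agnostic manner. Given any vocabulary-map $\vocabMap'$, I would define the lifted functor $\comonadFunctor_{\vocabMap'}$ on $\pointedCatOf{\vocabMap' \: \vocabV}$ by the identical recipe: sequences $\sequence$ respecting the roles of the enlarged vocabulary, concept names pulled back along the counit, and the covering relation determined by the last transition. On morphisms, $\comonadFunctor_{\vocabMap'}$ acts by pointwise application, just as $\comonadFunctor$ does. Individual names, when present in $\vocabMap' \: \vocabV$, are handled uniformly by interpreting $\indvo$ in $\comonadFunctor_{\vocabMap'}(\interI, \elD)$ as whichever singleton sequence lies over $\indvo^{\interI}$ when that element is reachable from $\elD$ (and as $\emptyset$ otherwise), which does not affect the rest of the construction.

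The key step is to verify that this lifted functor inherits the comonadic structure of $\comonadFunctor$. Inspecting the proofs of \cref{prop:DL-is-a-functor,prop:e-is-a-morphism,prop:e-is-natural-trans,prop:DL-is-a-comonad}, every argument is a manipulation of sequences of the form $\sequence$ together with compositions of homomorphisms; no step refers to particular concept or role symbols, nor to how many of them exist. Hence the same proofs carry over verbatim to establish that $\comonadFunctor_{\vocabMap'}$ is a functor, that $\counit$ remains a natural transformation, that the Kleisli coextension satisfies its three equations, and that the result is a comonad on $\pointedCatOf{\vocabMap' \: \vocabV}$.

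I do not foresee a significant obstacle: the observation is essentially a sanity check that the unravelling construction is polymorphic in the choice of vocabulary. The only point that deserves any attention is ensuring that the extra individual names in the enlarged vocabulary are given some consistent interpretation in the unravelled structure, but since $\ALC$-homomorphisms and the comonadic equations are insensitive to the interpretation of individual names, any coherent choice suffices. The substantive content of the comonad --- the shape of the unravelled tree, the counit extracting the last element, and the recursively-defined Kleisli coextension --- depends only on the abstract structure of a pointed interpretation, which justifies the claim.
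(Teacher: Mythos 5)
The paper offers no proof of this observation---it is prefaced only by ``It is easy to see the following facts''---and your argument supplies exactly the intended justification: the unravelling construction and all of the comonad-law proofs (\cref{prop:DL-is-a-functor}--\cref{prop:DL-is-a-comonad}) are parametric in the choice of concept and role names, so they instantiate verbatim over $\vocabMap'\,\vocabV$. The one imprecision is your handling of individual names: a named element $\indvo^{\interI}$ other than the root is never a \emph{singleton} sequence in the unravelling (every sequence starts at $\elD$) but rather a possibly non-unique longer sequence ending in it; your fallback remark that any coherent partial interpretation suffices, because the comonad equations and the morphisms in the relevant $\Phi$-subcategory do not constrain individual names, is what actually carries the point---and it is the nominal reduction $\transO$, not the comonad, that is responsible for making such a choice canonical.
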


\begin{observation}\label{obs:reduction-functors-vocab-maps-invariant}
    $(\reduction{f}_\genericExt, \vocabMap_\genericExt)$-reduction-functors
    are invariant over vocabulary-maps.
\end{observation}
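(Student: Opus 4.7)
The plan is to verify, for each extension $\genericExt \in \extSet$ separately, that the corresponding reduction functor admits the required lift, and then to invoke \cref{prop:vocab-maps-inv-composition} to cover compositions. Fix an arbitrary second vocabulary-map $\vocabMap'$ and an arbitrary $(\vocabMap' \: \vocabV)$-pointed-interpretation $(\interI, d)$. The proposed lift $J_{\genericExt, \vocabMap'}$ is obtained by applying the very same enrichment recipe from \cref{section:game-reductions}, but starting from $(\interI, d)$ rather than from a $\vocabV$-pointed-interpretation. The observation that makes this work is that each of the four recipes refers only to the elements of the interpretation and to the symbols already present in its vocabulary (used as indices for fresh symbols), and never to any names intrinsic to $\vocabV$.

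For the three straightforward cases $\genericExt \in \{\Self, \extI, \extb\}$, the recipe simply adjoins fresh concept or role names indexed by the symbols of the underlying vocabulary, together with a schematic interpretation; for instance, $\conceptC_{\Self.\roler}$ is interpreted as $(\exists \roler.\Self)^{\interI}$ for each $\roler$ in the current role vocabulary, and analogously for $\roler_{\textit{inv}}$ and $\roler_S$. Invariance is immediate from this schematic description: the recipe depends only on the shape of the input vocabulary, and the output lands in $\pointedCatOf{\vocabMap_\genericExt (\vocabMap' \: \vocabV)}$ as required. The action on morphisms is the identity on the underlying carrier map, so functoriality of the lift transports from functoriality of the original reduction functor, provided that we stay in the appropriate $\Phi$-subcategory, which by assumption we do.

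The only case that warrants extra care is $\genericExt = \extO$, whose recipe involves four structural steps (pruning unreachable elements, inserting trampolines, cutting into components, gluing via dummy paths). Here I would walk through each step and point out that every operation is defined purely in terms of the Gaifman graph of the input interpretation, the induced distances $\textrm{dist}_{\indvo}$ in that graph, and the freshly introduced symbols $\conceptC_{\indvo, \roler}$ and $\roler_\indvo$ indexed by the existing individual and role names; none of them depends on the concrete names in $\vocabV$. Hence the same recipe applies verbatim after any remapping $\vocabMap'$, and yields the required lift into $\pointedCatOf{\vocabMap_\extO (\vocabMap' \: \vocabV)}$.

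The main obstacle one might anticipate is the interaction between the lift and the action on morphisms, but because the action on morphisms of each reduction functor is essentially the identity on the underlying map (which is precisely why we restricted to $\Phi$-subcategories in the first place), this is discharged for free. Combining the four cases with \cref{prop:vocab-maps-inv-composition} then yields invariance over vocabulary-maps for any composition of reduction functors, which is exactly what is needed in the sequel when constructing relative comonads over them.
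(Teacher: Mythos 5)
Your proposal is correct and matches the paper's (implicit) reasoning: the paper states this as an unproved observation, justified only by the remark that reduction functors ``do not depend on the contents of the concepts or roles,'' which is precisely the schematic-in-the-vocabulary argument you spell out case by case. Your write-up simply supplies the details the paper omits, including the extra care for $\extO$ and the morphism action in the $\Phi$-subcategory, so there is nothing to flag.
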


To obtain richer semantics, we shall leverage the functor composition, following
the same order as defined for the game reductions in~\cref{section:game-reductions}:


\begin{center}
    \begin{tikzcd}[row sep=huge, column sep=huge]
        \pointedCatOf{\vocabV} \arrow[r, "\Jself"] & \pointedCatOf{\vocabV^{\extSelf}} \arrow[r, "\JI"]                                    & \pointedCatOf{\vocabV^{\extSelf\extI}} \arrow[d, "\Jb"] \\
                         & \pointedCatOf{\vocabV^{\extSelf\extI\extb\extO}} \arrow["\comonadFunctor"', loop, distance=5em, in=215, out=145] & \pointedCatOf{\vocabV^{\extSelf\extI\extb}} \arrow[l, "\JO"]
    \end{tikzcd}
\end{center}

\begin{lemma}\label{lemma:reduction-functors-composition}
    Reduction-functors are closed under composition.
\end{lemma}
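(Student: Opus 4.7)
The plan is to show that when $J_{\theta_1}$ is a $(\reduction{f}_{\theta_1}, \vocabMap_{\theta_1})$-reduction-functor and $J_{\theta_2}$ is a $(\reduction{f}_{\theta_2}, \vocabMap_{\theta_2})$-reduction-functor, their composite can be organised into a $(\reduction{f}_{\theta_2} \comp \reduction{f}_{\theta_1}, \vocabMap_{\theta_2} \comp \vocabMap_{\theta_1})$-reduction-functor. The immediate wrinkle is typing: $J_{\theta_1}$ lands in $\pointedCatOf{\vocabMap_{\theta_1} \: \vocabV}$, whereas $J_{\theta_2}$ is defined on $\pointedCatOf{\vocabV}$, so on the nose the two cannot be composed. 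This is exactly where \cref{obs:reduction-functors-vocab-maps-invariant} comes in: invariance over vocabulary-maps allows me to lift $J_{\theta_2}$ to a functor $(J_{\theta_2})_{\vocabMap_{\theta_1}} : \pointedCatOf{\vocabMap_{\theta_1} \: \vocabV} \rightarrow \pointedCatOf{(\vocabMap_{\theta_2} \comp \vocabMap_{\theta_1}) \: \vocabV}$, after which the composition is well-typed.

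Next I would verify that the composite satisfies the defining conditions of a reduction-functor. On objects, it sends $\structPair{I}{d}$ first to $\structPair{\reductionI{f}_{\theta_1} I}{\reductionR{f}_{\theta_1} d}$ and then to $\structPair{\reductionI{f}_{\theta_2} \reductionI{f}_{\theta_1} I}{\reductionR{f}_{\theta_2} \reductionR{f}_{\theta_1} d}$, which is, by definition, the action associated to the composed game reduction $\reduction{f}_{\theta_2} \comp \reduction{f}_{\theta_1}$ paired with the composed vocabulary-map $\vocabMap_{\theta_2} \comp \vocabMap_{\theta_1}$. On morphisms, each reduction-functor acts as the identity on the underlying function once we pass to a sufficiently rich $\Phi$-subcategory, so the composition of two identities is again an identity; well-definedness within the target subcategory is underwritten by \cref{theorem:sublogics-game-reduction}. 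To ensure that this construction iterates cleanly along the chain $\Jself, \JI, \Jb, \JO$, I would invoke \cref{prop:vocab-maps-inv-composition}, which guarantees that the composite itself remains invariant over vocabulary-maps, so that further reduction-functors can in turn be lifted and composed on top of it.

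The main obstacle I expect is the subcategory bookkeeping for morphisms: a plain $\ALC$-homomorphism need not remain a homomorphism after the $\transSelf$ reduction adds the $\conceptC_{\Self.\roler}$ concepts, because self-loops are not in general preserved by $\ALC$-homomorphisms. Restricting to a $\Phi$-subcategory rich enough to cover every feature touched by the reductions being composed is what makes the identity-on-morphisms action coherent across the composition; once that bookkeeping is in place, the remainder reduces to observing that vocabulary-maps and game reductions compose componentwise, exactly in the order fixed in \cref{section:game-reductions}.
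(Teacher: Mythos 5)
Your proposal is correct and follows essentially the same route as the paper's proof: lift the second reduction-functor via invariance over vocabulary-maps (\cref{obs:reduction-functors-vocab-maps-invariant}), observe that the object action is the componentwise composition of the game reductions and vocabulary-maps, and close with \cref{prop:vocab-maps-inv-composition} to keep the composite liftable. Your additional remarks on the morphism action being the identity within a suitable $\Phi$-subcategory make explicit a point the paper leaves implicit, but they do not change the argument.
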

\begin{proof}
    Let
    $ J : \pointedCat \longrightarrow \pointedCatOf{\vocabMap \: \vocabV}$ and
    $ G : \pointedCat \longrightarrow \pointedCatOf{\vocabMap' \: \vocabV}$ be
    reduction-functors. We want to show that $G \circ J$ is also a reduction-functor.
    Using~\cref{obs:reduction-functors-vocab-maps-invariant}, we can lift $G$ to
    $ G : \pointedCatOf{\vocabMap \: \vocabV} \longrightarrow \pointedCatOf{\vocabMap' \: (\vocabMap \: \vocabV)} $.
    Let
    $\reduction{f}$, $\reduction{g}$ be the game reductions for $J$, $G$,
    respectively. Then the action on objects for $G \circ J$ is defined as follows:

    \begin{align*}
        &G \comp J : \pointedCat \longrightarrow \pointedCatOf{\vocabMap' \: (\vocabMap \: \vocabV)} \\
        (&G \comp J) \; \structPair{I}{d} \longmapsto
     \structPair{(\reductionI{g} \circ \reductionI{f}) \: I}{\, (\reductionR{g} \circ \reductionR{f}) \: d}.
    \end{align*}

    \begin{center}
    \begin{tikzcd}[row sep=huge, column sep=huge]
        {\pointedCat} \arrow[d, "J"'] \arrow[rd, "G \; \comp \; J"] &    \\
        {\pointedCatOf{\vocabMap \: \vocabV}} \arrow[r, "G_\vocabMap"']            & {\pointedCatOf{\vocabMap' \: (\vocabMap \: \vocabV)}}
    \end{tikzcd}
    \end{center}

    From~\cref{prop:vocab-maps-inv-composition}, we get that the obtained composition
    is still invariant over vocabulary maps.
\end{proof}

\subsection{Comonadic semantics for extensions}

Having defined appropriate notions and tools, we now present the way to obtain
game semantics for an arbitrary sublogic $\ALC \subseteq \DL{L}\Phi \subseteq \ALCOIbSelf$
by the use of relative comonads.

Let $J_\Phi \triangleq \bigcomp_{\theta \in \Phi} J_\theta$ be a family of
functors indexed by $\Phi$ where $J_\genericExt$ are
$(\reduction{f}_\genericExt, \vocabMap_\genericExt)$-reduction-functors and the
operator $\bigcomp$ iterates over the extensions and composes the functors
together in $\extTuple$ order. It follows
from~\cref{lemma:reduction-functors-composition} that for a fixed $\Phi$, the
functor $J_\Phi : \pointedCatOf{\vocabV} \longrightarrow \pointedCatOf{\vocabV^\Phi}$ is also a reduction-functor.

\begin{proposition}[$\ALC\Phi$-comonad]
    The game comonad $\comonadFunctorOver{\Phi}$ is a $(\comonadFunctor \circ J_\Phi)$-relative-comonad.
\end{proposition}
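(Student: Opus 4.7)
The plan is to assemble the proposition directly from the preceding building blocks, treating it as a packaging result rather than a new computation. First I would invoke~\cref{lemma:reduction-functors-composition} to conclude that $J_\Phi \triangleq \bigcomp_{\theta \in \Phi} J_\theta$ is itself a reduction-functor with domain $\pointedCatOf{\vocabV}$ and codomain $\pointedCatOf{\vocabV^\Phi}$; the accompanying invariance-over-vocabulary-maps bookkeeping is handled by~\cref{prop:vocab-maps-inv-composition}, so in particular $J_\Phi$ enjoys the same invariance property as each $J_\theta$.

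Next, by~\cref{obs:comonad-vocab-maps-invariant} the $\ALC$-comonad $\comonadFunctor$ is invariant over vocabulary-maps, so its entire structure, namely the functorial action, the counit $\counit$, and the Kleisli coextension $(\cdot)^{*}$, lifts to a genuine comonad on the target category $\pointedCatOf{\vocabV^\Phi}$. \cref{prop:DL-is-a-comonad} then certifies the three comonad laws for this lifted $\comonadFunctor$, so the premises of the relative-comonad definition are met: a functor $J_\Phi : \pointedCatOf{\vocabV} \to \pointedCatOf{\vocabV^\Phi}$ together with a comonad $\comonadFunctor$ on the codomain.

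With both pieces in hand the work reduces to unpacking the definition of relative comonad. For objects $A, B$ of $\pointedCatOf{\vocabV}$ I would declare coKleisli morphisms $A \to B$ of $\comonadFunctorOver{\Phi}$ to be the $\pointedCatOf{\vocabV^\Phi}$-arrows $\comonadFunctor J_\Phi A \to J_\Phi B$, take the counit at $A$ to be $\counitOf{J_\Phi A}$, and for $f : \comonadFunctor J_\Phi A \to J_\Phi B$ set the Kleisli coextension to be the $\comonadFunctor$-coextension of $f$, namely $f^{*} : \comonadFunctor J_\Phi A \to \comonadFunctor J_\Phi B$. The three equations required of a relative comonad then instantiate, in each case, to an equation from~\cref{prop:DL-is-a-comonad} evaluated at $J_\Phi A$ or $J_\Phi B$, so no fresh verification is needed.

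The one genuine subtlety, and the point I expect to be worth flagging in the write-up, is the bookkeeping around morphisms: since $\ALC$-homomorphisms need not lift to $\ALC\Phi$-homomorphisms of the reduced structures, the action of $J_\Phi$ on arrows only makes sense once we tacitly restrict to the appropriate $\Phi$-subcategory, as convened at the beginning of~\cref{section:extensions}. With that convention in force every arrow produced by the Kleisli coextension lives in the right ambient category and composition stays inside it, so the relative-comonad structure closes up cleanly and inherits all its laws from~\cref{prop:DL-is-a-comonad} without circularity.
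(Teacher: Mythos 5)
Your proposal is correct and follows essentially the same route as the paper: establish that $J_\Phi$ is a functor into $\pointedCatOf{\vocabV^\Phi}$ via the composition lemma, lift $\comonadFunctor$ to a comonad on that codomain using invariance over vocabulary-maps, and then conclude by the definition of relative comonad. The extra detail you supply — explicitly unpacking the coKleisli structure and flagging the $\Phi$-subcategory convention for morphisms — is a welcome elaboration of points the paper leaves implicit, but it does not change the argument.
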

\begin{proof}
    We know that $J_\Phi : \pointedCatOf{\vocabV} \longrightarrow \pointedCatOf{\vocabV^\Phi}$
    is a functor. From~\cref{prop:DL-is-a-comonad}, we know that
    $\comonadFunctorOver{\Phi} : \pointedCatOf{\vocabV} \longrightarrow \pointedCatOf{\vocabV}$
    is a comonad on $\pointedCatOf{\vocabV}$. Applying~\cref{obs:comonad-vocab-maps-invariant},
    we get ${\comonadFunctorOver{\Phi}}_{(-)^\Phi} : \pointedCatOf{\vocabV^\Phi} \longrightarrow \pointedCatOf{\vocabV^\Phi}$
    which is a comonad on the codomain of $J_\Phi$. Hence, by definition,
    $\comonadFunctorOver{\Phi}$ is a relative comonad.
\end{proof}

With that, we arrive at the concluding lemma which shall guide us to the final result.

\begin{lemma}\label{lemma:phi-bnf-game-is-phi-bisim}
    Let $k \in \Nomega$ and let $\Phi \subseteq \extSet$.
    Given pointed interpretations $\structPairA$ and $\structPairB$, the
    $\backAndForthGameOf{\Phi}$ game for the $\comonadFunctorOver{\Phi}$
    relative comonad is equivalent to the $k$-round $\ALC\Phi(\vocabV)$-bisimulation
    game played on $\structPairA$ and $\structPairB$.
\end{lemma}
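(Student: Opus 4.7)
The plan is to chain together the two main results we have already established: the categorical characterization of $\ALC$-bisimulation by the back-and-forth game of $\comonadFunctor$ (\cref{bnf-game-is-bisimiulation-theorem}), and the model-theoretic reduction from $\ALC\Phi$-games to $\ALC$-games on the transformed interpretations (\cref{theorem:sublogics-game-reduction}). The relative comonad $\comonadFunctorOver{\Phi}$ was defined precisely so that these two ingredients can be composed.

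First I would unfold the definition of $\comonadFunctorOver{\Phi}$ as a $(\comonadFunctor \circ J_\Phi)$-relative comonad. By construction, on objects we have $\comonadFunctorOver{\Phi}\structPairA = \comonadFunctor(J_\Phi \structPairA) = \comonadFunctor(\trans{\Phi}\interI, \elD)$, and similarly for $\structPairB$. Therefore the positions of the back-and-forth game $\backAndForthGameOf{\Phi}$ on $\structPairA, \structPairB$ are, by the definition given for relative comonads, exactly the positions of the ordinary back-and-forth game for $\comonadFunctor$ played on the transformed pointed interpretations $(\trans{\Phi}\interI, \elD)$ and $(\trans{\Phi}\interJ, \elE)$; the move relation $\prec$ and the winning-set $W$ are inherited unchanged from $\comonadFunctor$.

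Next I would apply \cref{bnf-game-is-bisimiulation-theorem} to conclude that Duplicator has a winning strategy in $\backAndForthGameOf{\Phi}$ on $\structPairA, \structPairB$ iff she has a winning strategy in the $k$-round $\ALC(\vocabV^{\Phi})$-bisimulation game on $(\trans{\Phi}\interI, \elD)$ and $(\trans{\Phi}\interJ, \elE)$. Finally, by \cref{theorem:sublogics-game-reduction}, the latter is equivalent to Duplicator having a winning strategy in the $k$-round $\ALC\Phi(\vocabV)$-bisimulation game on $\structPairA$ and $\structPairB$, which is what we want.

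The only delicate point, and the step I expect to be the real obstacle, is verifying that all of the game-theoretic data carry over through the passage to the relative comonad in a compatible way: in particular, that the winning positions $\winningPos$ in the relative setting, defined via spans of open pathwise embeddings in $\pointedOrdCatK$ over $\vocabV^\Phi$, coincide with those obtained from $\comonadFunctor$ acting on $J_\Phi \structPairA$ and $J_\Phi \structPairB$. This boils down to the fact that the free-forgetful adjunction giving rise to $\comonadFunctor$ is invariant over vocabulary maps (\cref{obs:comonad-vocab-maps-invariant}) and that reduction-functors preserve the subcategorical structure we work in, so the categorical apparatus transports cleanly; once this is checked, the remaining bookkeeping is immediate from the definitions.
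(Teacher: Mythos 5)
Your proposal is correct and follows essentially the same route as the paper: both arguments unfold $\comonadFunctorOver{\Phi}$ to identify its back-and-forth game with the plain $\comonadFunctor$-game on the $\trans{\Phi}$-transformed interpretations, then chain \cref{bnf-game-is-bisimiulation-theorem} with \cref{theorem:sublogics-game-reduction}. The ``delicate point'' you flag about the winning set transporting through the relative comonad is exactly the step the paper also handles only by appeal to the same inductive reasoning as in \cref{bnf-game-is-bisimiulation-theorem}, so you are not missing anything the paper supplies.
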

\begin{proof}
    By~\cref{theorem:sublogics-game-reduction}, it suffices to show that $\backAndForthGameOf{\Phi}$ is
    equivalent to $\ALC(\vocabV^\Phi)$-bisimulation game between
    $(\reductionI{f}_\Phi \: \interI, \reductionR{f}_\Phi \: d)$ and
    $(\reductionI{f}_\Phi \: \interJ, \reductionR{f}_\Phi \: e)$.
    Recall that the positions in the $\backAndForthGameOf{\Phi}$ are pairs
    $(s, t) \in \comonadStructProdOver{\Phi}$.
    By unfolding the definition of $\comonadFunctorOver{\Phi}$, we get that
    it corresponds to a product of unravelings
    $(\reduction{f}_\Phi \: \interI, d) \times (\reduction{f}_\Phi \: \interJ, e)$.
    Hence, $s$ and $t$ are sequences of the form $\sequence$, where
    $\alpha_i \in \sigma^\Phi_r$ and $a_i \in \DeltaI \lor a_i \in \DeltaJ$ for
    $1 \leq i \leq j$.
    An attentive reader can already notice that it is the same
    as positions in the $\ALC(\vocabV)$-bisimulation game by definition in~\cref{section:bisim-games}.
    What remains to be shown is that the winning conditions coincide.
    Note that after applying~\cref{theorem:sublogics-game-reduction} we
    are playing the $\ALC$-bisimulation game, and thus the same inductive reasoning
    applies as in~\cref{bnf-game-is-bisimiulation-theorem} which concludes the proof.

\end{proof}

For the readers that are still alive and managed to get to this point,
we have finally arrived at the heart of our result. This is summarised by the
following theorem, which is an immediate corollary
from~\cref{fact:games-bisimulations-and-concept-eq},~\cref{lemma:phi-bnf-game-is-phi-bisim}
and~\cref{bnf-game-theorem}. 

\begin{theorem}
        For any $k \in \mathbb{N} \cup \{ \omega \}$ and a logic $\DL{L}\Phi$ between $\ALC$ and $\ALCOIbSelf$,   t.f.a.e.:
    \begin{itemize}\itemsep0em
        \item Duplicator has the winning strategy in the $k$-round $\DL{L}\Phi(\vocabV)$-bisimulation-game
             on~$(\interI, \elD ; \interJ, \elE)$,
        \item There is an $\DL{L}\Phi(\vocabV)$-$k$-bisimulation
            $\mathcal{Z}$ between $\interI$ and $\interJ$ such that $\mathcal{Z}(\elD, \elE)$,
        \item $(\interI, \elD) \equiv_k^{\DL{L}\Phi(\vocabV)} (\interJ, \elE)$,
        \item $\structPairA \leftrightarrow_k^{\comonadFunctorBare^\Phi} \structPairB$.
    \end{itemize}
\end{theorem}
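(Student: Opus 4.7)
The plan is to assemble the four-way equivalence purely by chaining three results that have already been established in the paper, so no new argument is required. Concretely, the mutual equivalence of the first three items---existence of a winning strategy for Duplicator in the $k$-round $\DL{L}\Phi(\vocabV)$-bisimulation game, existence of an $\DL{L}\Phi(\vocabV)$-$k$-bisimulation $\mathcal{Z}$ with $\mathcal{Z}(\elD,\elE)$, and rank-$k$ concept-equivalence $(\interI,\elD) \equiv_k^{\DL{L}\Phi(\vocabV)} (\interJ,\elE)$---is already granted by \cref{fact:games-bisimulations-and-concept-eq}, which holds uniformly on the whole interval $\ALC \subseteq \DL{L}\Phi \subseteq \ALCOIbSelf$. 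Thus only the link to item (4) needs to be produced.

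For that link I would first appeal to \cref{lemma:phi-bnf-game-is-phi-bisim}, which identifies Duplicator's winning strategies in the categorical back-and-forth game $\backAndForthGameOf{\Phi}$ played over the relative comonad $\comonadFunctorPhi$ with her winning strategies in the concrete $k$-round $\ALC\Phi(\vocabV)$-bisimulation game; this hooks item (1) onto the comonadic game. I would then invoke \cref{bnf-game-theorem} to convert the statement ``Duplicator wins $\backAndForthGameOf{\Phi}$'' into the span-of-open-pathwise-embeddings condition $\backAndForthPhi$, i.e.\ exactly item (4). Composing the two equivalences gives (1)$\Leftrightarrow$(4), and together with \cref{fact:games-bisimulations-and-concept-eq} this yields the desired four-way equivalence.

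Since the proof is a clean corollary of three off-the-shelf results, there is no genuine obstacle at this step: all the heavy lifting---the model-theoretic reductions of \cref{section:game-reductions}, the Eilenberg--Moore identification underlying \cref{bnf-game-theorem}, and the assembly of the reduction functors into a relative comonad in \cref{section:extensions}---has already been carried out. The only care needed is to thread $\Phi$ and $\vocabV$ consistently through the three citations, and to note that the composition order of the reductions used implicitly by $\comonadFunctorPhi$ (first $\transSelf$, then $\transI$, then $\transb$, then $\transO$) matches the one fixed by \cref{theorem:sublogics-game-reduction} and by the definition of $\backAndForthGameOf{\Phi}$; this matching is automatic by construction.
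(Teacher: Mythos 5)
Your proposal is correct and follows exactly the paper's own route: the theorem is stated there as an immediate corollary of \cref{fact:games-bisimulations-and-concept-eq}, \cref{lemma:phi-bnf-game-is-phi-bisim} and \cref{bnf-game-theorem}, which is precisely the chain you assemble. No differences worth noting.
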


\section{Conclusions}\label{section:conclusions}

This paper provides yet another view on bisimulation games used in the
description logic setting, via the lenses of comonadic semantics, as well as
another nail for the comonads hammer developed in recent years.

We have tweaked modal comonad~\cite{relating-struct-and-pow} to match description
logic's setting of interpretations, and devised a composable and extensible
way of tackling logic extensions via
reduction functors and
relative monads~\cite{monads-need-not-be-endofunctors}.
We now shall discuss the potential directions of what can be done next.

\subsection{Incorporating other known DL extensions}

There wo more $\ALC$ extensions that caught our attention, namely,
counting capabilities and universal role. Following the way graded modalities
were handled in~\cite{relating-struct-and-pow}, we believe that $\ALCQ$,
an extension with counting capabilities, can be encoded by taking
isomorphism in the Kleisli category of $\comonadFunctorPhi$ comonad
in place of $\backAndForthEq$ back-and-forth relation. Concerning universal
role, it appears to be expressible by defining a reduction $\trans{U}$
that adds a fresh role $\roler_{U}$ that forms a clique. However, neither
of the ideas has been carefully verified and thus that is yet to be explored.

\subsection{Combinatorial properties}

Another research direction is to investigate combinatorial properties naturally
arising from the coalgebras of the resulting comonad, such as tree width for the
pebbling comonad~\cite{pebbling-comonad} or tree depth for the modal
comonad~\cite{relating-struct-and-pow}. A topic closely related that generalizes
over parameters is the examination of $\comonadFunctorPhi$ functor's Kan extension
that should yield discrete density comonad~\cite{discrete-density}.

\subsection{Transcribing known theorems to category theory}

This lies at the core of the meaning and purpose of defining comonadic semantics
for model comparison games. S. Abramsky et. al has given a generalization of
the framework by Arboreal categories and covers~\cite{arboreal-categories},
and we have observed a variety of results arising from a categorical framework
such as new Lov{\'{a}}sz-Type Theorems~\cite{lovasz-and-comonads} or
axiomatic account of Feferman-Vaught-Mostowski theorems~\cite{courcelle-mostowski-comonads}.
A systematic overview of the current state of the art in applying tools from category
theory in finite model theory and descriptive complexity is given in~\cite{emerging-landscape}.
Hence, the most natural direction for the next research project would be to explore
how the description logic comonad could help to generalize or simplify known theorems.

\bibliography{references}

\end{document}